\newcommand\restr[2]{{
  \left.\kern-\nulldelimiterspace 
  #1 
  \vphantom{\big|} 
  \right|_{#2} 
  }}
\def\moverlay{\mathpalette\mov@rlay}
\def\mov@rlay#1#2{\leavevmode\vtop{%
   \baselineskip\z@skip \lineskiplimit-\maxdimen
   \ialign{\hfil$\m@th#1##$\hfil\cr#2\crcr}}}
\newcommand{\charfusion}[3][\mathord]{
    #1{\ifx#1\mathop\vphantom{#2}\fi
        \mathpalette\mov@rlay{#2\cr#3}
      }
    \ifx#1\mathop\expandafter\displaylimits\fi}
\newcommand{\size}[1]{\mathrm{size}}
\newcommand{\set}[2][ ]{\{#2 \ifthenelse{\equal{#1}{ }}{ }{~|~#1}\}}
\newcommand{\comment}[1]{}
\newcommand{\seepage}[2][See]{
    \marginnote{
        \scriptsize {#1} p.~\pageref{#2}
    }
}
\newcommand{\reuse}[1]{
	\expandafter\stepcounter{#1_help}
    \expandafter\label{#1_app}
    \csname#1\endcsname*
}
\newcommand{\trim}{\textsc{Trim}}
\newcommand{\shave}{\textsc{Shave}}
\newcommand{\vol}{\mathrm{vol}}
\algnewcommand\algorithmicwhen{\textbf{when}}
\newcommand{\ShowComment}{false}
\newcommand{\danupon}[1]{{\color{green} DANUPON: #1}}
\newcommand{\knote}[1]{{\color{red}[{Karthik: \bf #1}]\marginpar{\color{red}*}}}
\newcommand{\sagnik}[1]{{\bf \color{blue} Sagnik: #1}}
\newcommand{\cnote}[1]{{\bf \color{purple} [{Calvin: \bf #1}] \marginpar{\color{purple}*}}}
\newcommand{\todos}[1]{{\color{red}[{TODO: \bf #1}]\marginpar{\color{red}*}}}
\newcommand{\danupon}[1]{}
\newcommand{\knote}[1]{}
\newcommand{\sagnik}[1]{}
\newcommand{\cnote}[1]{}
\newcommand{\todos}[1]{}
\title{Faster connectivity in low-rank hypergraphs via expander decomposition}
\author{Calvin Beideman\thanks{University of Illinois, Urbana-Champaign, \texttt{\{calvinb2, karthe\}@illinois.edu}.} \and
Karthekeyan Chandrasekaran\footnotemark[1] \and 
Sagnik Mukhopadhyay\thanks{University of Copenhagen, Denmark, \texttt{sagnik@di.ku.dk}.} \and 
Danupon Nanongkai \thanks{University of Copenhagen and KTH Royal Institute of Technology, \texttt{danupon@gmail.com}.}
}
\date{}
\begin{document}

\begin{titlepage}
	\maketitle
	\pagenumbering{roman}

\begin{abstract}
The connectivity of a hypergraph is the minimum number of hyperedges whose deletion disconnects the hypergraph. 
We design an $\hat O_r(p + \min\{\lambda^{\frac{r-3}{r-1}} n^2, n^r/\lambda^{\frac{r}{r-1}}, \lambda^{\frac{5r-7}{4r-4}}n^{\frac{7}{4}}\})$\footnote{The $\hat O_r(\cdot)$ notation hides terms that are subpolynomial in the main parameter and terms that depend only on $r$.} time algorithm for computing hypergraph connectivity, 
where $p$ is the size, $n$ is the number of vertices, $r$ is the rank (size of the largest hyperedge), and $\lambda$ is the connectivity of the input hypergraph. Our algorithm also finds a minimum cut in the hypergraph. Our algorithm is faster than existing algorithms if $r = O(1)$ and $\lambda = n^{\Omega(1)}$. 
The heart of our algorithm is a structural result showing a trade-off between the number of hyperedges taking part in all minimum cuts and the size of the smaller side of any  minimum cut. 
This structural result can be viewed as a generalization of a well-known structural theorem for simple graphs [Kawarabayashi-Thorup, JACM 19 (Fulkerson Prize 2021)]. 
We extend the framework of expander decomposition to hypergraphs to prove this structural result. 
In addition to the expander decomposition framework, our faster algorithm also relies on a new near-linear time procedure to compute connectivity when one of the sides in a minimum cut is small. 
\end{abstract}

	\newpage
	\setcounter{tocdepth}{2}
	\newpage
\end{titlepage}

\newpage
\pagenumbering{arabic}

\section{Introduction}

A hypergraph $G=(V,E)$ is specified by a vertex set $V$ and a collection $E$ of hyperedges, where each hyperedge $e\in E$ is a subset of vertices. 
In this work, we address the problem of computing connectivity/global min-cut in hypergraphs with low rank  (e.g., constant rank). 
The \emph{rank} of a hypergraph, denoted $r$, is the size of the largest hyperedge---in particular, if the rank of a hypergraph is $2$, then the hypergraph is a graph. 
In the global min-cut problem, the input is a hypergraph with hyperedge weights $w:E\rightarrow \bbR_+$, and the goal is to find a minimum weight subset of hyperedges whose removal disconnects the hypergraph.  Equivalently, the goal is to find a partition of the vertex set $V$ into two non-empty parts $(C, V\setminus C)$ so as to minimize the weight of the set of hyperedges intersecting both parts. For a subset $C\subseteq V$, we will denote the weight of the set of hyperedges intersecting both $C$ and $V\setminus C$ by $d(C)$, the resulting function $d:V\rightarrow \bbR_+$ as the cut function of the hypergraph, and the weight of a min-cut by $\lambda(G)$ (we will use $\lambda$ when the graph $G$ is clear from context). 

If the input hypergraph is {\em simple}---i.e., each hyperedge has unit weight and no parallel copies---then 
the weight of a min-cut is also known as the {\em connectivity} of the hypergraph. 
We focus on finding connectivity in hypergraphs. 
We emphasize that, in contrast to graphs whose representation size is the number of edges, the representation size of a hypergraph $G=(V,E)$ is $p:=\sum_{e\in E}|e|$. We note that $p\le rm$, where $r$ is the rank and $m$ is the number of hyperedges in the hypergraph, and moreover, $r\le n$, where $n$ is the number of vertices. We emphasize that the number of hyperedges $m$ in a hypergraph could be exponential in the number of vertices. 

\paragraph{Previous Work.} Since the focus of our work is on simple unweighted hypergraphs, we discuss previous work for computing global min-cut in simple unweighted hypergraphs (i.e., computing connectivity) here (see Section \ref{sec:relevant-work} for a discussion of previous works on computing global min-cut in weighted hypergraphs/graphs). 
The current fastest algorithms to compute graph connectivity (i.e., when $r=2$) are randomized and run in time $\tO(m)$ \cite{Kar00, KawarabayashiT19, HenzingerRW17, GhaffariNT20, MukhopadhyayN20, GMW19}. In contrast, algorithms to compute hypergraph connectivity are much slower. Furthermore, for hypergraph connectivity/global min-cut, the known randomized approaches are not always faster than the known deterministic approaches. There are two broad algorithmic approaches for global min-cut in hypergraphs: vertex-ordering and random contraction. We discuss these approaches now. 

Nagamochi and Ibaraki \cite{NagamochiI92} introduced a groundbreaking vertex-ordering approach to solve global min-cut in graphs in time $O(mn)$. In independent works, Klimmek and Wagner \cite{KW96} as well as Mak and Wong \cite{MW00} gave two different generalizations of the vertex-ordering approach to compute hypergraph connectivity in $O(pn)$ time. Queyranne \cite{queyranne98} generalized the vertex-ordering approach further to solve
\emph{non-trivial symmetric submodular minimization}.\footnote{The input here is a symmetric submodular function $f:2^V\rightarrow \bbR$ via an evaluation oracle and the goal is to find a partition of $V$ into two non-empty parts $(C, V\setminus C)$ to minimize $f(C)$. We recall that a function $f:2^V\rightarrow \bbR$ is symmetric if $f(A) = f(V\setminus A)$ for all $A\subseteq V$ and is submodular if $f(A) + f(B) \ge f(A\cap B) + f(A\cup B)$ for all $A, B\subseteq V$. The cut function of a hypergraph $d:V\rightarrow \bbR_+$ is symmetric and submodular.}
Queyranne's algorithm can be implemented to compute hypergraph connectivity in $O(pn)$ time. Thus, all three vertex-ordering based approaches to compute hypergraph connectivity have a run-time of $O(pn)$. This run-time was improved to $O(p+\lambda n^2)$ by Chekuri and Xu \cite{ChekuriX18}: They designed an $O(p)$-time algorithm to construct a \emph{min-cut-sparsifier}, namely a subhypergraph $G'$ of the given hypergraph with size $p'=O(\lambda n)$ 
such that $\lambda(G')=\lambda(G)$. Applying the vertex-ordering based algorithm to $G'$ gives the connectivity of $G$ within a run-time of $O(p+\lambda n^2)$. 

We emphasize that all algorithms discussed in the preceding paragraph are deterministic. Karger \cite{Kar93} introduced the influential random contraction approach to solve global min-cut in graphs which was adapted by Karger and Stein \cite{KS96} to design an $\tilde{O}(n^2)$ time algorithm\footnote{For functions $f(n)$ and $g(n)$ of $n$, we say that $f(n) = \tilde{O}(g(n))$ if $f(n) = O(g(n)\text{polylog}(n))$ and $f(n) = \hat{O}(g(n))$ if $f(n) = O(g(n)^{1+o(1)})$, where the $o(1)$ is with respect to $n$. We say that $f(n) = O_r(g(n))$ if $f(n) = O(g(n)h(r))$ for some function $h$. We define $\tilde{O}_r(f(n))$ and $\hat{O}_r(f(n))$ analogously.
}. 
Kogan and Krauthgamer \cite{KK15} extended the random contraction approach to solve global min-cut in $r$-rank hypergraphs in time $\tilde{O}_r(mn^2)$. 
Ghaffari, Karger, and Panigrahi \cite{GKP17} suggested a non-uniform distribution for random contraction in hypergraphs and used it to design an algorithm to compute hypergraph connectivity 
in $\tilde{O}((m+\lambda n)n^2)$ time. 
Chandrasekaran, Xu, and Yu \cite{CXY19} refined their non-uniform distribution to obtain an $O(pn^3\log{n})$ time algorithm for global min-cut in hypergraphs. 
Fox, Panigrahi, and Zhang \cite{FPZ19} 
proposed a branching approach to exploit the refined distribution leading to an $O(p+n^r\log^2{n})$ time algorithm for hypergraph global min-cut, where $r$ is the rank of the input hypergraph. Chekuri and Quanrud \cite{CQ21} designed an algorithm based on isolating cuts which achieves a runtime of $\tilde{O}(\sqrt{pn(m+n)^{1.5}})$ for global min-cut in hypergraphs.

Thus, the current fastest known algorithm to compute hypergraph connectivity is a combination of the algorithms of Chekuri and Xu \cite{ChekuriX18}, Fox, Panigrahi, and Zhang \cite{FPZ19}, and Chekuri and Quanrud \cite{CQ21} with a run-time of 
$$\tilde{O}\left(p+\min\left\{\lambda n^2, n^r, \sqrt{pn(m+n)^{1.5}}\right\}\right).$$ 

\subsection{Our results}
In this work, we improve the run-time to compute hypergraph connectivity in low rank simple hypergraphs. 

\begin{restatable}{theorem}{minCutAlgorithm}\label{thm:min-cut-algorithm}[Algorithm]\label{theorem:algo}
Let $G$ be  an $r$-rank $n$-vertex simple hypergraph of size $p$. Then, there exists a randomized algorithm that takes $G$ as input and runs in time 
\[\hat{O}_r \left(p + \min\left\{ \lambda^{\frac{r-3}{r-1}} n^2, \frac{n^{r}}{\lambda^{\frac{r}{r-1}}}, \lambda^{\frac{5r-7}{4r-4}}n^{\frac{7}{4}} \right\} \right)\]
to return the connectivity $\lambda$ of $G$ with high probability. Moreover, the algorithm returns a min-cut in $G$ with high probability. 
\end{restatable}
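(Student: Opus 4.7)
The plan is to combine three ingredients: the Chekuri--Xu min-cut sparsifier as preprocessing, a new hypergraph expander decomposition for handling balanced min-cuts, and a new near-linear-time procedure for finding min-cuts whose smaller side is small. First I would run the Chekuri--Xu sparsifier in time $O(p)$ to reduce the hyperedge-total-size to $O(\lambda n)$ while preserving all min-cuts; this justifies the leading $p$ in the stated runtime, and for the remainder we may assume $p = O(\lambda n)$.

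Next, fix a threshold $s^\star$ on the size of the smaller side of a min-cut and dichotomize. If some min-cut $(C,V\setminus C)$ satisfies $\min\{|C|,|V\setminus C|\}<s^\star$, apply the new small-side procedure to locate it. Otherwise every min-cut is $s^\star$-balanced; in this case compute a hypergraph expander decomposition of the sparsified instance, partitioning $V$ into clusters $V_1,\ldots,V_k$ each inducing an expander in an appropriate hypergraph-conductance sense, and with only a controlled number of hyperedges crossing the partition. The new structural theorem---a rank-$r$ generalization of Kawarabayashi--Thorup---then certifies that no $s^\star$-balanced min-cut splits any single cluster, so contracting each $V_i$ to a single vertex yields a hypergraph $G'$ with the same min-cut value but only $\tilde n \ll n$ vertices, with $\tilde n$ controlled by $s^\star$ and $\lambda$.

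On the contracted instance $G'$ I would then run one of three existing min-cut algorithms: Queyranne's vertex-ordering, giving $\hat O(\lambda\tilde n^{2})$; Fox--Panigrahi--Zhang, giving $\hat O_r(\tilde n^{r})$; and Chekuri--Quanrud, giving roughly $\hat O(\lambda^{5/4}\tilde n^{7/4})$ after using $\tilde p = O(\lambda\tilde n)$. For each of these three post-contraction choices, optimizing the threshold $s^\star$ (and thus $\tilde n$) to balance the small-side cost against the post-contraction cost should yield the three terms $\lambda^{(r-3)/(r-1)}n^{2}$, $n^{r}/\lambda^{r/(r-1)}$, and $\lambda^{(5r-7)/(4r-4)}n^{7/4}$ of the advertised minimum; taking the minimum over the three gives the claimed bound.

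The main obstacle will be establishing the structural theorem and the expander decomposition that realizes it. Unlike in graphs, a single hyperedge can touch many clusters simultaneously, so both the definition of conductance and the trimming/shaving arguments behind Kawarabayashi--Thorup require careful rank-$r$ reformulations; the precise trade-off between the number of hyperedges participating in every min-cut and the balance of any min-cut is exactly what governs the $r$-dependent exponents above. A secondary obstacle is engineering the small-side procedure so that its runtime has the right polynomial dependence on $s^\star$ while remaining near-linear in $p$; I would expect isolating-cut or branching-style ideas, restricted to candidate sides of size $<s^\star$, to be the natural route.
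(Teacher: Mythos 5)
Your architecture matches the paper's: Chekuri--Xu sparsification to $p=O(\lambda n)$, a dichotomy on the size of the smaller side of some min-cut, a near-linear local procedure for the small-side case, and an expander decomposition plus a Kawarabayashi--Thorup-style structural theorem (with trim/shave) to contract clusters in the balanced case before running the Chekuri--Xu, Fox--Panigrahi--Zhang, and Chekuri--Quanrud algorithms and taking the minimum. The two ``obstacles'' you defer are indeed where all the work lies.

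There is, however, one concrete misconception in how you propose to derive the three runtime terms. You treat the threshold $s^\star$ as a free knob that controls $\tilde n$ and plan to ``optimize $s^\star$ to balance the small-side cost against the post-contraction cost.'' That trade-off does not exist. The contracted vertex count $\tilde n = O_r(n/\lambda^{1/(r-1)})$ is governed entirely by the conductance parameter $\phi \approx (r^2/\lambda)^{1/(r-1)}$ of the expander decomposition (which is itself forced by the requirement that each cluster's intersection with a min-cut be small enough for \trim\ to shrink it to $O(r^2)$ and for $O(r^2)$ \shave\ rounds to kill it); $\tilde n$ does not depend on $s^\star$ at all. Meanwhile the small-side procedure costs $\tilde O_r(s^{6r}p)$ (or $n^{s}$ deterministically), so $s^\star$ cannot be pushed beyond $O(r)$. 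What makes the dichotomy close is not an optimization but a structural lemma you have not accounted for: for $\lambda \ge r(4r^2)^r$ there are \emph{no moderate-sized min-cuts} --- every min-cut has smaller side either at most $r-\log(\lambda/4r)/\log n$ or at least $(\lambda/2)^{1/r}\ge 4r^2$. The threshold is therefore \emph{fixed} at $\approx r$, the small-side case is near-linear, and the lower bound $4r^2$ on balanced min-cuts is exactly what the \shave\ analysis needs. Without this gap lemma your plan stalls: with $s^\star$ small you cannot certify that the surviving min-cuts are large enough for \shave\ to make progress, and with $s^\star$ large the small-side procedure is no longer near-linear. A secondary arithmetic note: after contraction the total size $p_2$ remains $O(\lambda n)$, not $O(\lambda\tilde n)$, and the stated exponent $\lambda^{(5r-7)/(4r-4)}n^{7/4}$ for the Chekuri--Quanrud branch comes from $\sqrt{p_2\,\tilde n\,(2p_2)^{1.5}}$ with $p_2=O(\lambda n)$; your assumption $\tilde p=O(\lambda\tilde n)$ is unjustified and would give a different exponent.
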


See Section \ref{sec:main-algorithm} for the run-time of our algorithm in the $O$-notation.
Our techniques can also be used to obtain a deterministic algorithm that runs in time 
\[
\hat{O}_r\left(p+\min\left\{\lambda n^2, \lambda^{\frac{r-3}{r-1}}n^2+\frac{n^r}{\lambda}\right\}\right).  
\]
Our deterministic algorithm is faster than Chekuri and Xu's algorithm when $r$ is a constant and $\lambda = \Omega(n^{(r-2)/2})$, while our randomized algorithm is faster than known algorithms if $r$ is a constant and $\lambda = n^{\Omega(1)}$.
We summarize the previous fastest algorithms and our results in Table \ref{tab:results}. 

\tabulinesep=1.2mm
\begin{table}[ht]
\centering
\begin{tabu}{|c|c|c|}
\hline
 & \textbf{Deterministic} & \textbf{Randomized}\\
 \hline
 Previous run-time & $O(p+\lambda n^2)$ \cite{ChekuriX18} & $\tO(p+\min\{\lambda n^2, n^r, \sqrt{pn(m+n)^{1.5}}\}\})$\\ &&\cite{ChekuriX18, FPZ19, CQ21}\\
 \hline
 Our run-time & $\hat{O}_r\left(p+\min\left\{\lambda n^2,  \lambda^{\frac{r-3}{r-1}}n^2+\frac{n^r}{\lambda}\right\}\right)$ &$\hat{O}_r \left(p + \min\left\{ \lambda^{\frac{r-3}{r-1}} n^2, \frac{n^{r}}{\lambda^{\frac{r}{r-1}}}, \lambda^{\frac{5r-7}{4r-4}}n^{\frac{7}{4}} \right\} \right)$\\
 \hline
\end{tabu}
\caption{Comparison of results to compute hypergraph connectivity (simple unweighted $r$-rank $n$-vertex $m$-hyperedge $p$-size hypergraphs with connectivity $\lambda$).} \label{tab:results}
\end{table}

Our algorithm for Theorem \ref{theorem:algo} proceeds by considering two cases: either (i) the hypergraph has a min-cut where one of the sides is small or 
 (ii) both sides of every min-cut in the hypergraph are large.
To account for case (i), we design a near-linear time algorithm to compute a min-cut; to account for case (ii), we perform contractions to reduce the size of the hypergraph without destroying a min-cut and then run known algorithms on the smaller-sized hypergraph leading to savings in run-time. Our contributions in this work are twofold: (1) On the algorithmic front, we design a near-linear time algorithm to find a min-cut where one of the sides is small (if it exists); (2) On the structural front, we show a trade-off between the number of hyperedges taking part in all minimum cuts and the size of the smaller side of any minimum cut (see Theorem \ref{thm:number-of-hyperedges-in-mincuts}). 
This structural result is a generalization of the acclaimed Kawarabayashi-Thorup graph structural theorem \cite{KT15, KawarabayashiT19} (Fulkerson prize 2021). 
We use the structural result to reduce the size of the hypergraph in case (ii). We elaborate on this structural result now. 



\begin{restatable}{theorem}{numEdgesMincuts}
\label{thm:number-of-hyperedges-in-mincuts}[Structure]
Let $G=(V,E)$ be an $r$-rank $n$-vertex simple hypergraph with $m$ hyperedges and connectivity $\lambda$. Suppose 
$\lambda \ge r(4r^2)^r$. Then, at least one of the following holds:\begin{enumerate}
    \item \label{itm:small-cut} There exists a min-cut $(C, V \setminus C)$ such that \[\min\{|C|, |V \setminus C|\}\le r-\frac{\log{(\frac{\lambda}{4r})}}{\log{n}},\]
    
    \item \label{itm:few-edges} The number of hyperedges in the union of all min-cuts is 
    \[O\left(r^{9r^2+2}\left(\frac{6r^2 }{\lambda}\right)^{\frac{1}{r-1}}m\log n\right)
    = \tilde{O}_r\left(\frac{m}{\lambda^{\frac{1}{r-1}}}\right).\]
\end{enumerate}
\end{restatable}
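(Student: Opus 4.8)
The plan is to adapt the expander-decomposition strategy of Kawarabayashi--Thorup to hypergraphs. The starting observation is that if every min-cut has both sides large (so case~1 fails), then the small side of any min-cut must still contain many vertices, and yet the cut across it is only $\lambda$ hyperedges; this forces the min-cut to be ``sparse'' relative to the volume of its small side. So I would first set up a notion of hypergraph conductance/sparsity: for $S\subseteq V$ with $\vol(S)=\sum_{v\in S}\deg(v)$ at most half the total volume, define the sparsity of $S$ as $d(S)/\vol(S)$. Because the hypergraph is simple and $\lambda$-connected, each vertex has degree $\ge\lambda$, so $\vol(S)\ge\lambda|S|$; combined with $d(S)=\lambda$ and the assumed failure of case~1 (which says $|S|\ge n^{1-o(1)}$ roughly, after rearranging the logarithmic bound), every min-cut is a cut of sparsity roughly $\lambda/(\lambda\cdot n^{1-o(1)}) = n^{-1+o(1)}$, i.e.\ extremely sparse.

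**Next I would** run an expander decomposition on $G$ at a conductance threshold $\phi$ chosen between this tiny sparsity value and the conductance guarantee that the decomposition produces: the decomposition partitions $V$ into clusters $V_1,\dots,V_k$ such that each induced subhypergraph $G[V_i]$ is a $\phi$-expander (every internal cut has sparsity $\ge\phi$) and the total number of hyperedges cut between clusters is $\hat{O}_r(\phi p)$ or so. This requires building the hypergraph version of the cut-matching / trimming machinery referenced in the abstract. The key structural consequence is then: because each min-cut is far sparser than $\phi$, no min-cut can ``cut into'' any single cluster $V_i$ in a balanced way — within each expander piece the min-cut must leave almost all of $V_i$ on one side. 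Making this precise is the analogue of the Kawarabayashi--Thorup argument that the min-cut is ``non-crossing'' with the expander pieces up to a small correction; I would show that for each cluster, only $O(\lambda/\phi)$-ish vertices (measured by degree) can be on the minority side, and more importantly that the hyperedges a min-cut uses that are \emph{internal} to some $V_i$ are few. Summing, all hyperedges in the union of all min-cuts are either inter-cluster hyperedges (of which there are $\hat{O}_r(\phi p)\le\hat{O}_r(\phi rm)$) or internal hyperedges forced to be few by expansion; optimizing $\phi\approx (1/\lambda)^{1/(r-1)}$ up to $r$-dependent and polylog factors yields the claimed bound $\tilde O_r(m/\lambda^{1/(r-1)})$. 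The exponent $1/(r-1)$ should emerge from the fact that a hyperedge of size up to $r$ contributes to volume with multiplicity, so trading ``volume cut per hyperedge cut'' against rank gives a power-$(r-1)$ savings, exactly as the $n^r/\lambda^{r/(r-1)}$ term in the algorithm suggests.

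**The main obstacle** I anticipate is two-fold. First, expander decomposition for hypergraphs is not off-the-shelf: the recursive ``cut, then trim/shave the small side'' routine needs a hypergraph sparsest-cut (or balanced-cut) approximation oracle and a hypergraph trimming lemma guaranteeing that after cleanup each piece is genuinely an expander while only a controlled number of hyperedges are newly removed — hyperedges of size up to $r$ complicate the bookkeeping because one hyperedge can straddle many pieces, so the ``charging'' that bounds the number of cut hyperedges must be done carefully against $p=\sum_e|e|$ rather than against $m$. Second, and more delicate, is controlling the \emph{union of all} min-cuts rather than a single min-cut: I expect to need a submodularity/uncrossing argument showing that if two min-cuts each behave nicely with respect to the expander pieces, then so does a canonical family generating all of them — essentially arguing over the lattice of min-cuts (via the cactus-like structure of hypergraph min-cuts) so that the per-cluster ``minority vertex'' budget is not blown by taking a union over exponentially many min-cuts.

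**I would finish by** assembling the bound: choosing $\phi = c_r\lambda^{-1/(r-1)}$, verifying the hypothesis $\lambda\ge r(4r^2)^r$ is exactly what is needed to make $\phi$ small enough that every min-cut (under failure of case~1, i.e.\ $\min\{|C|,|V\setminus C|\}\ge r-\log(\lambda/4r)/\log n$) is sub-$\phi$-sparse, then adding the $\hat O_r(\phi\cdot rm\log n)$ inter-cluster hyperedges to the $\hat O_r(\phi\cdot rm\log n)$ internal hyperedges chargeable to the expansion deficit, and collecting the $r$-polynomial and $\log n$ factors into the stated $O(r^{9r^2+2}(6r^2/\lambda)^{1/(r-1)}m\log n)$ form.
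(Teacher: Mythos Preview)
Your high-level plan---expander decomposition plus a sparsity argument---matches the paper's, but there are two genuine gaps.

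\textbf{First, you have misread the bound in case~1.} The failure of case~1 says only that every min-cut has $\min\{|C|,|V\setminus C|\} > r - \log(\lambda/4r)/\log n$, and that right-hand side is a number \emph{between $1$ and $r$}, not $n^{1-o(1)}$. So you cannot conclude $\vol(C)\ge \lambda\cdot n^{1-o(1)}$ directly. The paper's first real step is a separate lemma (``no moderate-sized min-cuts''): it shows, by a packing/counting argument specific to \emph{simple} rank-$r$ hypergraphs, that if $\min\{|C|,|V\setminus C|\}$ exceeds $r-\log(\lambda/4r)/\log n$ then it must in fact exceed $(\lambda/2)^{1/r}\ge 4r^2$. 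Without this bootstrapping you never get the ``every min-cut is large'' hypothesis that drives the rest of the argument.

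\textbf{Second, the expansion argument alone does not reduce the minority side to zero.} From expansion you correctly get $\min\{|X_i\cap C|,|X_i\setminus C|\}\le (\lambda/(6r^2))^{1/(r-1)}$ for each cluster $X_i$, but this is still polynomial in $\lambda$, not $0$; you cannot yet say that no min-cut hyperedge is internal to a cluster, and a vague ``internal hyperedges forced to be few by expansion'' does not close the gap. The paper introduces two deterministic clean-up operations on each cluster: a \emph{trim} (repeatedly remove vertices with less than a $1/(2r)$ fraction of their degree inside the cluster) which drops the minority side to at most $3r^2$, and then $3r^2$ rounds of \emph{shave} (remove all vertices with at most a $(1-1/r^2)$ fraction of their degree inside) which---crucially using $\min\{|C|,|V\setminus C|\}\ge 4r^2$---strictly decreases the minority count each round until it hits $0$. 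Both operations are shown to increase the number of cross-cluster hyperedges by only $r^{O(r^2)}$ factors. After this, every min-cut is a union of clusters, so every hyperedge in any min-cut is an inter-cluster hyperedge and the bound follows from the expander-decomposition guarantee with $\phi=(6r^2/\lambda)^{1/(r-1)}$. This is also why no uncrossing or cactus argument is needed: the trimmed/shaved partition is \emph{one fixed partition} that simultaneously aligns with every min-cut, so the union over min-cuts is automatically controlled.
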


The Kawarabayashi-Thorup structural theorem for graphs \cite{KT15, KawarabayashiT19} states that if every min-cut is non-trivial, then the number of edges in the union of all min-cuts is $O(m/\lambda)$, 
where a cut is defined to be \emph{non-trivial} if it has at least two vertices on each side. Substituting $r=2$ in our structural theorem recovers this known Kawarabayashi-Thorup structural theorem for graphs. 
We emphasize that the Kawarabayashi-Thorup structural theorem for graphs is the backbone of the current fastest algorithms for computing connectivity in graphs and has been proved in the literature via several different techniques \cite{KawarabayashiT19, HenzingerRW17, RubinsteinSW18, GhaffariNT20, Sar20}. 
Part of the motivation behind our work was to understand whether the Kawarabayashi-Thorup structural theorem for graphs could hold for constant rank hypergraphs and if not, then what would be an appropriate generalization. 
We discovered that the Kawarabayashi-Thorup graph structural theorem \emph{does not} hold for hypergraphs: There exist hypergraphs in which (i) the min-cut capacity $\lambda$ is $\Omega(n)$, (ii) there are no trivial min-cuts, and (iii) the number of hyperedges in the union of all min-cuts is a constant fraction of the number of hyperedges---see Lemma \ref{lem:hypergraph-nontrivial} in Appendix \ref{appn:hypergraph-nontrivial} for such an example. The existence of such examples suggests that we need an alternative definition of \textit{trivial min-cuts} if we hope to extend the Kawarabayashi-Thorup structural theorem for graphs to $r$-rank hypergraphs. Conclusion \ref{itm:small-cut} of Theorem \ref{thm:number-of-hyperedges-in-mincuts} can be viewed as a way to redefine the notion of \emph{trivial min-cuts}. We denote the \emph{size} of a cut $(C, V\setminus C)$  to be $\min \{|C|, |V\setminus C|\}$---we emphasize that the size of a cut refers to the size of the smaller side of the cut as opposed to the capacity of the cut. A min-cut is \emph{small-sized} if the smaller side of the cut has at most $r - \log (\lambda/4r)/\log n$ many vertices. With this definition, Conclusion \ref{itm:few-edges} of Theorem \ref{thm:number-of-hyperedges-in-mincuts} can be viewed as a generalization of the Kawarabayashi-Thorup structural theorem to hypergraphs which have no small-sized min-cuts: it says that if no min-cut is small-sized, then the number of hyperedges in the union of all min-cuts is $\tilde{O}_r(m/\lambda^{\frac{1}{r-1}})$. 

We mention that the factor $\lambda^{-1/(r-1)}$ in Conclusion \ref{itm:few-edges} of Theorem \ref{thm:number-of-hyperedges-in-mincuts} cannot be improved: There exist hypergraphs in which every min-cut has at least $\sqrt{n}$ vertices on both sides and the number of hyperedges in the union of all min-cuts is $\Theta(m \cdot \lambda^{-1/(r-1)})$---see \Cref{lem:tight-condn-2} in Appendix \ref{appn:tight-condn-2}. We also note that the structural theorem holds only for \emph{simple} hypergraphs/graphs and is known to fail for weighted graphs. As a consequence, our algorithmic techniques are applicable only in simple hypergraphs and not in weighted hypergraphs. 

\subsection{Technical overview} \label{sec:tech-overview}

Concepts used in the proof strategy of Theorem \ref{thm:number-of-hyperedges-in-mincuts} will be used in the algorithm of Theorem \ref{thm:min-cut-algorithm} as well, so it will be helpful to discuss the proof strategy of Theorem \ref{thm:number-of-hyperedges-in-mincuts} before the algorithm. We discuss this now. 
We define 
a cut $(C, V\setminus C)$ to be \emph{moderate-sized} if 
$\min\{|C|, |V\setminus C|\}\in(r-\log(\lambda/4r)/\log n, 4r^2)$
and to be \emph{large-sized} if $\min\{|C|, |V\setminus C|\}\ge 4r^2$; we recall that the cut $(C, V\setminus C)$  is small-sized if $\min\{|C|, |V\setminus C|\}\le r-\log{(\lambda/4r)}/\log{n}$. 

\paragraph{Proof strategy for the structural theorem (Theorem \ref{thm:number-of-hyperedges-in-mincuts}).}
We assume that $\lambda>r(4r^2)^r$ as in the statement of Theorem \ref{thm:number-of-hyperedges-in-mincuts}. 
The first step of our proof is to show that every min-cut in a hypergraph is either large-sized or small-sized but not moderate-sized---in particular, we prove that if $(C, V\setminus C)$ is a min-cut with $\min\{|C|, |V\setminus C|\} < 4r^2$, then it is in fact a small-sized min-cut (Lemma \ref{cor:small-side-not-tiny-implies-small-side-large} with the additional assumption that $\lambda>r(4r^2)^r$). Here is the informal argument: For simplicity, we will show that if $(C, V\setminus C)$ is a min-cut with $\min\{|C|, |V\setminus C|\} < 4r^2$, then $\min\{|C|, |V\setminus C|\} \leq r$. For the sake of contradiction, suppose that $\min\{|C|, |V\setminus C|\} > r$.  The crucial observation is that since the hypergraph has rank $r$, no hyperedge can contain the smaller side of the min-cut entirely. The absence of such hyperedges means that even if we pack hyperedges in $G$ as densely as possible while keeping $(C,V\setminus C)$ as a min-cut, we cannot pack sufficiently large number of hyperedges to ensure that the degree of each vertex is at least $\lambda$. A more careful counting argument extends this proof approach to show that $\min\{|C|, |V\setminus C|\} \leq r -\log \lambda/\log n$.

Now, in order to prove Theorem \ref{thm:number-of-hyperedges-in-mincuts}, it suffices to prove Conclusion \ref{itm:few-edges} under the assumption that all min-cuts are large-sized, i.e., $\min\{|C|, |V\setminus C|\}\ge 4r^2$ for every min-cut $(C, V\setminus C)$. Our strategy to prove Conclusion \ref{itm:few-edges} is to find a partition of the vertex set $V$ such that (i) every hyperedge that is completely contained in one of the parts does not cross any min-cut, and (ii) the number of hyperedges that intersect multiple parts (and therefore, possibly cross some min-cut) is small, i.e., $\tO_r(m \cdot\lambda^{-1/(r-1)})$. To this end, we start by partitioning the vertex set of the hypergraph $G$ into $X_1, \ldots, X_k$ 
such that the total number of hyperedges intersecting more than one part of the partition is $\tO_r(m \cdot\lambda^{-1/(r-1)})$ and the subhypergraph induced by each $X_i$ has conductance $\Omega_r(\lambda^{-1/(r-1)})$ (see Section \ref{sec:prelims} for the definition of conductance)---such
 a decomposition is known as an \emph{expander decomposition}. An expander decomposition immediately satisfies (ii) since the number of hyperedges intersecting more than one part is small. Unfortunately, it may not satisfy (i); yet,  it is very close to satisfying (i)---we can guarantee that 
 for every min-cut $(C, V \setminus C)$ and every $X_i$, either $C$ includes very few vertices from $X_i$, or $C$ includes almost all the vertices of $X_i$
 i.e., $\min\{|X_i \cap C|, |X_i \setminus C|\} = O_r(\lambda^{1/(r-1)})$. 
 We note that if $\min\{|X_i \cap C|, |X_i \setminus C|\} = 0$ for every min-cut $(C, V \setminus C)$ and every part $X_i$ then (i) would be satisfied; moreover, if a part $X_j$ is a singleton vertex part (i.e., $|X_j|=1$), then $\min\{|X_j \cap C|, |X_j \setminus C|\} = 0$ holds. So, our strategy, at this point, is to remove some of the vertices from $X_i$ to form their own singleton vertex parts in the partition in order to achieve $\min\{|X_i \cap C|, |X_i \setminus C|\} = 0$ while controlling the increase in the number of hyperedges that cross the parts. This is achieved by a \trim\ operation and a series of \shave\ operations. 

The crucial parameter underlying \trim\ and \shave\ operations is the notion of degree within a subset:  We will denote the degree of a vertex $v$ as $d(v)$ and define the degree contribution of a vertex $v$ inside a vertex set $X$, denoted by $d_X(v)$, to be the number of hyperedges containing $v$ that are completely contained in $X$. 
The \trim\ operation on a part $X_i$ repeatedly removes from $X_i$ vertices with small degree contribution inside $X_i$, i.e., $d_{X_i}(v) < d(v)/2r$ until no such vertex can be found. Let $X'_i$ denote the set obtained from $X_i$ after the \trim\ operation.
We note that our partition now consists of $X_1', \ldots, X_k'$ 
as well as singleton vertex parts for each vertex that we removed with the $\trim$ operation. 
This operation alone makes a lot of progress towards our goal---we show that $\min\{|X'_i \cap C|, |X'_i \setminus C|\} = O(r^2)$, while the number of hyperedges crossing the partition blows up only by an $O(r)$ factor (see Claims \ref{clm:trim-bound} and \ref{clm:hyperedges-lost-to-trim}). The little progress that is left to our final goal is achieved by a series of ($O(r^2)$ many) \shave\ operations. The \shave\ operation finds the set of 
vertices in each $X'_i$ whose degree contribution inside $X'_i$ is not very large, i.e., $d_{X'_i}(v) \leq (1-r^{-2})d(v)$ and removes this set of vertices from $X'_i$ in one shot---such vertices are again declared as singleton vertex parts in the partition. We show that the \shave\ operation strictly reduces $\min\{|X'_i \cap C|, |X'_i \setminus C|\}$ without adding too many hyperedges across the parts (see Claims \ref{clm:shave-bound} and \ref{clm:hyperedges-lost-to-shave})---this argument crucially uses the assumption that all min-cuts are large-sized (i.e., $\min\{|C|,|V \setminus C|\} \geq 4r^2$). Because of our guarantee from the \trim\ operation regarding $\min\{|X'_i \cap C|, |X'_i \setminus C|\}$, we need to perform the \shave\ operation $O(r^2)$ times to obtain a partition that satisfies conditions (i) and (ii) stated in the preceding paragraph.

\paragraph{Algorithm from structural theorem (Theorem \ref{theorem:algo}).} 
We now describe how to use the structural theorem (Theorem \ref{thm:number-of-hyperedges-in-mincuts}) to design an algorithm for hypergraph connectivity with run-time guarantee as stated in Theorem \ref{theorem:algo}. 
If the min-cut capacity $\lambda$ is small, then existing algorithms are already fast, so we assume that the min-cut capacity $\lambda$ is sufficiently large. 
With this assumption, we consider the two conclusions mentioned in Theorem \ref{thm:number-of-hyperedges-in-mincuts}. 
We design a near-linear time algorithm for finding small-sized min-cuts if the first conclusion holds. If the first conclusion fails, then the second conclusion holds; now, in order to obtain an algorithm for min-cut, we make our proof of Theorem \ref{thm:number-of-hyperedges-in-mincuts} constructive---we obtain a small-sized hypergraph which retains all hyperedges that participate in min-cuts; we run existing min-cut algorithms on this small-sized hypergraph. 
We note here that, in a recent result, Saranurak \cite{Sar20} used a constructive version of a similar theorem to design an algorithm for computing connectivity in simple graphs in almost linear time (with a run-time of $m^{1+o(1)}$). Since our structural theorem is meant for hypergraph connectivity (and is hence, more complicated than what is used by \cite{Sar20}), we have to work more. 

We now briefly describe our algorithm: 
Given an $r$-rank hypergraph $G$, we estimate the connectivity $\lambda$ to within a constant factor in $O(p)$ time using an algorithm of Chekuri and Xu \cite{ChekuriX18}. Next, we run the sparsification algorithm of \cite{ChekuriX18} in time $O(p)$ to obtain a new hypergraph $G'$ with size $p'=O(\lambda n)$ such that all min-cuts are preserved. 
The rest of the steps are run on this new hypergraph $G'$. 
We have two possibilities as stated in Theorem \ref{thm:number-of-hyperedges-in-mincuts}. We account for these two  possibilities by running two different algorithms: (i) Assuming that some min-cut has size less than $r-\log (\lambda/4r)/\log n$, we design a near-linear time algorithm to find a min-cut (see Theorem \ref{thm:new-small-mincut-algo}). This algorithm is inspired by recent vertex connectivity algorithms, in particular the local vertex connectivity algorithm of \cite{FNYSY20,NanongkaiSY19} and the sublinear-time kernelization technique of \cite{LNPSY21}. This algorithm runs in $\tilde{O}_r(p)$ time. (ii) Assuming that every min-cut is large-sized, we design a fast algorithm to find a min-cut (see Theorem \ref{thm:large-mincut-algo}). For this, we find an expander decomposition $\mathcal{X}$ of $G'$, perform a \trim\ operation followed by a series of $O(r^2)$ \shave\ operations, and then contract each part of the trimmed and shaved expander decomposition to obtain a hypergraph $G''$. This reduces the number of vertices in $G''$ to $O_r(n/\lambda^{1/(r-1)})$ and consequently, running the global min-cut algorithm of either \cite{FPZ19} or \cite{CX17} or \cite{CQ21} (whichever is faster) on $G''$ leads to an overall run-time of $\hat{O}_r(p + \min\{ \lambda^{(r-3)/(r-1)}n^2, n^r/\lambda^{r/(r-1)}, \lambda^{(5r-7)/(4r-4)}n^{7/4} \})$ for step (ii). We return the cheaper of the two cuts found in steps (i) and (ii). The correctness of the algorithm follows by the structural theorem and the total run-time is $\hat{O}_r(p + \min\{ n^r/\lambda^{r/(r-1)}, \lambda^{(r-3)/(r-1)}n^2, \lambda^{(5r-7)/(4r-4)}n^{7/4} \})$.

We note here that the expander decomposition framework for graphs was developed in a series of works for the dynamic connectivity problem \cite{ns17, w17, sw19, cglnps20}. Very recently, it has found applications for other problems \cite{GoranciRST20hierarchy,BernsteinBNPSS20, BernsteinGS20scc}. Closer to our application, Saranurak \cite{Sar20} used expander decomposition to give an algorithm to compute edge connectivity in graphs via the use of \trim\ and \shave\ operations. The \trim\ and \shave\ operations were introduced by Kawarabayashi and Thorup \cite{KawarabayashiT19} to compute graph connectivity in deterministic $O(m\log ^{12} n)$ time. Our line of attack is an adaptation of Saranurak's approach. 

\paragraph{Organization.} In Section \ref{sec:relevant-work}, we discuss algorithms for the more general problem of weighted hypergraph min-cut. In Section \ref{sec:prelims}, we introduce the necessary preliminaries. In Section \ref{sec:structural-theorem-main}, we prove our structural theorem (Theorem \ref{thm:number-of-hyperedges-in-mincuts}). 
We present our hypergraph min-cut algorithm and prove Theorem \ref{thm:min-cut-algorithm} in Section \ref{sec:min-cut-algo}. 
In Appendix \ref{sec:det-small-size-min-cut}, we give a deterministic variant of our min-cut algorithm from Section \ref{sec:small-size-min-cut} for small-sized min-cut---this is necessary for the (the slightly slower) deterministic version of our min-cut algorithm. We provide a tight example for our structural theorem (i.e., Theorem \ref{thm:number-of-hyperedges-in-mincuts}) in Appendix \ref{appn:tight-condn-2}. Lastly, in Appendix \ref{appn:hypergraph-nontrivial}, we provide an example to justify the modified notion of non-triviality while considering min-cuts in hypergraphs. For the technical sections, we encourage the reader to consider $r=O(1)$ during first read.

\subsection{Relevant Work}\label{sec:relevant-work}
We briefly describe the approaches known for global min-cut in weighted graphs and hypergraphs. We begin by discussing the case of graphs. 
On the deterministic front, until recently, the fastest known algorithm for global min-cut was $O(mn)$ due to Nagamochi and Ibaraki \cite{NagamochiI92}. This was recently improved to $\tilde{O}(m\cdot \min\{\sqrt{m}, n^{2/3}\})$ by Li and Panigrahi \cite{LP20}. We now discuss the randomized approaches. 
As mentioned earlier, Karger introduced the influential random contraction approach for graphs leading to a randomized algorithm that runs in time $\tilde{O}(mn^2)$. Karger and Stein refined this approach to obtain a run-time of $\tilde{O}(n^2)$. 
Subsequently, Karger gave a tree-packing approach that runs in time $O(m\log^3 n)$. The tree-packing approach has garnered much attention recently leading to run-time improvements \cite{MukhopadhyayN20, GMW19} and extensions to solve more general partitioning problems \cite{Th08, CQX19}. 

We now discuss the case of hypergraphs. We first focus on bounded rank hypergraphs. Fukunaga \cite{F10} generalized the tree-packing approach to $r$-rank hypergraphs to design an algorithm that runs in time $\tilde{O}_r(m^2 n^{2r-1})$. As mentioned earlier, Kogan and Krauthgamer \cite{KK15} generalized the random contraction approach to $r$-rank hypergraphs to design an algorithm that runs in time $O_r(mn^2)$. 
Fox, Panigrahi, and Zhang \cite{FPZ19} gave a randomized algorithm that runs in time $O(p+n^r\log^2 n)$. 
We now turn to arbitrary rank hypergraphs. 
As mentioned earlier, the approach of Nagamochi and Ibaraki was generalized in three different directions for arbitrary rank hypergraphs \cite{KW96, MW00, queyranne98} with all of them leading to a deterministic run-time of $O(pn)$. Chekuri and Xu improved this deterministic run-time to $O(p+\lambda n^2)$ for unweighted hypergraphs. In a series of works \cite{GKP17, CXY19, FPZ19}, the random contraction approach was refined and extended leading to a randomized run-time of $O(mn^2\log^3n)$. Interestingly, randomized approaches are slower than deterministic approaches for arbitrary rank hypergraphs. 

Motivated towards improving the $\log^3 n$ factor in the deterministic run-time for global min-cut in graphs, as well as obtaining a faster (randomized) algorithm, 
Kawarabayashi and Thorup \cite{KT15, KawarabayashiT19} initiated the study of global min-cut in simple unweighted graphs (i.e., computing connectivity). They gave a deterministic algorithm for computing connectivity that runs in time $O(m\log^{12} n)$. Henzinger, Rao, and Wang \cite{HenzingerRW17} improved the deterministic run-time to $O(m\log^2n \log\log^2n)$. Ghaffari, Nowicki, and Thorup \cite{GhaffariNT20} improved the randomized run-time to $O(m\log n)$ and $O(m+n\log^3 n)$. Saranurak \cite{Sar20} illustrated the use of expander decomposition to compute graph connectivity in time $m^{1+o(1)}$ time. All these algorithms are based on novel structural insights on graph cuts. 
As mentioned earlier, our work was motivated by the question of how to generalize the structural insights for graphs to constant rank hypergraphs. 

\section{Preliminaries}\label{sec:prelims}

Let $G=(V,E)$ be a hypergraph. Let $S, T\subseteq V$ be subsets of vertices. We define 
\begin{align*}
&E[S] := \{e \in E \colon e \subseteq S\},\\
&E(S,T) := \{ e \in E \colon e \subseteq S \cup T \text{ and } e \cap S, e \cap T \neq \emptyset \},  \text{ and}\\
&E^o(S, T) := \{ e \in E \colon e \cap S, e \cap T \neq \emptyset \}.
\end{align*}
We note that $E[S]$ is the set of hyperedges completely contained in $S$, $E(S,T)$ is the set of hyperedges contained in $S \cup T$ and intersecting both $S$ and $T$, and $E^o(S,T)$ is the set of hyperedges intersecting both $S$ and $T$. With this notation, if $S$ and $T$ are disjoint, then $E(S,T) = E[S \cup T] - E[S] - E[T]$ and moreover, if the hypergraph is a graph, then $E(S,T)=E^o(S,T)$. A cut is a partition $(S, V \setminus S)$ where both $S$ and $V \setminus S$ are non-empty. Let $\delta(S) := E(S, V \setminus S)$. For a vertex $v \in V$, we let $\delta(v)$ represent $\delta(\{v\})$. We define the capacity of $(S, V \setminus S)$ as $|\delta(S)|$, and call a cut as a min-cut if it has minimum capacity among all cuts in $G$. The connectivity of $G$ is the capacity of a min-cut in $G$.

We recall that the \emph{size} of a cut $(S, V \setminus S)$ is $\min\{|S|, |V \setminus S| \}$. We emphasize the distinction between the size of a cut and the capacity of a cut: size is the cardinality of the smaller side of the cut while capacity is the number of hyperedges crossing the cut. 

For a vertex $v\in V$ and a subset $S\subseteq V$, we define the degree of $v$ by $d(v):=|\delta(v)|$ and its degree inside $S$ by $d_S(v):=|e\in \delta(v):e\subseteq S|$. We define $\delta := \min_{v \in V} d(v)$ to be the minimum degree in $G$. We define $\vol(S) := \sum_{v \in S} d(v)$ and $\vol_S(T) := \sum_{v \in T} d_S(v)$. 
We define the \textit{conductance} of a set $X \subseteq V$ as $\min_{\emptyset \neq S \subsetneq X}\{\frac{|E^o(S, X \setminus S)|}{\min\{\vol(S),\vol(X \setminus S)\}}\}$.
For positive integers, $i < j$, we let $[i,j]$ represent the set $\{i, i+1, \ldots, j-1, j\}$.
The following proposition will be useful while counting hyperedges within nested sets. 
\begin{proposition}\label{prop:crossing-edges-volume-relationship}
Let $G=(V, E)$ be an $r$-rank $n$-vertex hypergraph and let $T\subseteq S\subseteq V$. Then, 
\[
\left|E(T, S \setminus T)\right| \geq \left(\frac{1}{r-1}\right)\left(\vol_{S}(T) - r\left|E[T]\right|\right).
\]
\end{proposition}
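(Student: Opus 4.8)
The plan is a straightforward double-counting argument: I will count, in two ways, the quantity $\sum_{v \in T}\, d_S(v)$, which is exactly $\vol_S(T)$ by definition.

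First I would rewrite $\vol_S(T)$ as a sum over hyperedges. Since $d_S(v)$ counts the hyperedges $e \ni v$ with $e \subseteq S$, swapping the order of summation gives
\[
\vol_S(T) \;=\; \sum_{v\in T} d_S(v) \;=\; \sum_{\substack{e\in E \colon e\subseteq S}} |e\cap T|.
\]
Now I would partition the hyperedges $e\subseteq S$ according to how they meet $T$: those with $e\cap T=\emptyset$ contribute $0$; those with $e\subseteq T$ are exactly the hyperedges of $E[T]$; and the remaining ones — those with $e\subseteq S$, $e\cap T\neq\emptyset$, and $e\not\subseteq T$ — are exactly $E(T, S\setminus T)$ by the definition $E(T,S\setminus T)=\{e\in E\colon e\subseteq T\cup(S\setminus T),\ e\cap T\neq\emptyset,\ e\cap(S\setminus T)\neq\emptyset\}$ and the fact that $T$ and $S\setminus T$ are disjoint with union $S$. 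Hence
\[
\vol_S(T) \;=\; \sum_{e\in E[T]} |e\cap T| \;+\; \sum_{e\in E(T, S\setminus T)} |e\cap T|.
\]

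The key estimates on the two sums use the rank bound. For $e\in E[T]$ we have $|e\cap T|=|e|\le r$, so the first sum is at most $r\,|E[T]|$. For $e\in E(T,S\setminus T)$, the hyperedge $e$ has at least one vertex in $S\setminus T$, so $|e\cap T|\le |e|-1\le r-1$, and the second sum is at most $(r-1)\,|E(T,S\setminus T)|$. Combining, $\vol_S(T)\le r\,|E[T]| + (r-1)\,|E(T,S\setminus T)|$, and rearranging yields the claimed inequality $|E(T,S\setminus T)|\ge \frac{1}{r-1}\bigl(\vol_S(T)-r\,|E[T]|\bigr)$.

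I do not anticipate a real obstacle here; the only point requiring care is the set-theoretic bookkeeping that the hyperedges $e\subseteq S$ with $e\cap T\neq\emptyset$ split cleanly into $E[T]$ and $E(T,S\setminus T)$, which relies on $S$ being the disjoint union of $T$ and $S\setminus T$. Everything else is the per-hyperedge bound $|e\cap T|\le r-1$ when $e$ escapes $T$.
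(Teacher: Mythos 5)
Your proposal is correct and is essentially the same argument as the paper's: both count $\vol_S(T)$ by swapping the sum over vertices and hyperedges, split the hyperedges contained in $S$ into those inside $T$ and those in $E(T,S\setminus T)$, and bound the per-hyperedge contribution by $r$ and $r-1$ respectively. The paper phrases it with indicator variables $Y_{v,e}$, but the decomposition and estimates are identical.
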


\begin{proof}
For a vertex $v \in V$ and a hyperedge $e \in E$, let $Y_{v, e} := 1$ if $v \in e$ and $e \subseteq S$, and $0$ otherwise. We note that
\begin{align*}
    \vol_S(T) &= \sum_{v \in T} d_{S}(v) 
    = \sum_{v \in T} \sum_{e \in E} Y_{v,e} 
    = \sum_{e \in E} \sum_{v \in T} Y_{v,e} 
    = \sum_{e \in E(T, S \setminus T)} \sum_{v \in T} Y_{v,e} + \sum_{e \subseteq T} \sum_{v \in T} Y_{v,e} \\
    &\leq \sum_{e \in E(T, S \setminus T)} (r-1) + \sum_{e \subseteq T} r 
    = (r-1)\left|E(T, S \setminus T)\right| + r\left|E[T]\right|.
\end{align*}
Rearranging this inequality gives the statement of the claim.
\end{proof}

For the structural theorem, we need the following folklore result showing the existence of an expander decomposition in graphs (e.g., see \cite{sw19}).

\begin{theorem} [Existential graph expander decomposition]\label{thm:expander-decomp-exists}
Let $G = (V,E)$ be an $n$-vertex graph (possibly multigraph) with $m$ edges and let $\phi \leq 1$ be a positive real value. Then, there exists a partition $\{X_1, \dots, X_k\}$ of the vertex set $V$ such that the following hold: 
\begin{enumerate}
    \item $\sum_{i=1}^k |\delta(X_i)| = O(\phi m\log{n})$ and
    \item For every $i \in [k]$ and every non-empty set $S \subseteq X_i$, we have that \[|E(S, X_i \setminus S)| \geq \phi \cdot \min\{\vol(S), \vol(X_i \setminus S)\}.\]
\end{enumerate}
\end{theorem}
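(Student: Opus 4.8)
The plan is to build the partition by repeatedly refining along sparse cuts, then to bound the total boundary by charging cut edges against vertex volumes.

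\textbf{Construction.} I would start from the trivial partition $\mathcal{P}=\{V\}$ and, as long as some part $X\in\mathcal{P}$ violates condition~2 --- i.e.\ there is a nonempty $S\subsetneq X$ with $|E(S,X\setminus S)|<\phi\cdot\min\{\vol(S),\vol(X\setminus S)\}$, volumes measured in $G$ --- replace $X$ in $\mathcal{P}$ by the two nonempty parts $S$ and $X\setminus S$. Since any edge meeting both $S$ and $X\setminus S$ has both endpoints in $X$, we have $E(S,X\setminus S)=E^{o}(S,X\setminus S)$ here, so a part violating condition~2 is exactly one that is not a $\phi$-expander in the conductance sense of Section~\ref{sec:prelims}. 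Each refinement increases $|\mathcal{P}|$ by one and $|\mathcal{P}|\le n$ throughout, so the process halts; its output $\{X_1,\dots,X_k\}$ satisfies condition~2 by construction, and it remains to establish condition~1.

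\textbf{Reducing condition~1 to a volume sum.} I would record the process as a binary tree: the root holds $V$, internal nodes are the splits, leaves hold $X_1,\dots,X_k$. Each edge $e=\{x,y\}$ is \emph{cut} at the unique split that first separates $x$ and $y$ (they remain separated afterwards), and $e$ belongs to some $\delta(X_i)$ iff it is cut somewhere. Counting each cut edge from the parts of both endpoints gives
\[
\sum_{i=1}^{k}\bigl|\delta(X_i)\bigr| \;=\; 2\sum_{u}\bigl|E(S_u,X_u\setminus S_u)\bigr| \;<\; 2\phi\sum_{u}\min\{\vol(S_u),\vol(X_u\setminus S_u)\},
\]
where the sums range over the split nodes $u$ and $(S_u,X_u\setminus S_u)$ is the cut chosen at $u$. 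So condition~1 follows once $\sum_{u}\min\{\vol(S_u),\vol(X_u\setminus S_u)\}=O(m\log n)$.

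\textbf{The charging argument.} At each split $u$ I would charge $d(v)$ to every vertex $v$ on the lighter side (the side of smaller volume, ties broken arbitrarily), so that the charge issued at $u$ equals $\min\{\vol(S_u),\vol(X_u\setminus S_u)\}$. Fix $v$ and follow the chain of parts containing it from the root down to its leaf: the volume of that part never increases, and it drops to at most half whenever $v$ lies on the lighter side. As it starts at $\vol(V)=2m$ and never drops below $1$, this happens $O(\log n)$ times, so $v$ absorbs total charge $O(d(v)\log n)$; summing over $v$ gives $\sum_{u}\min\{\vol(S_u),\vol(X_u\setminus S_u)\}\le\sum_{v}O(d(v)\log n)=O(m\log n)$, hence $\sum_i|\delta(X_i)|=O(\phi m\log n)$.

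The only step with real content is the charging argument --- choosing ``volume of the lighter side'' as the potential and observing that it decays geometrically along every root-to-leaf chain, so each vertex pays only $O(\log n)$ times. Existence of a sparse cut to split on, termination, and the edge-counting identity are all immediate from the definitions, so I expect the potential choice and this charging to be the main point to get right.
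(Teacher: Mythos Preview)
The paper does not prove this theorem; it quotes it as a folklore result with a pointer to \cite{sw19}. Your recursive sparse-cut construction with the lighter-side charging is exactly the standard argument, and the overall strategy is correct.

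One small gap to fix: in the charging step you write that the volume of $v$'s part ``starts at $\vol(V)=2m$ and never drops below $1$, so this happens $O(\log n)$ times''. That reasoning yields $O(\log m)$ halvings, and since the statement explicitly allows multigraphs, $\log m$ need not be $O(\log n)$. The easy repair is to note that the part containing $v$ always has volume at least $d(v)$, so $v$ sits on the lighter side at most $\log_2\bigl(2m/d(v)\bigr)$ times; then
\[
\sum_{v\in V} d(v)\,\log_2\frac{2m}{d(v)}
\;=\; 2m\sum_{v\in V}\frac{d(v)}{2m}\log_2\frac{2m}{d(v)}
\;\le\; 2m\log_2 n,
\]
since the right-hand sum is the entropy of a distribution supported on $n$ points. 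This recovers the claimed $O(\phi m\log n)$ bound.
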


Our algorithm for min-cut uses several other algorithms from the literature as subroutines. We state these algorithms and their guarantees here. To begin with, we will need a constructive version of Theorem \ref{thm:expander-decomp-exists} that we state below. 

\begin{theorem} [Algorithmic graph expander decomposition  \cite{cglnps20}]\label{thm:CGLNPS-expander-decomp} Let $G = (V,E)$ be a graph (possibly multigraph) with $m$ edges, and let $\phi \leq 1$ be a positive real value. Then, 
there exists a deterministic algorithm \textsc{GraphExpanderDecomp} which takes $G$ and $\phi$ as input and runs in time 
$O(m^{1+o(1)})$
to return a partition $\{X_1, \dots, X_k\}$ of the vertex set $V$ such that  the following hold: 
\begin{enumerate}
    \item $\sum_{i=1}^k |\delta(X_i)| = O(\phi m^{1+o(1)})$ and
    \item For every $i \in [k]$ and every nonempty set $S \subseteq X_i$, we have that $|E(S, X_i \setminus S)| \geq \phi \cdot \min\{\vol(S), \vol(X_i \setminus S)\}$.
\end{enumerate}
\end{theorem}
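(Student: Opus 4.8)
The plan is to prove this by the standard recursive ``cut-or-expander'' decomposition framework, instantiated with a \emph{deterministic}, near-linear-time balanced-sparse-cut primitive; this is exactly the strategy of Chuzhoy--Gao--Li--Nanongkai--Peng--Saranurak \cite{cglnps20}. I would maintain a collection of ``active'' vertex sets (initially just $V$) and repeatedly process one, say $X$, with the following primitive applied to $G[X]$: either (a) certify that $G[X]$ has conductance at least $\phi$, in which case $X$ is frozen as a final part $X_i$; or (b) produce a cut $(S, X \setminus S)$ with $|E^o(S, X \setminus S)| \le \phi \cdot \min\{\vol(S), \vol(X \setminus S)\}$ that is, in an approximate sense, ``as balanced as possible''. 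In case (b) the edges of $E^o(S, X\setminus S)$ are charged to the boundary and never re-examined, and I recurse on both $S$ and $X \setminus S$. When no active set remains, the frozen parts form the desired partition, and conclusion~2 holds by construction of the primitive.

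The two quantities to control are the total charged (boundary) edge count and the running time. For the boundary bound I would use a volume-potential argument: whenever the returned cut is roughly balanced, each side has volume a constant factor below $\vol(X)$, so the recursion tree has depth $O(\log m)$, and charging $\phi \cdot \vol$ per level gives $\sum_i |\delta(X_i)| = O(\phi m \log m)$. The reason the theorem states $O(\phi m^{1+o(1)})$ rather than $O(\phi m \log n)$ is that the deterministic primitive only guarantees an \emph{approximately} most-balanced sparse cut, so the per-level volume shrinkage — hence the effective recursion depth — carries an $m^{o(1)}$ loss instead of $\mathrm{polylog}(n)$. Unbalanced cuts are handled by the usual device of splitting off the small side and either finishing it in few rounds or recursing on it separately (as in Saranurak--Wang), so they do not blow up the total boundary.

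The near-linear running time reduces entirely to implementing the cut-or-expander primitive in $m^{1+o(1)}$ deterministic time. Here the plan is to run the Khandekar--Rao--Vazirani cut-matching game on $G[X]$ for $O(\log^2 n)$ rounds: the game either embeds an expander into $G[X]$, certifying conductance up to a $\mathrm{polylog}$ factor (absorbed into $\phi$), or exposes a sparse, approximately balanced cut. Each round reduces to computing an approximate maximum flow / routing a multicommodity flow in $G[X]$, which deterministically is supplied by the recursive deterministic flow and expander-routing machinery of \cite{cglnps20}. Summing the $m^{1+o(1)}$ cost over all recursive calls is fine because the active sets at any fixed recursion level are vertex-disjoint with total volume $O(m)$, so the work telescopes to $m^{1+o(1)}$ overall.

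The main obstacle is precisely this deterministic, near-linear-time balanced-cut / approximate-flow primitive: with randomization it is classical (Spielman--Teng, Orecchia--Sachdeva--Vishnoi, and the near-linear expander-decomposition line of work), but removing randomness while keeping $m^{1+o(1)}$ running time is the technical heart of \cite{cglnps20}, and it is exactly what forces the $m^{o(1)}$ factors (in place of polylogarithms) in both conclusions. Everything else — the recursive decomposition skeleton and the volume-potential accounting for the boundary bound — is routine once that primitive is available. For the purposes of the present paper the theorem is invoked as a black box, so strictly speaking it suffices to cite \cite{cglnps20}; the sketch above is how I would reconstruct the argument.
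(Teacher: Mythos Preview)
The paper does not prove this theorem; it is stated as a black-box result imported from \cite{cglnps20} and used only via citation. Your sketch is a reasonable high-level reconstruction of the \cite{cglnps20} argument (recursive cut-or-expander decomposition driven by a deterministic near-linear balanced-sparse-cut primitive built on the cut-matching game and deterministic flow machinery), and you yourself correctly observe at the end that, for this paper's purposes, citing \cite{cglnps20} is all that is required.
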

Throughout this work, $o(1)$ is with respect to the number of vertices $n$. 
The following result summarizes the vertex-ordering based algorithms for computing connectivity in hypergraphs. 
\begin{theorem}\cite{KW96, queyranne98, MW00}\label{thm:slow-min-cut} 
Let $G$ be an $n$-vertex
hypergraph of size $p$ (possibly multihypergraph). Then, there exists a deterministic algorithm \textsc{SlowMinCut} that takes $G$ as input and runs in time $O(pn)$ to compute the min-cut capacity $\lambda$ of $G$ (and also return a min-cut).
\end{theorem}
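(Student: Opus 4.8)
The plan is to generalize the classical vertex-ordering (pendant-pair) approach of Nagamochi--Ibaraki and Queyranne from graphs to hypergraphs, as done by Klimmek--Wagner and Mak--Wong. The central notion is a \emph{maximum adjacency ordering} of $V$: pick $v_1 \in V$ arbitrarily, and having selected $W_{i-1} = \{v_1, \dots, v_{i-1}\}$, let $v_i$ be a vertex in $V \setminus W_{i-1}$ that maximizes the attachment key $\kappa(W_{i-1}, v) := |\{e \in E : v \in e \text{ and } e \cap W_{i-1} \neq \emptyset\}|$, i.e., the number of hyperedges through $v$ that already meet $W_{i-1}$; ties are broken arbitrarily. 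For $r = 2$ this is exactly the number of edges from $v$ to $W_{i-1}$, so the procedure reduces to the familiar graph ordering. Note that $\kappa(W_{n-1}, v_n) = d(v_n)$, since $W_{n-1} = V \setminus \{v_n\}$ (singleton hyperedges, which cross no cut, can be discarded at the outset).

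The key lemma to establish is the \emph{pendant-pair property}: for every maximum adjacency ordering $v_1, \dots, v_n$, the trivial cut $(\{v_n\}, V \setminus \{v_n\})$ has minimum capacity among all cuts that separate $v_{n-1}$ from $v_n$. I would prove this by the standard exchange argument: fix any cut $(C, V \setminus C)$ with $v_{n-1}$ and $v_n$ on opposite sides, call $v_i$ \emph{active} if $v_i$ lies on the opposite side from $v_{i-1}$, and show by induction along the ordering that for each active $v_i$ the key $\kappa(W_{i-1}, v_i)$ is at most the capacity of the cut that $(C, V \setminus C)$ induces on $\{v_1, \dots, v_i\}$; using the maximality of the MA choice at each active step to pass from one active vertex to the next, and then applying the bound to $v_n$ (which is active), gives $d(v_n) = \kappa(W_{n-1}, v_n) \le |\delta(C)|$. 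The one place where the hypergraph case genuinely differs from the graph case is the bookkeeping: a hyperedge can span several consecutive blocks of non-active vertices, so one must argue that each hyperedge is charged to the induced cut at least as many times as it contributes to the keys appearing in the induction. This is the step I expect to be the main obstacle, and also the reason the attachment key above (rather than, say, Queyranne's submodular key) is the convenient choice: it keeps this charging argument clean while remaining cheap to maintain.

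Given the pendant-pair lemma, \textsc{SlowMinCut} is the Stoer--Wagner loop: compute a maximum adjacency ordering of the current hypergraph, record $d(v_n)$ together with the cut $(\{v_n\}, V \setminus \{v_n\})$ (mapped back to a cut of $G$ via the contractions performed so far) as a candidate, then merge $v_{n-1}$ and $v_n$ into a single vertex and repeat on the resulting hypergraph with one fewer vertex; after $n - 1$ phases, output the candidate of smallest capacity. Correctness follows by induction on $n$: any fixed global min-cut of $G$ either separates the pendant pair produced in the first phase, in which case that phase's candidate already has capacity $\lambda$, or it does not, in which case contracting that pair preserves the min-cut and the inductive hypothesis applies to the smaller hypergraph.

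For the running time, each phase computes one maximum adjacency ordering. The keys change only when the first vertex of a hyperedge $e$ enters $W$, at which point each of the at most $r$ remaining vertices of $e$ has its key incremented by one; summed over all hyperedges this is $O(\sum_{e} |e|) = O(p)$ increments in total, and there are $n$ extract-max operations. Maintaining the keys in a bucket-based priority queue---a doubly linked list of nonempty integer-indexed buckets with a pointer to the current maximum, so increments are $O(1)$ and extractions are $O(1)$ amortized because the maximum pointer moves a total of $O(p)$ per phase---makes a phase run in $O(p + n) = O(p)$ time. Contractions only shrink hyperedges, so the size never exceeds the original $p$; over the $n - 1$ phases the total running time is $O(pn)$, as claimed.
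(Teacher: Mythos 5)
Your proposal correctly reconstructs the Klimmek--Wagner vertex-ordering algorithm, which is one of the works the paper cites for this theorem; the paper does not reprove it. The one step you flag as the likely obstacle---the hyperedge charging in the pendant-pair induction---does close cleanly with your key $\kappa(A,v) = |\{e : v \in e,\ e \cap A \neq \emptyset\}|$. In the induction step from the previous active vertex $v_j$ to the next active vertex $v_i$, the difference $\kappa(W_{i-1},v_i) - \kappa(W_{j-1},v_i)$ counts exactly the hyperedges through $v_i$ that meet $\{v_j,\dots,v_{i-1}\}$ but avoid $W_{j-1}$; each such hyperedge $e$ contains $v_i$ and some $v_\ell$ with $j \le \ell \le i-1$, which lie on opposite sides of the fixed cut $(C,V\setminus C)$, while $e \cap \{v_1,\dots,v_j\} \subseteq \{v_j\}$, so $e$ is cut within $\{v_1,\dots,v_i\}$ but not within $\{v_1,\dots,v_j\}$. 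Thus each such $e$ is charged once to a distinct new crossing hyperedge, and combined with the MA-maximality bound $\kappa(W_{j-1},v_i) \le \kappa(W_{j-1},v_j)$, the induction $\kappa(W_{i-1},v_i) \le |\delta(C)\text{ restricted to }\{v_1,\dots,v_i\}|$ goes through exactly as in the graph case; taking $i=n$ gives $d(v_n) \le |\delta(C)|$. The contraction loop and the bucket-queue running-time accounting are correct as written, giving $O(p)$ per phase and $O(pn)$ overall.
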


We need a fast estimator for connectivity in hypergraphs: Matula \cite{Matula93} designed a linear-time $(2+\epsilon)$-approximation for min-cut in graphs. Chekuri and Xu \cite{ChekuriX18} generalized Matula's approach to hypergraphs. 

\begin{theorem}\cite{ChekuriX18}\label{thm:CX-approximation} 
Let $G$ be a hypergraph of size $p$ (possibly multihypergraph). Then, there exists an algorithm $\textsc{CXApproximation}$ that takes $G$ as input and runs in time $O(p)$ to return a positive integer $k$ such that $\lambda < k \leq 3\lambda$, where $\lambda$ is the capacity of a min-cut in $G$. 
\end{theorem}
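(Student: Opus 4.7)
The plan is to adapt Matula's linear-time approximation scheme for graph min-cut to hypergraphs. A trivial upper bound is the minimum degree $\delta := \min_v d(v)$, computable in $O(p)$ time via one pass over the incidence lists: the singleton cut at a minimum-degree vertex has capacity $\delta$, so $\lambda \le \delta$. Returning $k := \delta + 1$ therefore always satisfies $k > \lambda$; the remaining question is whether $k \le 3\lambda$, or whether we must detect that $\delta \gg \lambda$ and refine the estimate to something smaller.

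My strategy would be to construct a hypergraph analog of the Nagamochi-Ibaraki sparse $t$-connectivity certificate for $t \approx \delta/3$, via a maximum-adjacency-style ordering generalized to hypergraphs (in the spirit of the orderings used by Klimmek-Wagner and Mak-Wong that underlie \textsc{SlowMinCut}). The certificate would be a subhypergraph $H$ of total incidence size $O(tn) = O(p)$ that preserves every cut of $G$ of capacity at most $t$. Two cases then arise. If $H$ (and hence $G$) has no cut of capacity $\le t$, then $\lambda > t$, so up to a small calibration of the threshold we obtain $\delta + 1 \le 3\lambda$ and simply return $k = \delta + 1$. Otherwise $H$ already exhibits some cut $(S, V \setminus S)$ of capacity $\le t \le \delta/3$; we either return its capacity plus one directly, or contract the highly-connected clusters revealed by the scan and recurse on a hypergraph of strictly smaller minimum degree. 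Since $\delta$ drops by a constant factor at each recursive call and the work per call is linear in the current size, the total running time telescopes to $O(p)$.

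The main obstacle will be the sparse-certificate construction itself: defining a hypergraph max-adjacency ordering so that (i) the resulting subhypergraph $H$ has total incidence size $O(tn)$ even though each hyperedge touches up to $r$ vertices, and (ii) the $t$ rounds of the scan are amortized to $O(p)$ total time rather than $O(tp)$. The natural definition would maintain, for each unscanned vertex, an attachment weight to the already-scanned prefix, charging each hyperedge only once per round via its first scanned incidence, and collecting across $t$ rounds the hyperedges that cross into the scanned set. Proving that this yields a valid sparse cut certificate would rest on the symmetric submodularity of the hypergraph cut function (combined with a Nagamochi-Ibaraki-style forest-decomposition argument), and the linear-time implementation requires a bucketing data structure whose updates are amortized across all rounds. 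Once these technical points are settled, correctness of the $3$-approximation would follow directly from the case split above.
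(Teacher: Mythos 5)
First, note that the paper does not prove this statement at all: it is imported verbatim as a black box from Chekuri--Xu \cite{ChekuriX18}, who generalize Matula's linear-time $(2+\epsilon)$-approximation from graphs to hypergraphs. Your overall plan --- min-degree as the initial upper bound, a Nagamochi--Ibaraki-style sparse certificate built from a hypergraph maximum-adjacency ordering, and a contract-and-recurse loop --- is exactly the approach of the cited work, so you have identified the right route. The issue is that two of the steps you treat as routine are the ones that actually fail as written.

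Concretely: (a) In your middle case you propose that if the certificate $H$ ``exhibits some cut of capacity $\le t \le \delta/3$,'' you can ``return its capacity plus one directly.'' That cut is only an upper bound on $\lambda$; it could have capacity $\delta/3$ while $\lambda = 1$, so the returned value need not satisfy $k \le 3\lambda$. The correct argument is different: the certificate property guarantees that every hyperedge crossing a cut of capacity below the threshold $t$ lies in $H$, so contracting the hyperedges \emph{outside} $H$ preserves every such cut (in particular the min-cut, if $\lambda < t$); one recurses on the contracted hypergraph and returns the minimum of the per-level minimum degrees, and correctness follows because at the first level where $\lambda \ge t_i = \delta_i/3$ one gets $\delta_i \le 3\lambda$, while every $\delta_i$ is an upper bound on $\lambda$ since singleton cuts at any level are cuts of $G$. (b) Your running-time argument rests on the claim that ``$\delta$ drops by a constant factor at each recursive call,'' but nothing forces the minimum degree of the contracted hypergraph to decrease; the quantity that provably decreases geometrically is the number of vertices (because the certificate has average degree strictly below $\delta$, a constant fraction of vertices must lose an incident hyperedge to contraction), and converting that into a genuinely $O(p)$ rather than $O(p\log n)$ bound is part of the amortization you defer. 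Together with the deferred certificate construction itself, these gaps mean the proposal is a correct plan but not yet a proof; since the paper only cites the result, you would need to supply these arguments (or the citation) in full.
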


For most graph optimization algorithms, it is helpful to reduce the size of the input graph without losing an optimum before running slow algorithms. A $k$-certificate reduces the size of the hypergraph while ensuring that every cut capacity $|\delta(C)|$ does not fall below $\min \{k, |\delta(C)|\}$. 
The following result on $k$-certificates was shown by Guha, McGregor, and Tench \cite{GMT15}. The size of the $k$-certificate was subsequently improved by Chekuri and Xu via trimming hyperedges, but the Chekuri-Xu $k$-certificate could be a multihypergraph. Since we need a simple hypergraph as a $k$-certificate, we rely on the weaker result of \cite{GMT15}. 

\begin{theorem}\label{thm:k-certificate} \cite{GMT15, ChekuriX18}
Let $G=(V,E)$ be an $n$-vertex hypergraph (possibly multihypergraph) of size $p$, and let $k$ be a positive integer. Then, there exists an algorithm $\textsc{Certificate}$ that takes $G$ and $k$ as input and runs in time $O(p)$ to return a subhypergraph $G'=(V, E')$ with $\sum_{e \in E'} |e| = O(rnk)$ such that for every cut $(C,V \setminus C)$ in $G$, we have $|\delta_{G'}(C)| \geq \min\{k, |\delta_G(C)|\}$.
\end{theorem}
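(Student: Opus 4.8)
The plan is to adapt Nagamochi and Ibaraki's spanning-forest construction of a $k$-certificate from graphs to hypergraphs. Call a subset $F' \subseteq F$ of hyperedges of a hypergraph $(V,F)$ a \emph{spanning forest} if $(V,F')$ has exactly the same connected components as $(V,F)$ and $F'$ is inclusion-minimal with this property. Building such an $F'$ greedily, every hyperedge added decreases the number of connected components by at least one, so $|F'| \le n-1$ and hence $\sum_{e \in F'} |e| \le r(n-1)$.

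The algorithm $\textsc{Certificate}$ peels off $k$ hyperedge-disjoint spanning forests. Maintain forests $F_1, \ldots, F_k$, all initially empty, together with a union--find structure on $V$ for each. Process the hyperedges of $E$ in an arbitrary order; for a hyperedge $e$, let $i(e)$ be the least index $i \in \{1,\dots,k\}$ such that the vertices of $e$ do not all lie in one connected component of the current $(V,F_i)$ (if no such index exists, discard $e$), then add $e$ to $F_{i(e)}$ and merge the components of $(V, F_{i(e)})$ that $e$ meets. Return $G' = (V, E')$ with $E' := F_1 \cup \cdots \cup F_k$. Since $E' \subseteq E$, the output is a subhypergraph of $G$ (hence simple whenever $G$ is), and by the forest bound $\sum_{e \in E'}|e| \le k\, r\,(n-1) = O(rnk)$.

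For cut preservation, fix a cut $(C, V \setminus C)$ and put $t := |\delta_{G'}(C)|$. If $t \ge k$ there is nothing to prove, so assume $t < k$. As $E'$ is the disjoint union of $F_1, \dots, F_k$, the $t$ crossing hyperedges of $E'$ are spread among the $F_i$, so among $F_1, \dots, F_{t+1}$ some forest $F_a$ with $a \le t+1$ contains no crossing hyperedge; consequently every connected component of $(V, F_a)$ is contained in $C$ or in $V \setminus C$. Now let $e$ be any hyperedge of $\delta_G(C)$: its vertex set meets both sides of the cut, so it is not contained in a single component of $(V, F_a)$, and since components only coarsen over the run this was already true when $e$ was processed; hence $i(e) \le a \le t+1 \le k$ and $e \in E'$. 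Therefore $\delta_G(C) = \delta_{G'}(C)$, so $|\delta_{G'}(C)| = |\delta_G(C)| = \min\{k, |\delta_G(C)|\}$; combined with the case $t \ge k$, we get $|\delta_{G'}(C)| \ge \min\{k, |\delta_G(C)|\}$ always.

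The delicate point is the $O(p)$ running time: the implementation above may test a hyperedge against up to $k$ union--find structures, costing $O(p\,k\,\alpha(n))$. To remove the dependence on $k$ I would invoke the hypergraph version of Nagamochi--Ibaraki's scan-first-search forest decomposition due to Guha, McGregor and Tench~\cite{GMT15}: scanning the vertices in a maximum-adjacency-type order and using bucketed priority queues, one computes all $k$ forests in a single pass, touching each hyperedge $e$ only $O(|e|)$ times in total, in time $O(p)$, and the resulting decomposition still has the per-forest component structure used in the correctness argument. This efficient forest-decomposition subroutine is the part I would borrow rather than re-derive; the size bound and cut-preservation claim are then the routine hypergraph analogues of the classical graph argument, and they form the conceptual core but present no real obstacle.
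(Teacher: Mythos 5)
This theorem is stated in the paper as a citation to \cite{GMT15} (with a discussion of \cite{ChekuriX18}'s improvement and why it is not used); the paper gives no proof of its own. Your sketch is a correct account of the Nagamochi--Ibaraki forest-decomposition argument that \cite{GMT15} generalize to hypergraphs: the greedy peeling of $k$ hyperedge-disjoint forests yields the $O(rnk)$ size bound, the pigeonhole-on-forests argument gives cut preservation, and you rightly note that the $O(p)$ running time is exactly the non-routine part that must be borrowed from \cite{GMT15} (scan-first-search rather than the $O(pk\alpha(n))$ naive peeling). So this matches the cited source's approach and contains no gaps beyond the deliberately deferred implementation detail.
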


Chekuri and Xu \cite{ChekuriX18} noted that the slow min-cut algorithm of Theorem \ref{thm:slow-min-cut} can be sped up by using the $k$-certificate of Theorem \ref{thm:k-certificate}. We illustrate their approach in Algorithm \ref{alg:CX-min-cut}. Our algorithm can be viewed as an improvement of Algorithm \ref{alg:CX-min-cut}---we replace the call to $\textsc{SlowMinCut}$ with a call to a faster min-cut algorithm. 

\begin{algorithm}[ht]
\caption{{CXMinCut}($G$)}\label{alg:CX-min-cut}
\begin{algorithmic}
\State $k \gets \textsc{CXApproximation}(G)$
\State $G' \gets \textsc{Certificate}(G, k)$
\State \Return $\textsc{SlowMinCut}(G')$
\end{algorithmic}
\end{algorithm}

We briefly discuss the correctness and run-time of Algorithm \ref{alg:CX-min-cut}. By Theorem \ref{thm:CX-approximation}, we have that $k > \lambda$, so Theorem \ref{thm:k-certificate} guarantees that the min-cuts of $G'$ will be the same as the min-cuts of $G$. Thus, \textsc{SlowMinCut}$(G')$ will find a min-cut for $G$ and the algorithm will return it. We now bound the run-time: By Theorem \ref{thm:CX-approximation},  \textsc{CXApproximation} runs in $O(p)$ time to return an estimate $k$ that is at most $3\lambda$. By Theorem \ref{thm:k-certificate}, \textsc{Certificate} runs in $O(p)$ time to return a subhypergraph $G'=(V,E')$ with size $p'=\sum_{e\in E'} |e| = O(rkn) = O(r\lambda n)$. Consequently, $\textsc{SlowMinCut}$ runs in time $O(r\lambda n^2)$. Thus, the run-time of Algorithm \ref{alg:CX-min-cut} is $O(p + r\lambda n^2)$.

\begin{corollary}\cite{CX17}\label{thm:CX-min-cut}
Let $G$ be an $n$-vertex
hypergraph of size $p$ (possibly multihypergraph). Then, there exists a deterministic algorithm \textsc{CXMinCut} that takes $G$ as input and runs in time $O(p+r\lambda n^2)$ to compute the min-cut capacity $\lambda$ of $G$ (and also return a min-cut). 
\end{corollary}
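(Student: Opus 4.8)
The plan is to analyze Algorithm~\ref{alg:CX-min-cut} (\textsc{CXMinCut}), which simply chains three routines already in hand: the constant-factor estimator \textsc{CXApproximation} (Theorem~\ref{thm:CX-approximation}), the certificate/sparsifier \textsc{Certificate} (Theorem~\ref{thm:k-certificate}), and the exact but slow vertex-ordering algorithm \textsc{SlowMinCut} (Theorem~\ref{thm:slow-min-cut}). Two things must be checked: that the returned value and cut are correct, and that the run-time is $O(p + r\lambda n^2)$.

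For the run-time, reading $G$ and running \textsc{CXApproximation}$(G)$ takes $O(p)$ time and produces an integer $k$ with $\lambda < k \le 3\lambda$. Then \textsc{Certificate}$(G,k)$ takes another $O(p)$ time and returns a subhypergraph $G' = (V, E')$ with $p' := \sum_{e \in E'} |e| = O(rnk) = O(r\lambda n)$ (here the upper bound $k \le 3\lambda$ is used, and only here). Finally \textsc{SlowMinCut}$(G')$ runs in $O(p' n) = O(r\lambda n^2)$ time, giving a total of $O(p + r\lambda n^2)$.

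For correctness I would argue that the min-cut capacity of $G'$ equals $\lambda$ and that every min-cut of $G'$ is a min-cut of $G$. Since $G'$ is a subhypergraph of $G$, deleting hyperedges cannot increase a cut, so $|\delta_{G'}(C)| \le |\delta_G(C)|$ for every cut $(C, V \setminus C)$; applied to a min-cut $C^{\ast}$ of $G$ this gives $|\delta_{G'}(C^{\ast})| \le \lambda$. On the other hand, Theorem~\ref{thm:k-certificate} guarantees $|\delta_{G'}(C)| \ge \min\{k, |\delta_G(C)|\} \ge \min\{k, \lambda\} = \lambda$ for every cut $C$, using $k > \lambda$. Hence every cut of $G'$ has capacity at least $\lambda$, and $C^{\ast}$ attains exactly $\lambda$, so the min-cut capacity of $G'$ is $\lambda$. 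Moreover, if $(C, V \setminus C)$ is a min-cut of $G'$, then $|\delta_{G'}(C)| = \lambda < k$, so the certificate inequality forces $\min\{k, |\delta_G(C)|\} = \lambda$, hence $|\delta_G(C)| = \lambda$ and $C$ is a min-cut of $G$. Thus \textsc{SlowMinCut}$(G')$ outputs $\lambda$ together with a genuine min-cut of $G$, as required.

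There is no serious obstacle here — every ingredient is quoted from an earlier theorem. The one place that needs a little care is the last implication: a min-cut of the sparsified hypergraph $G'$ is \emph{a priori} only a cut of capacity $\lambda$ in $G'$, and one must combine the monotonicity $|\delta_{G'}(C)| \le |\delta_G(C)|$ with the certificate's lower bound and the strict inequality $\lambda < k$ to conclude that it is also a min-cut of the original $G$. Everything else is bookkeeping on run-times.
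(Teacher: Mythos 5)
Your proof is correct and follows exactly the paper's route: run \textsc{CXApproximation} to get $k$ with $\lambda < k \le 3\lambda$, sparsify with \textsc{Certificate}$(G,k)$ to get $G'$ of size $O(r\lambda n)$, and then call \textsc{SlowMinCut}$(G')$ in $O(r\lambda n^2)$ time. You spell out the correctness step (that min-cuts of $G'$ and $G$ coincide, combining subhypergraph monotonicity with the certificate's lower bound and $k>\lambda$) more carefully than the paper's brief remark, but the argument is the same.
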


The next result summarizes the random contraction based algorithm for computing global min-cut in low-rank hypergraphs. 
\begin{theorem}\cite{FPZ19}\label{thm:FPZ-min-cut} 
Let $G$ be an $r$-rank $n$-vertex hypergraph of size $p$ (possibly multihypergraph). Then, there exists a randomized algorithm \textsc{FPZMinCut}
that takes $G$ as input and runs in time $\tilde{O}(p+n^r)$ to return a global min-cut of $G$ with high probability. 
\end{theorem}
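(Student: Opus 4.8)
\noindent This is a theorem of Fox, Panigrahi, and Zhang \cite{FPZ19}; here is how I would approach proving it. The plan splits into two phases: a linear-time preprocessing phase that shrinks the instance, and a randomized branching-contraction phase that finds the min-cut on the shrunken instance in time $\tilde{O}(n^r)$.

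\emph{Preprocessing.} First I would obtain a constant-factor estimate $k$ of $\lambda$ via \textsc{CXApproximation} (Theorem~\ref{thm:CX-approximation}) in $O(p)$ time, so that $\lambda < k \le 3\lambda$, and then apply \textsc{Certificate} (Theorem~\ref{thm:k-certificate}) with parameter $k$ to obtain a subhypergraph $G'$ of size $O(rnk) = O(rn\lambda)$ in which every cut of capacity less than $k$ --- in particular every min-cut --- is preserved. Since every vertex has degree at least $\lambda$ we have $p = \sum_{v} d(v) \ge n\lambda$, so $G'$ has size $O(rp)$; and when $G$ is simple, $\lambda = O_r(n^{r-1})$, so $G'$ has size $O_r(n^r)$. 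All further work runs on $G'$. Separately I would record the cheapest singleton cut $\min_{v} d(v)$ in $O(p)$ time; it then remains only to find a cheapest cut with at least two vertices on each side.

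\emph{Branching contraction.} On $G'$ I would run a random-contraction scheme with a Chandrasekaran--Xu--Yu-style weighting: repeatedly sample a hyperedge $e$ with probability proportional to $w(e)(|e|-1)$ and merge the vertices of $e$ into one vertex. The point of this weighting is that, for any fixed min-cut $(C, V\setminus C)$, the probability that the sampled $e$ crosses the cut is at most $(r-1)\lambda/\mu$ with $\mu := \sum_e w(e)(|e|-1) \ge \tfrac12\sum_v d(v) \ge n\lambda/2$, i.e. it is $O(r/n)$; so one contraction preserves the cut with probability $1 - O(r/n)$. Iterating this naively (Karger--Stein style) costs $\tilde{O}_r(n^{2r})$, which is too slow, so the key idea is branching: whenever the sampled hyperedge $e$ has size at least three and crosses the target cut, the cut splits $e$ into two nonempty parts $e\cap C$ and $e\setminus C$, so instead of failing we branch over all $O(2^r)$ bipartitions $e = A \sqcup B$, and in each branch contract $A$ into one vertex and $B$ into another (the ``no split'' branch, which contracts all of $e$, is included among these). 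Exactly one branch matches the true split of $e$, so correctness is never lost; and since contracting can only increase the global min-cut value, returning the minimum over all branches (together with $G'$'s singleton cuts) is safe. Size-two crossing edges are treated as in Karger's algorithm --- simply contracted, with the accompanying $O(1/n)$ failure probability. Running the whole procedure $O(\log n)$ times independently and keeping the best cut found yields a min-cut with high probability.

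\emph{Main obstacle.} The crux is bounding the total size of the branching-recursion tree by $\tilde{O}(n^r)$, so that after a Karger--Stein-style amortization of the per-node contraction work onto the leaves the running time comes out to $\tilde{O}(p + n^r)$ rather than the $n^{2r}$ of the unbranched scheme. This requires showing that branching steps are rare and productive along every root-to-leaf path --- each correct split of a size-$\ge 3$ crossing hyperedge strictly reduces the vertex count, and branching should be triggered only when it is genuinely needed --- and then combining this with the geometric shrinkage of the contracted hypergraph. Carrying out this amortization while simultaneously controlling the branching factor, the recursion depth, and the per-node work on $G'$ is exactly the technical heart of the argument.
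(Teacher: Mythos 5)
This theorem is imported from Fox, Panigrahi, and Zhang \cite{FPZ19}; the paper itself does not prove it, so there is no internal proof to compare your sketch against. Evaluated on its own terms, your proposal is in the right family---a sparsification step followed by a randomized branching contraction with the $w(e)(|e|-1)$ weighting---but the core branching mechanism you propose is not the one FPZ use, and as stated it does not work.

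Your plan is to branch over all $O(2^{r})$ bipartitions $e = A \sqcup B$ of the sampled hyperedge. But the algorithm has no way to detect whether the sampled hyperedge actually crosses the (unknown) min-cut, so you cannot condition the branching on that event; you would have to branch unconditionally. Doing so at every step gives a recursion tree of size $2^{\Omega(rn)}$, not $\tilde{O}(n^{r})$, and nothing in your amortization sketch controls this. You flag this as the ``technical heart,'' but it is not an analysis detail to be filled in; it is a missing algorithmic ingredient, and the natural repairs (e.g.\ branching only when the sampled edge is ``small'' in some computable sense) are exactly what the \cite{FPZ19} design supplies and your sketch does not. FPZ's branching contraction avoids this by branching between exactly two children---\emph{contract} $e$ versus \emph{delete} $e$ and recurse---and, crucially, taking the branching step only with a carefully calibrated probability $q_{|e|}$ (depending on the edge size and the current number of vertices); with the remaining probability it simply contracts. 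The delete-branch is what protects a crossing hyperedge, and the probabilities are tuned so that the expected recursion-tree size is $\tilde{O}(n^{r})$ while the min-cut still survives along some root-to-leaf path with adequate probability. Without a probabilistic gate of this kind, your scheme has no way to bound the tree. A secondary point: your remark that $G'$ has size $O_{r}(n^{r})$ uses $\lambda = O_{r}(n^{r-1})$, which holds only for simple hypergraphs, whereas the theorem allows multihypergraphs; this is a smaller issue since the stated bound is $\tilde{O}(p+n^{r})$ and the $p$ term absorbs the preprocessing, but it means the $n^{r}$ term has to come from bounding the recursion tree, not from the size of $G'$.
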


The next result summarizes another recent algorithm for computing global min-cut in hypergraphs via the isolating cuts technique.
\begin{theorem}\cite{CQ21}\label{thm:CQ-min-cut}
Let $G$ be an $r$-rank $n$-vertex hypergraph of size $p$ with $m$ hyperedges (possibly multihypergraph). Then, there exists a randomized algorithm
\textsc{CQMinCut} that takes $G$ as input and runs in time $\tilde{O}(\sqrt{pn(m+n)^{1.5}})$ to return a global min-cut of $G$ with high probability. 
\end{theorem}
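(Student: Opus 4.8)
The plan is to establish \Cref{thm:CQ-min-cut} via the \emph{isolating cuts} framework, following Chekuri and Quanrud \cite{CQ21}. The starting point is the observation that if $(C, V\setminus C)$ is a global min-cut with $|C|\le |V\setminus C|$, then it suffices to guess a power of two $s$ with $s\le |C| < 2s$ (only $O(\log n)$ guesses) together with a set $T\subseteq V$ of ``terminals'' satisfying $|T\cap C|=1$ with constant probability: sampling each vertex independently into $T$ with probability $\Theta(1/s)$ achieves both $|T\cap C|=1$ with constant probability and $|T|=\tilde{O}(n/s)$ with high probability. Conditioned on $|T\cap C|=1$, the cut $\delta(C)$ is exactly the cheapest cut isolating the unique terminal $t^\star\in T\cap C$ from $T\setminus\{t^\star\}$; so, computing for every $t\in T$ the minimum cut separating $\{t\}$ from $T\setminus\{t\}$ and taking the cheapest value over all $t$ and all guesses $s$ recovers $\lambda$, and $O(\log n)$ independent repetitions boost the success probability to high probability.

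The heart of the argument is the \emph{hypergraph isolating cuts lemma}: given terminals $T$, all $|T|$ minimum isolating cuts can be computed using only $O(\log|T|)$ minimum $(s,t)$-cut computations on hypergraphs of total size $\tilde{O}(p)$. This is proved by recursion on a uniformly random bipartition $T = T_1 \cupdot T_2$: contract $T_1$ to a single vertex and $T_2$ to a single vertex and compute the resulting $(s,t)$-min-cut; posimodularity of the symmetric submodular hypergraph cut function $d(\cdot)$ guarantees that, for each terminal $t$, the side of this cut not containing $t$ may be contracted without destroying $t$'s own minimum isolating cut, so the problem splits into two sub-instances with roughly half the terminals each, and a standard random-bipartition charging argument bounds the expected recursion depth before every terminal is isolated by $O(\log|T|)$. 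A single hypergraph minimum $(s,t)$-cut in turn reduces to a directed maximum-flow computation on the bipartite expansion of $G$ (split each hyperedge $e$ into $e_{\mathrm{in}}\to e_{\mathrm{out}}$ of capacity $1$, with infinite-capacity arcs $v\to e_{\mathrm{in}}$ and $e_{\mathrm{out}}\to v$ for each $v\in e$), a digraph with $O(n+m)$ vertices, $O(p)$ arcs, and $s$--$t$ flow value at most $\lambda$.

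The final step is the runtime accounting. Before anything else, one runs \textsc{CXApproximation} and \textsc{Certificate} (Theorems~\ref{thm:CX-approximation} and \ref{thm:k-certificate}) to replace $G$ by a simple subhypergraph preserving every cut of capacity below $O(\lambda)$ and of size $O(r\lambda n)$; this caps the flow value in every max-flow subcall and controls the total size across all $O(\log^2 n)$ calls. Plugging in the appropriate hypergraph max-flow subroutine and then optimizing the sampling parameter $s$ against the per-call flow cost --- balancing the regime where $s$ is small (many terminals, each flow cheap) against $s$ large (few terminals, but each flow potentially expensive) --- yields the claimed bound $\tilde{O}(\sqrt{pn(m+n)^{1.5}})$.

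I expect the main obstacle to be the correctness of the recursive isolating-cuts step: one must argue, purely from submodularity/posimodularity of $d$, that contracting the ``far'' side of the computed $(s,t)$-cut never merges two vertices that some terminal's optimal isolating cut would separate, and then bound, via the random bipartition, the expected number of recursion levels before every terminal becomes isolated. The flow reduction and the sparsification bookkeeping are comparatively routine; the delicate part is pinning down exactly which max-flow algorithm and which choice of $s$ produce the stated exponents.
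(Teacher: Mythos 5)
This statement is a black-box citation of Chekuri--Quanrud \cite{CQ21}; the paper itself gives no proof of it, so there is no in-paper argument to compare against. Your sketch correctly identifies the isolating-cuts framework as the route \cite{CQ21} take: size-guessing $s$, terminal sampling so $|T\cap C|=1$ with constant probability, the recursive random-bipartition argument whose correctness rests on posimodularity of the symmetric submodular cut function $d$, and the split-node reduction of hypergraph minimum $s$-$t$ cut to a directed max-flow on an $O(n+m)$-vertex, $O(p)$-arc digraph. These are the right ingredients, and you flag the delicate step (the posimodularity-based contraction argument) accurately.

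The one place your proposal does not actually establish the claimed statement is the runtime accounting. The specific exponents in $\tilde O(\sqrt{pn(m+n)^{1.5}})$ are exactly the output of the balancing step you defer to ``plugging in the appropriate hypergraph max-flow subroutine and then optimizing the sampling parameter $s$'' --- one must commit to a concrete max-flow bound on the bipartite expansion (e.g.\ a Goldberg--Rao-type $\tilde O(p\cdot\min\{\sqrt{p},(m+n)^{2/3}\})$ bound) and carry the $O(\log n)$ recursion levels and $O(\log n)$ size guesses through the arithmetic; as written, the bound is asserted rather than derived. I would also be cautious about the sparsification preprocessing you insert: the theorem's bound is stated in terms of the original $p$, $m$, $n$ (not post-\textsc{Certificate} quantities), which suggests \cite{CQ21} do their balancing on $G$ directly rather than on a $k$-certificate. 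Neither issue is a conceptual error in the framework, but the stated running time is not actually produced by the sketch.
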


\section{Structural theorem}\label{sec:structural-theorem-main}
We prove Theorem \ref{thm:number-of-hyperedges-in-mincuts} in this section. 
We call a min-cut $(C, V \setminus C)$ \emph{moderate-sized} if its size $\min\{|C|,|V \setminus C|\}$ is in the range $(r-\log{(\lambda/4r)}/\log{n}, (\lambda/2)^{1/r})$. 
In Section \ref{sec:no-medium-cuts}, we show that a hypergraph has no moderate-sized min-cuts. In Section \ref{sec:expander-decomposition} we show that every hypergraph has an expander decomposition—--i.e., a partition of the vertex set into parts which have good expansion such that the number of hyperedges intersecting multiple parts is small. In Section \ref{sec:trim-and-shave}, we define $\trim$ and $\shave$ operations and prove properties about these operations. 

We prove Theorem \ref{thm:number-of-hyperedges-in-mincuts} in Section \ref{sec:structural-theorem} as follows: We assume that there are no min-cuts of small size (i.e., of size at most $r-\log{(\lambda/4r)}/\log{n}$) and bound the number of hyperedges in the union of all min-cuts. For this, we find an expander decomposition and apply the $\trim$ and $\shave$ operations to each part of the decomposition. Since a hypergraph cannot have moderate-sized min-cuts, and there are no small-sized min-cuts by assumption, it follows that every min-cut has large size. We use this fact to show that the result of the $\trim$ and $\shave$ operations is a partition of $V$ such that (1) none of the parts intersect both sides of any min-cut and (2) the number of hyperedges crossing the parts satisfies the bound in Condition \ref{itm:few-edges} of the theorem. 

\subsection{No moderate-sized min-cuts}\label{sec:no-medium-cuts}
The following lemma is the main result of this section. It shows that there are no moderate-sized min-cuts. 

\begin{lemma}\label{cor:small-side-not-tiny-implies-small-side-large}
Let $G=(V,E)$ be an $r$-rank $n$-vertex hypergraph with connectivity $\lambda$ such that $\lambda\ge r2^{r+1}$. Let $(C, V \setminus C)$ be an arbitrary min-cut. If $\min\{|C|, |V \setminus C|\}> r - \log (\lambda/4r)/\log n$, then $\min\{|C|,|V \setminus C|\}\ge (\lambda/2)^{1/r}$.
\end{lemma}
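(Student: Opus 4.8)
The plan is to argue by contrapositive via a counting argument. Let $(C, V\setminus C)$ be a min-cut, and write $C$ for the smaller side, so $c := |C| = \min\{|C|, |V\setminus C|\}$. Suppose for contradiction that $r - \log(\lambda/4r)/\log n < c < (\lambda/2)^{1/r}$. I want to derive a contradiction by showing that the vertices of $C$ cannot all have degree at least $\lambda$ (which they must, since $\lambda$ is the connectivity, hence a lower bound on the minimum degree). Every hyperedge incident to a vertex $v \in C$ either lies entirely inside $C$ or crosses the cut $(C, V\setminus C)$; the number of crossing hyperedges is exactly $|\delta(C)| = \lambda$. So $\sum_{v\in C} d(v) = \sum_{v\in C} d_C(v) + (\text{contribution of crossing hyperedges to } C) \le r\cdot|E[C]| + (r-1)\lambda$, using that each hyperedge contributes at most $r$ incidences to $C$ if contained in $C$ and at most $r-1$ if it crosses (since it must have a vertex outside $C$ too). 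Wait — more carefully, a crossing hyperedge has at least one vertex in $C$ and at least one in $V\setminus C$, so it contributes between $1$ and $\min\{c, r-1\}$ incidences to $C$; I'll use the bound $c$ since $c$ may be less than $r-1$.

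The key structural observation, exactly as flagged in the technical overview, is that since the hypergraph has rank $r$ and $c \le |C|$, no hyperedge can contain all of $C$ unless $c \le r$; and in general $|E[C]| \le \binom{c}{2} + \binom{c}{3} + \cdots + \binom{c}{\min\{c,r\}}$ — but since $G$ is simple, $|E[C]| \le \sum_{j=2}^{\min\{c,r\}}\binom{c}{j} \le 2^c$. Hence $\lambda c \le \sum_{v\in C} d(v) \le r\cdot 2^c + (r-1)\lambda$ — no wait, I need the crossing contribution bounded by $c\lambda$ not $(r-1)\lambda$ when $c < r-1$; let me just use $\min\{c,r-1\}\lambda \le (r-1)\lambda$ as the crude bound but that gives $\lambda c \le r2^c + (r-1)\lambda$, i.e. $\lambda(c - r + 1) \le r 2^c$. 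This is only useful when $c \ge r$. For $c \ge r$ this yields $\lambda \le \frac{r2^c}{c-r+1} \le r2^c$ (using $c - r + 1 \ge 1$), so $2^c \ge \lambda/r$, giving $c \ge \log_2(\lambda/r)$. Hmm, that's a lower bound on $c$, not the upper bound I want — so this crude argument shows $c$ can't be too \emph{small}, consistent with the "no moderate-sized" conclusion only if I instead bound $|E[C]|$ much more tightly using that $C$ is small relative to $(\lambda/2)^{1/r}$: namely $|E[C]| \le \binom{c}{r} + \text{lower terms} = O(c^r)$, and more precisely $|E[C]| < c^r$. Then $\lambda(c-r+1) \le r c^r < r c^r$, and if additionally $c < (\lambda/2)^{1/r}$ then $c^r < \lambda/2$, so $\lambda(c - r+1) < r\lambda/2$, forcing $c - r + 1 < r/2$, i.e. $c < 3r/2$. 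That caps $c$ below $O(r)$ but I still need to push it down to $r - \log(\lambda/4r)/\log n$, so I must be more careful: bound the crossing contribution by $c\lambda$ (not $(r-1)\lambda$), giving $\lambda c \le rc^r + c\lambda$?? — that's vacuous. The right split: $\lambda c = \sum_{v\in C} d(v)$, and each of the $\lambda$ crossing hyperedges has at most $c$ vertices in $C$ but at least one vertex \emph{outside}, so at most $\min\{c, r-1\}$ in $C$; each contained hyperedge has at most $\min\{c,r\}$ vertices, but there are at most $\sum_{j=2}^{r}\binom{c}{j}$ of them. Using $\sum_{j\le r}\binom{c}{j} \le c^r$ (valid since $c\ge 2$, $r\ge 2$) and the crude incidence bound $r$: $\lambda c \le r c^r + (r-1)\lambda$. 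Hmm I keep landing here. The fix must be to instead bound $\sum_{v\in C} d_C(v) = \vol_C(C) \le r|E[C]|$ but then relate $|E[C]|$ to $c$ via the \emph{actual} combinatorial count including that hyperedges have size $\le r$: since each hyperedge in $E[C]$ is a subset of $C$ of size $\le r$, and there are at most $\binom{c}{\le r}$ such subsets, and $\binom{c}{\le r} \le (ec/r)^r$ roughly — for $c$ not much larger than $r$ this is $\le$ something like $2^c$ when $c\le 2r$. I expect the paper sets things up so that $\lambda c \le (r-1)\lambda + r\cdot(\text{number of subsets of } C \text{ of size between } 2 \text{ and } r)$, and then uses $c < (\lambda/2)^{1/r}$ to bound that count by roughly $\lambda/2$ times a polynomial in $r$, concluding $c - (r-1) \le$ (small), hence $c \le r - \log(\lambda/4r)/\log n$ after taking logs — the logarithmic slack coming precisely from comparing $c^r$-type growth against $\lambda$ on a $\log n$ scale since $c \le n$.

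\textbf{The main obstacle.} The delicate part is the precise bookkeeping that turns the inequality $\lambda\cdot c \le r\,|E[C]| + (r-1)\lambda$ (or its sharper variant) into the clean threshold $c \le r - \log(\lambda/4r)/\log n$ — this requires bounding $|E[C]|$ by something like $\binom{c}{\le r} \le c^r$ or tighter, using simplicity crucially, and then extracting the logarithmic term by writing $c = r - t$ and solving for $t$ in an inequality of the form $\lambda \le 4r\cdot n^{\,r-c}$ (so $t = r - c \ge \log(\lambda/4r)/\log n$). I'd want to double-check whether the bound on crossing-hyperedge incidences should be $r-1$ or $\min\{c, r-1\}$, and whether one needs $\lambda \ge r2^{r+1}$ exactly to absorb the $(r-1)\lambda$ term (it looks like that hypothesis is there precisely so that $(r-1)\lambda$ is a strictly-less-than-$\lambda c$ quantity once $c \ge r$). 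So the skeleton is: (1) lower bound $\vol(C) \ge \lambda c$; (2) upper bound $\vol(C) \le r|E[C]| + (r-1)\lambda$ by classifying incident hyperedges; (3) upper bound $|E[C]|$ by the number of size-$\le r$ subsets of $C$, using simplicity; (4) combine, take logarithms, and invoke $c \le n$ together with $\lambda \ge r2^{r+1}$ to get the stated threshold; the contradiction with the moderate-size hypothesis then proves the lemma.
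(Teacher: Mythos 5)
Your skeleton fails in the regime that actually produces the stated threshold. Your step (2), $\vol(C) \le r|E[C]| + (r-1)\lambda$, is what you want to contradict against $\vol(C) \ge c\lambda$, but when $c < r$ (the regime where $r - \log(\lambda/4r)/\log n$ is a meaningful bound below $r$) each crossing hyperedge contributes at most $c$ incidences to $C$, so the right-hand side is at least $c\lambda$ and the inequality is vacuous. You noticed this (``that's vacuous'') but never escaped it, because the proposal never introduces the case split on $|C| < r$ versus $|C| \ge r$, and that split is the spine of the paper's argument.

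The piece you are missing is the set $E_2 := \{e \in E : C \subsetneq e\}$ of hyperedges that \emph{properly contain} the whole of $C$. When $|C| = t < r$ this set can be nonempty, every such hyperedge is incident to every vertex of $C$, and by simplicity $|E_2| \le \sum_{i=1}^{r-t}\binom{n-t}{i} \le 2n^{r-t}$. \emph{This} is where the $\log n$ threshold comes from: demanding $2n^{r-t} < \lambda/(2r)$ is exactly $t < r - \log(\lambda/4r)/\log n$. You instead try to extract the $n^{r-c}$ term from $|E[C]|$, but $E[C]$ (hyperedges inside $C$) has nothing to do with $n$ — the $n$-dependence enters only through hyperedges straddling $C$ and the outside that cover all of $C$, which cannot exist once $t \ge r$. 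Also, rather than summing volumes, the paper averages: for $E_3$ (the crossing hyperedges that do not contain all of $C$), each has at most $t-1$ vertices in $C$ (Case 1) or at most $r-1$ vertices in $C$ (Case 2), so a random $v \in C$ sees at most a $\frac{t-1}{t}$ (resp. $\frac{r-1}{t}$) fraction of $E_3$, yielding a single vertex whose degree is strictly below $\lambda$ once $|E_1|, |E_2|$ are added. Your Case-2-style bound ($t \ge r$, $E_2 = \emptyset$, $|E[C]| \lesssim t^r$, $t < (\lambda/2)^{1/r}$) is roughly the right idea for that half, but needs the sharper $\frac{r-1}{t}$ factor (not $\frac{r-1}{r}$) to close. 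Without the $t<r$ case and the $E_2$ count there is no path to the logarithmic threshold, so as written the proof does not go through.
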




\begin{proof}
Without loss of generality, let $|C|=\min\{|C|, |V \setminus C|\}$. Let $t:=|C|$ and $s:=r-\log{(\lambda/4r)}/\log{n}$. We know that $s< t$. 
Suppose for contradiction that $t < (\lambda/2)^{1/r}$. We will show that there exists a vertex $v$ with $|\delta(v)|<\lambda$, thus contradicting the fact that $\lambda$ is the min-cut capacity. 
We classify the hyperedges of $G$ which intersect $C$ into three types as follows: 
\begin{align*}
    E_1 &:= \{e \in E \colon e \subseteq C \}, \\
    E_2 &:= \{e \in E \colon C \subsetneq e \}, \text{ and} \\
    E_3 &:= \{e \in E \colon \emptyset \neq e \cap C \neq C \text{ and } e \cap (V \setminus C) \neq \emptyset \}.
\end{align*}
We note that $E_1$ is the set of hyperedges that are fully contained in $C$, $E_2$ is the set of hyperedges which contain $C$ as a proper subset, and $E_3$ is the set of hyperedges that intersect $C$ but are not in $E_1\cup E_2$. 
We distinguish two cases.

\noindent \textbf{Case 1:} Suppose $t < r$. Then, the number of hyperedges that can be fully contained in $C$ is at most $2^r$, so $|E_1| \leq 2^r$. Since $(C, V \setminus C)$ is a min-cut, we have that $\lambda = |\delta(C)| = |E_2| + |E_3|$. We note that the number of hyperedges of size $i$ that contain all of $C$ is at most $\binom{n-t}{i-t}$. Hence,  
\[
|E_2| \leq \sum_{i=t+1}^{r} \binom{n-t}{i-t} = \sum_{i=1}^{r-t} \binom{n-t}{i} \leq \sum_{i=1}^{r-t} n^i \leq 2n^{r-t}.
\]
Since each hyperedge in $E_3$ contains at most $t-1$ of the vertices of $C$, a uniform random vertex of $C$ is in such a hyperedge with probability at most $(t-1)/t$. Therefore, if we pick a uniform random vertex from $C$, the expected number of hyperedges from $E_3$ incident to it is at most $(\frac{t-1}{t})|E_3|$. Hence, there exists a vertex $v\in C$ such that 
\[|\delta(v) \cap E_3| \leq \left(\frac{t-1}{t}\right)|E_3| \leq \left(\frac{t-1}{t}\right)|\delta(C)| \leq \left(\frac{r-1}{r}\right)\lambda.
\]

Combining the bounds for $E_1$, $E_2$, and $E_3$, we have that
\begin{align*}
    |\delta(v)| &= |\delta(v) \cap E_1| + |E_2| + |\delta(v) \cap E_3| \\
    &\leq |E_1| + |E_2| + |\delta(v) \cap E_3| \\
    &\leq 2^r + 2n^{r-t} + \left(\frac{r-1}{r}\right)\lambda \\
    &< 2^r + 2n^{r-s} + \left(\frac{r-1}{r}\right)\lambda \quad \quad \quad \quad \text{(since $t> s$)} \\
    &= 2^r + \frac{\lambda}{2r} + \left(\frac{r-1}{r}\right)\lambda \quad \quad \text{(since $\lambda = 4rn^{r-s}$)}\\
    &= \lambda +\frac{r2^{r+1}-\lambda}{2r}\\
    &\leq \lambda. \quad \quad \quad \quad \quad \quad \text{(since $\lambda \ge r2^{r+1}$)}
\end{align*}
Consequently, $|\delta(v)| < \lambda$, contradicting the fact that $\lambda$ is the min-cut capacity. 


\noindent \textbf{Case 2:} Suppose $t \geq r$. Then, no hyperedge can contain $C$ as a proper subset, so $|E_2| = 0$. For each $v\in C$, the number of hyperedges $e$ of size $i$ such that $v\in e\subseteq C$ is at most $\binom{t-1}{i-1}$. Hence, 
\[
|\delta(v) \cap E_1| \leq \sum_{i=2}^r \binom{t-1}{i-1} = \sum_{i=1}^{r-1} \binom{t-1}{i} \leq \sum_{i=1}^{r-1} t^i  \leq 2t^{r-1}.
\]
Since each hyperedge in $E_3$ contains at most $r-1$ of the vertices of $C$, a random vertex of $C$ is in such a hyperedge with probability at most $(r-1)/t$. Therefore, if we pick a random vertex from $C$, the expected number of hyperedges from $E_3$ incident to it is at most $(\frac{r-1}{t})|E_3|$. Hence, there exists a vertex $v\in C$ such that 
\[
|\delta(v) \cap E_3| \leq \left(\frac{r-1}{t}\right)|E_3| \leq \left(\frac{r-1}{t}\right)\lambda.
\]

Since $t < (\lambda/2)^{1/r}$ and $t \geq r$, we have that $2t^r/\lambda < t-r+1$. Combining this with our bounds on $|\delta(v) \cap E_1|$ and $|\delta(v) \cap E_3|$, we have that 
\begin{align*}
    |\delta(v)| &= |\delta(v) \cap E_1| + |\delta(v) \cap E_3|
    \leq 2t^{r-1} + \left(\frac{r-1}{t}\right)\lambda 
    = \left(r-1 + \frac{2t^r}{\lambda}\right)\frac{\lambda}{t}
    < \lambda.
\end{align*}
Consequently, $|\delta(v)| < \lambda$, contradicting the fact that $\lambda$ is the min-cut capacity.
\end{proof}

\subsection{Hypergraph expander decomposition}\label{sec:expander-decomposition}
In this section, we prove the existence of an expander decomposition and also design an algorithm to construct it efficiently in constant rank hypergraphs.

\begin{lemma}[Existential hypergraph expander decomposition]\label{lem:exp-decomposition-existence}
For every $r$-rank $n$-vertex hypergraph $G=(V,E)$ with $p:=\sum_{e\in E}|e|$ and every positive real value $\phi \le 1/(r-1)$, there exists a partition $\{X_1, \ldots, X_k\}$ of the vertex set $V$ such that the following hold:
\begin{enumerate}
    \item $\sum_{i=1}^k |\delta(X_i)| = O(r\phi p\log n)$, and 
    
    \item For every $i \in [k]$ and every non-empty set $S \subset X_i$, we have that
    \[
    |E^o(S, X_i \setminus S)| \geq \phi \cdot \min\{\vol(S), \vol(X_i \setminus S)\}. 
    \]
\end{enumerate}
\end{lemma}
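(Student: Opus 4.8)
The plan is to reduce the hypergraph statement to the known graph statement (Theorem \ref{thm:expander-decomp-exists}) by passing to an auxiliary multigraph. First I would build a graph $H$ on the same vertex set $V$: for each hyperedge $e \in E$, add to $H$ a clique on the vertex set $e$ — i.e. replace $e$ by the $\binom{|e|}{2}$ edges among its endpoints (keeping multiplicities, so $H$ is a multigraph). Since $G$ has rank $r$, each hyperedge contributes at most $\binom{r}{2} \le r^2/2$ edges, so the number of edges of $H$ is $m_H \le \binom{r}{2} m \le (r/2)\sum_{e}|e| = (r/2)p$ (in fact one can bound it more tightly, but $O(rp)$ is all we need). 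The key point is that this "clique expansion" distorts cuts and volumes by controlled factors: for any $S \subseteq V$, the degree $d_H(v)$ of a vertex in $H$ satisfies $d_G(v) \le d_H(v) \le (r-1)d_G(v)$ (each hyperedge through $v$ contributes between $1$ and $r-1$ clique-edges at $v$), so volumes in $H$ and $G$ agree up to a factor of $r-1$; and for a cut $(S, X\setminus S)$, each hyperedge in $E^o(S, X\setminus S)$ contributes at least one crossing clique-edge and at most $\binom{r}{2}$ of them, so $|E^o(S,X\setminus S)| \le |E_H(S, X\setminus S)| \le \binom{r}{2}|E^o(S,X\setminus S)|$.

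Next I would apply Theorem \ref{thm:expander-decomp-exists} to $H$ with an appropriately scaled parameter $\phi' := c_r \phi$ for a constant $c_r$ depending only on $r$ (to be pinned down below), obtaining a partition $\{X_1, \ldots, X_k\}$ of $V$ with (a) $\sum_i |\delta_H(X_i)| = O(\phi' m_H \log n)$ and (b) for every $i$ and every non-empty $S \subseteq X_i$, $|E_H(S, X_i\setminus S)| \ge \phi' \min\{\vol_H(S), \vol_H(X_i\setminus S)\}$. I claim this same partition works for $G$. For Conclusion 1: every hyperedge crossing $(X_i, V\setminus X_i)$ in $G$ contributes at least one edge crossing in $H$, hence $\sum_i |\delta_G(X_i)| \le \sum_i |\delta_H(X_i)| = O(\phi' m_H \log n) = O(\phi \cdot r \cdot rp \log n) = O(r^2 \phi p \log n)$; a slightly more careful accounting (each hyperedge crossing $X_i$ is double-counted the same way in $G$ and $H$) gives the stated $O(r\phi p \log n)$ — I would choose $c_r$ so the extra $r$-factors land correctly. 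For Conclusion 2: given non-empty $S \subsetneq X_i$, WLOG $\vol_H(S) \le \vol_H(X_i \setminus S)$, so by (b), $|E_H(S, X_i\setminus S)| \ge \phi' \vol_H(S) \ge \phi' \vol_G(S)$ (using $\vol_H \ge \vol_G$ pointwise). On the other hand $|E_H(S, X_i\setminus S)| \le \binom{r}{2}|E^o(S, X_i\setminus S)|$, so $|E^o(S, X_i\setminus S)| \ge (\phi'/\binom{r}{2})\vol_G(S) = (\phi'/\binom{r}{2})\min\{\vol_G(S), \vol_G(X_i\setminus S)\}$ — wait, I need the min over $G$-volumes, and I only know $S$ has the smaller $H$-volume; but since $\vol_H$ and $\vol_G$ agree up to the factor $(r-1)$, I get $\vol_G(S) \ge \frac{1}{r-1}\vol_H(S) \ge \frac{1}{r-1}\min\{\vol_H(S),\vol_H(X_i\setminus S)\} \ge \frac{1}{r-1}\min\{\vol_G(S),\vol_G(X_i\setminus S)\}$, which is enough after absorbing the $\frac{1}{r-1}$ into $c_r$. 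Setting $c_r$ a suitable $\mathrm{poly}(r)$ constant makes the final expansion factor at least $\phi$, and the hypothesis $\phi \le 1/(r-1)$ guarantees $\phi'$ is still $\le 1$ so Theorem \ref{thm:expander-decomp-exists} applies.

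The main obstacle I anticipate is bookkeeping the $\mathrm{poly}(r)$ factors so that Conclusion 1 comes out as $O(r\phi p\log n)$ rather than $O(r^2\phi p\log n)$ or worse; this requires being careful about \emph{which} quantity we bound $m_H$ against (bounding it against $p$, not against $m$, and noting $\sum_e \binom{|e|}{2} \le \frac{r-1}{2}\sum_e |e| = \frac{r-1}{2}p$ saves one factor) and about the fact that the crossing-hyperedge count in $G$ is itself bounded by the crossing-edge count in $H$ with no extra multiplicative loss. A secondary subtlety is confirming that the clique-expansion volume distortion is exactly the factor $(r-1)$ and not $r$ — a vertex $v$ in a hyperedge $e$ of size $|e| \le r$ gains $|e| - 1 \le r-1$ clique-edges, so the bound is tight and correct. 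None of this is deep, but it must be done precisely; alternatively, one could prove the lemma from scratch by mimicking the ball-growing / sparse-cut recursion of \cite{sw19} directly on the hypergraph, using Proposition \ref{prop:crossing-edges-volume-relationship} in place of the graph volume bound, which avoids the reduction but duplicates a standard argument.
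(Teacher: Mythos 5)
Your reduction is in the right spirit but uses the wrong gadget. The paper replaces each hyperedge $e$ by a \emph{star} on $|e|-1$ edges (pick an arbitrary center $v \in e$ and add $\{u,v\}$ for every $u \in e \setminus \{v\}$), not by a clique. That choice is tighter in exactly the two places where your clique expansion leaks factors of $r$, and both leaks matter.

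First, the edge count: the star expansion yields $|E'| = \sum_e(|e|-1) \le p$, whereas your clique graph has $m_H = \sum_e \binom{|e|}{2}$, which is $\Theta(rp)$ when the hyperedges have size $\Theta(r)$ --- already a factor $\Theta(r)$ worse. Second, the crossing-edge distortion: in the star graph a crossing hyperedge spawns at most $r-1$ crossing star-edges, giving $|E^o_G(S,X_i\setminus S)| \ge \frac{1}{r-1}|E_{G'}(S,X_i\setminus S)|$; your clique graph allows up to $\lfloor r/2\rfloor\lceil r/2\rceil = \Theta(r^2)$ crossing clique-edges per hyperedge. To recover Conclusion~2 from the clique graph you therefore need $\phi' \ge \Theta(r^2)\,\phi$ (your $c_r$), and feeding that into Conclusion~1 gives $O(\phi' m_H \log n) = O(r^3 \phi p \log n)$, not the claimed $O(r\phi p\log n)$. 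Your remark that ``more careful accounting'' recovers the stated bound is not substantiated, and I do not see how it can be made to work for the clique gadget.

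There is a second, more serious problem: Theorem~\ref{thm:expander-decomp-exists} requires its conductance parameter to satisfy $\phi' \le 1$. With $\phi' = \Theta(r^2)\phi$, the hypothesis $\phi \le 1/(r-1)$ only guarantees $\phi' = \Theta(r)$, which exceeds $1$ as soon as $r \ge 3$. So for $\phi$ near the top of the allowed range you cannot even invoke the graph theorem. The star gadget is calibrated exactly: set $\phi' = (r-1)\phi$, and $\phi \le 1/(r-1)$ gives $\phi' \le 1$, so the hypothesis of Theorem~\ref{thm:expander-decomp-exists} is satisfied precisely when the hypothesis of the lemma is; Conclusion~1 then reads $O(\phi'|E'|\log n) = O((r-1)\phi\, p\log n) = O(r\phi p\log n)$ with nothing to spare.

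A smaller local point: your chain $\vol_G(S) \ge \frac{1}{r-1}\vol_H(S) \ge \cdots$ is needlessly lossy. Since $\vol_H \ge \vol_G$ pointwise, once $S$ is (without loss of generality) the side of smaller $H$-volume you get directly $\vol_H(S) \ge \vol_G(S) \ge \min\{\vol_G(S),\vol_G(X_i\setminus S)\}$, with no extra $(r-1)$. But that only shaves one factor off $c_r$ and rescues neither issue above; switching to the star gadget does.
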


\begin{remark}
We note that for $\phi = 1$, the trivial partition of vertex set where each part is a single vertex satisfies both conditions.
\end{remark}
\begin{proof}
We prove this by reducing to multigraphs and then using the existence of expander decomposition in multigraphs as stated in \Cref{thm:expander-decomp-exists}.
We create a graph $G' = (V, E')$ from $G$ as follows: For each $e \in E$, pick $v \in e$ arbitrarily, and replace $e$ with the set of edges $\{\{u,v\} : u \in e \setminus \{v\} \}$. That is, we replace each hyperedge of $G$ with a star. Consequently, each hyperedge $e$ is replaced by 
$|e|-1$ edges. Therefore, $|E'| \leq \sum_{e \in E} |e| = p$.  Furthermore, for every $X \subseteq V$, we have that $\vol_G(X) \leq \vol_{G'}(X)$ and $|\delta_G(X)| \leq |\delta_{G'}(X)|$. Also, for every $S \subseteq X$, we have that $|E_{G}^o(S,X \setminus S)| \geq \frac{1}{r-1}|E_{G'}(S, X \setminus S)|$.

Now, let $\{X_1, \ldots, X_k\}$ be the partition obtained by using \Cref{thm:expander-decomp-exists} on input graph $G'$ with parameter $\phi'= (r-1)\phi$. Then, 
\[
\sum_{i=1}^k |\delta_G(X_i)| \leq \sum_{i=1}^k |\delta_{G'}(X_i)| = O(\phi'|E'|\log n) = O(r\phi p\log n). 
\]
Moreover, for every $i\in [k]$ and every non-empty set $S\subset X_i$, we have that 
\begin{align*}
|E_G^o(S, X_i \setminus S)| 
&\geq \left(\frac{1}{r-1}\right)|E_{G'}(S, X_i \setminus S)| \\
&\geq \left(\frac{\phi'}{r-1}\right)\min\{\vol_{G'}(S), \vol_{G'}(X_i \setminus S)\} \\
&\geq \phi \min\{\vol_G(S), \vol_G(X_i \setminus S)\}.
\end{align*}

\end{proof}

Next, we show that it is possible to efficiently find an expander decomposition with almost the same guarantees as in the existential theorem. 
\begin{lemma}[Algorithmic hypergraph expander decomposition]
\label{lem:efficient-expander-decomp}
There exists a deterministic algorithm that takes as input an $r$-rank $n$-vertex hypergraph $G=(V,E)$ with $p := \sum_{e \in E} |e|$ and a positive real value $\phi \leq 1/(r-1)$
and runs in time $O(p^{1+o(1)})$
to return a partition $\{X_1, \dots, X_k\}$ of the vertex set $V$ such that the following hold:
\begin{enumerate}
    \item \label{itm:few-cross-edges-xdecomp} 
    $\sum_{i=1}^k \left|\delta(X_i)\right| = O(r\phi p^{1+o(1)})$, and 
    \item For every $i \in [k]$, and for every non-empty set $S \subset X_i$, we have that $|E^o(S, X_i \setminus S)| \geq \phi \cdot \min\{\vol(S), \vol(X_i \setminus S)\}$.
\end{enumerate}
\end{lemma}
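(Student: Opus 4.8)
The plan is to mimic the proof of Lemma~\ref{lem:exp-decomposition-existence} essentially verbatim, replacing the existential graph expander decomposition (Theorem~\ref{thm:expander-decomp-exists}) with its algorithmic counterpart (Theorem~\ref{thm:CGLNPS-expander-decomp}), and tracking the running time of each step. First I would build the auxiliary multigraph $G'=(V,E')$ from $G$ by the same star-replacement as before: for each hyperedge $e\in E$, pick an arbitrary $v\in e$ and add the $|e|-1$ edges $\{\{u,v\}:u\in e\setminus\{v\}\}$. This takes $O(p)$ time and gives $|E'|\le \sum_{e\in E}(|e|-1)\le p$. As recorded in the proof of Lemma~\ref{lem:exp-decomposition-existence}, this construction satisfies: (a) $\vol_G(X)\le \vol_{G'}(X)$ for every $X\subseteq V$; (b) $|\delta_G(X)|\le|\delta_{G'}(X)|$ for every $X$; and (c) for every $X$ and every non-empty $S\subsetneq X$, since each hyperedge contributes at most $|e|-1\le r-1$ edges to its star, $|E_{G'}(S,X\setminus S)|\le (r-1)\,|E^o_G(S,X\setminus S)|$.

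Next I would run \textsc{GraphExpanderDecomp}$(G',\phi')$ from Theorem~\ref{thm:CGLNPS-expander-decomp} with $\phi':=(r-1)\phi$; note $\phi'\le 1$ because $\phi\le 1/(r-1)$, so the hypothesis of that theorem is satisfied, and $G'$ being a multigraph is permitted. This returns, in time $O(|E'|^{1+o(1)})=O(p^{1+o(1)})$, a partition $\{X_1,\dots,X_k\}$ of $V$ with $\sum_{i=1}^k|\delta_{G'}(X_i)|=O(\phi'|E'|^{1+o(1)})$ and with $|E_{G'}(S,X_i\setminus S)|\ge\phi'\min\{\vol_{G'}(S),\vol_{G'}(X_i\setminus S)\}$ for every $i\in[k]$ and every non-empty $S\subsetneq X_i$. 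Adding the $O(p)$ cost of building $G'$, the total running time is $O(p^{1+o(1)})$.

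Finally I would translate the two guarantees back to $G$. For the cut bound, property (b) gives $\sum_{i=1}^k|\delta_G(X_i)|\le\sum_{i=1}^k|\delta_{G'}(X_i)|=O((r-1)\phi\cdot p^{1+o(1)})=O(r\phi p^{1+o(1)})$, which is Conclusion~\ref{itm:few-cross-edges-xdecomp}. For the expansion bound, combining (c), the $G'$-expansion guarantee, and (a) yields, for every $i\in[k]$ and every non-empty $S\subsetneq X_i$,
\[
|E^o_G(S,X_i\setminus S)|\ \ge\ \tfrac{1}{r-1}|E_{G'}(S,X_i\setminus S)|\ \ge\ \tfrac{\phi'}{r-1}\min\{\vol_{G'}(S),\vol_{G'}(X_i\setminus S)\}\ \ge\ \phi\min\{\vol_G(S),\vol_G(X_i\setminus S)\},
\]
which is the second conclusion.

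I do not expect a serious obstacle here: the statement is the algorithmic shadow of Lemma~\ref{lem:exp-decomposition-existence}, and the reduction is the same. The only points needing care are (i) verifying $\phi'=(r-1)\phi\le 1$ so that Theorem~\ref{thm:CGLNPS-expander-decomp} actually applies — this is precisely why the hypothesis $\phi\le 1/(r-1)$ is imposed; (ii) noting that $G'$ may be a genuine multigraph, which Theorem~\ref{thm:CGLNPS-expander-decomp} explicitly allows; and (iii) confirming that the $p^{1+o(1)}$ running time, as well as the $o(1)$ in Conclusion~\ref{itm:few-cross-edges-xdecomp}, is inherited correctly from $|E'|\le p$ together with the subpolynomial overhead in Theorem~\ref{thm:CGLNPS-expander-decomp}.
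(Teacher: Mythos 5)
Your proof matches the paper's own argument essentially verbatim: construct the star-replacement multigraph $G'$, invoke the algorithmic graph expander decomposition of Theorem~\ref{thm:CGLNPS-expander-decomp} with $\phi'=(r-1)\phi$, and transfer the two guarantees back to $G$ using the same three inequalities recorded in the proof of Lemma~\ref{lem:exp-decomposition-existence}. The running-time accounting and the checks you flag (that $\phi'\le 1$, that multigraphs are allowed, that the $p^{1+o(1)}$ overhead is inherited) are exactly the points the paper addresses.
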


\begin{proof}
The proof is similar to that of \Cref{lem:exp-decomposition-existence}---via a reduction to graphs and using the result of \cite{cglnps20} mentioned in Theorem \ref{thm:CGLNPS-expander-decomp} (which is a constructive version of \Cref{thm:expander-decomp-exists}). We note that Theorem \ref{thm:CGLNPS-expander-decomp} gives a worse guarantee on $\sum_{i=1}^k |\delta_G(X_i)|$ than that in \Cref{thm:expander-decomp-exists}. Hence, we revisit the calculations for Condition \ref{itm:few-cross-edges-xdecomp} below. The notations that we use below are borrowed from the proof of \Cref{lem:exp-decomposition-existence}.

Let $\{X_1, \ldots, X_k\}$ be the partition returned by algorithm \textsc{GraphExpanderDecomp} (from Theorem \ref{thm:CGLNPS-expander-decomp}) on input graph $G'$ with parameter $\phi'= (r-1)\phi$. Then, 
\[
\sum_{i=1}^k |\delta_G(X_i)| \leq \sum_{i=1}^k |\delta_{G'}(X_i)| \leq  = O\left(\phi'|E'|^{1+o(1)}\right) = O\left(r\phi p^{1+o(1)}\right). 
\]

The graph $G'$ in the proof of \Cref{lem:exp-decomposition-existence} (also used in this proof) can be constructed in time $O(p)$. The algorithm \textsc{GraphExpanderDecomp} from Theorem \ref{thm:CGLNPS-expander-decomp} on input graph $G'$ runs in time $O(p^{1+o(1)})$. Hence, the total running time of this algorithm is $O(p^{1+o(1)})$.
\end{proof}

\subsection{Trim and Shave operations}\label{sec:trim-and-shave}
In this section, we define the trim and shave operations and prove certain useful properties about them. 
\begin{definition}\label{def:trim-and-shave}
Let $G=(V,E)$  be an $r$-rank hypergraph. For $X\subseteq V$, let 
\begin{enumerate}
    \item \trim$(X)$ be the set obtained by repeatedly removing from $X$ a vertex $v$ with $d_X(v) < d(v)/2r$ until no such vertices remain, 
    \item \shave$(X)$ $:= \{v \in X \colon  d_X(v) > (1-1/r^2)d(v) \}$, and 
    \item \shave$_k(X) := \shave(\shave \cdots(\shave(X)))$ be the result of applying $k$ consecutive shave operations to $X$.
\end{enumerate}
\end{definition}
We emphasize that trim is an adaptive operation while shave is a non-adaptive operation and $\shave_k(X)$ is a sequence of shave operations. We now prove certain useful properties about these operations. In the rest of this subsection, let $G=(V,E)$ be an $r$-rank $n$-vertex hypergraph with minimum degree $\delta$ and min-cut capacity $\lambda$. 
The following claim shows that the $\trim$ operation on a set $X$ that has small intersection with a min-cut further reduces the intersection. 
\begin{claim}\label{clm:trim-bound}
Let $(C, V \setminus C)$ be a min-cut. Let $X$ be a subset of $V$ and  $X':=\trim(X)$. 
If $\min\{|X \cap C|, |X \cap (V \setminus C)|\} \leq (\delta/6r^2)^{1/(r-1)}$, then 
\[
\min\{|X'\cap C|, |X'\cap (V \setminus C)|\}\le 3r^2.
\]
\end{claim}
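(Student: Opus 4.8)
\textbf{Proof plan for Claim~\ref{clm:trim-bound}.}
The plan is to assume without loss of generality that $|X \cap C| = \min\{|X \cap C|, |X \cap (V\setminus C)|\} \le (\delta/6r^2)^{1/(r-1)}$ and track what the $\trim$ operation does to the two pieces $X \cap C$ and $X \cap (V\setminus C)$. Set $A := X' \cap C$ and $B := X' \cap (V\setminus C)$; we want to bound $\min\{|A|,|B|\}$ by $3r^2$. Since $\trim$ only removes vertices, $A \subseteq X \cap C$, so $|A| \le (\delta/6r^2)^{1/(r-1)}$ already, and all the work goes into showing $|A| \le 3r^2$ (if that fails, we will instead argue the symmetric statement forcing $|B|$ small—but the natural case is that the small side stays small, so I expect to bound $|A|$ directly).

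The key mechanism is that every vertex $v$ surviving the $\trim$ operation has $d_{X'}(v) \ge d(v)/2r \ge \delta/2r$ by definition of $\trim$ (no removable vertex remains). Now I would estimate $\vol_{X'}(A) = \sum_{v \in A} d_{X'}(v) \ge |A|\cdot \delta/2r$ from below. For the upper bound, I would apply Proposition~\ref{prop:crossing-edges-volume-relationship} with $T = A$ and $S = X'$: this gives $|E(A, X' \setminus A)| \ge \frac{1}{r-1}(\vol_{X'}(A) - r|E[A]|)$. Since $A \subseteq C$, every hyperedge of $E(A, X'\setminus A)$ meets both $C$ and $V \setminus C$ (it has a vertex in $A \subseteq C$ and a vertex in $X' \setminus A$, and since it is not contained in $A$ while being contained in $X'$, and $A = X' \cap C$, it must have a vertex outside $C$), hence $E(A, X'\setminus A) \subseteq \delta(C)$ and so $|E(A, X'\setminus A)| \le \lambda \le \delta$ (using $\lambda \le \delta$, the standard fact that min-cut is at most min-degree). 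Also $|E[A]| \le \binom{|A|}{\le r}$-type bound; more usefully, since $A \subseteq C$ has $|A| \le (\delta/6r^2)^{1/(r-1)}$, I can bound $|E[A]| \le |A|^{r} \le |A| \cdot (\delta/6r^2)$, or crudely each hyperedge in $E[A]$ uses at least $2$ vertices of $A$ so $r|E[A]|$ is controlled by $\vol_{X'}(A)$ times a small constant—this is the routine counting I would not grind through here.

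Putting the bounds together: $\delta \ge \lambda \ge |E(A,X'\setminus A)| \ge \frac{1}{r-1}\big(|A|\,\delta/2r - r|E[A]|\big)$, and after absorbing the $r|E[A]|$ term (it is at most, say, $|A|\delta/(6r)$ using the size bound on $|A|$, leaving at least $|A|\delta/(3r)$ inside the parentheses), we get $\delta \ge \frac{1}{r-1}\cdot \frac{|A|\delta}{3r} \ge \frac{|A|\delta}{3r^2}$, hence $|A| \le 3r^2$, as desired. The main obstacle I anticipate is the bookkeeping in bounding $r|E[A]|$ against $\vol_{X'}(A)$: I need the hypothesis $|X\cap C| \le (\delta/6r^2)^{1/(r-1)}$ to be exactly strong enough that hyperedges fully inside $A$ (there can be up to $|A|^{r-1}$ of them through a single vertex) do not overwhelm the degree-contribution lower bound $\delta/2r$ from $\trim$; getting the constant $6r^2$ to propagate correctly to the final $3r^2$ is the delicate part, and I would also need to double-check the claim that every hyperedge in $E(A, X'\setminus A)$ genuinely crosses $(C, V\setminus C)$, which relies on $A$ being exactly $X' \cap C$ rather than a proper subset.
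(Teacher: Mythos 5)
Your proposal is correct and follows essentially the same approach as the paper: lower-bound $\vol_{X'}(X'\cap C)$ via the trim guarantee $d_{X'}(v)\ge d(v)/2r\ge\delta/2r$, apply Proposition~\ref{prop:crossing-edges-volume-relationship} with $T=X'\cap C$ and $S=X'$, upper-bound the crossing hyperedges by $\lambda\le\delta$ since they all lie in $\delta(C)$, and absorb the $r|E[X'\cap C]|\le r|X'\cap C|^r$ term using the hypothesis $|X\cap C|\le(\delta/6r^2)^{1/(r-1)}$. The arithmetic you sketch (getting $|A|\delta/(3r)$ and hence $|A|\le 3r(r-1)\le 3r^2$) matches the paper's rearrangement.
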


\begin{proof}
Without loss of generality, we assume that $|X\cap C|=\min\{|X\cap C|, |X\cap (V \setminus C)|\}$. 
We have that 
\begin{align*}
    \delta &\geq \lambda \\
    &\geq \left|E(X' \cap C, X' \cap (V \setminus C))\right| \\
    &\geq \left(\frac{1}{r-1}\right)(\vol_{G[X']}(X' \cap C) - r|E(X' \cap C, X' \cap C)|) \quad \quad \text{(By Proposition \ref{prop:crossing-edges-volume-relationship})}\\
    &\geq \left(\frac{1}{r-1}\right)\left(\frac{1}{2r}\delta|X' \cap C| - r|X' \cap C|^r\right).
\end{align*}
The last inequality above is by definition of \trim. Rearranging the above, we obtain that
\[
r|X' \cap C|^r \geq \left(\frac{1}{2r}|X' \cap C| - (r-1)\right)\delta.
\]
Dividing by $|X' \cap C|$, we have that
\[
r|X' \cap C|^{r-1} \geq \left(\frac{1}{2r} - \frac{(r-1)}{|X' \cap C|}\right)\delta.
\]
We note that $|X' \cap C| \leq |X \cap C| = \min\{|X \cap C|, |X \cap (V \setminus C)|\} \leq (\delta/6r^2)^{1/(r-1)}$. Therefore,  
\[
\frac{\delta}{6r} \geq \left(\frac{1}{2r}-\frac{r-1}{|X' \cap C|}\right)\delta.
\]
Dividing by $\delta$ and rearranging, we obtain that
\[
|X' \cap C| \leq 3r(r-1) \leq 3r^2.
\]

\end{proof}

The following claim shows that the $\shave$ operation on a set $X$ which has small intersection with a large-sized min-cut further reduces the intersection. 
\begin{claim}\label{clm:shave-bound}
Suppose 
$\lambda \ge r(4r^2)^r$. 
Let $(C, V \setminus C)$ be a min-cut with 
$\min\{|C|, |V \setminus C|\} \ge 4r^2$. 
Let $X'$ be a subset of $V$ and $X'' := \shave(X')$. 
If $0 < \min\{|X' \cap C|, |X' \cap (V \setminus C)|\} \leq 3r^2$, then 
\[
\min\{|X''\cap C|, \left|X''\cap (V \setminus C)\right|\}\le \min\{\left|X'\cap C\right|, \left|X'\cap (V \setminus C)\right|\}-1.
\]
\end{claim}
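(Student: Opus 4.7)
The plan is to take (by symmetry of the cut function) the smaller side to be $P := X' \cap C$, writing $k := |P| = \min\{|X' \cap C|, |X' \cap (V \setminus C)|\} \in [1, 3r^2]$, and setting $Q := X' \setminus C$ and $R := C \setminus X'$. I would then split on whether the shave removes any vertex of $P$. If some $v \in P$ satisfies $d_{X'}(v) \le (1 - 1/r^2)d(v)$, then $v \notin X''$, so $|X'' \cap C| \le k - 1$ and the conclusion is immediate. The bulk of the argument is to rule out the opposite case, in which every $v \in P$ satisfies $d_{X'}(v) > (1-1/r^2) d(v)$.

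Assuming we are in this main case, summing the survival condition over $v \in P$ gives $\vol(P) - \vol_{X'}(P) < \vol(P)/r^2$. Since any hyperedge intersecting both $P$ and $R$ lies outside $X'$ (because $R \cap X' = \emptyset$) and contributes at least one to the left-hand side, this yields $|E^o(P, R)| < \vol(P)/r^2$, and in particular $|E(P, R)| < \vol(P)/r^2$. Now I would bring in the assumption $\min\{|C|, |V\setminus C|\} \ge 4r^2$: since $|R| = |C| - k \ge 4r^2 - 3r^2 \ge 1$, the partition $(R, V \setminus R)$ is a valid cut and hence $|\delta(R)| \ge \lambda = |\delta(C)|$. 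A direct case analysis on how each hyperedge meets the four blocks $P, R, Q, V \setminus (C \cup X')$ shows that
\[
|\delta(R)| - |\delta(C)| = |E(P, R)| - |\{e : e \cap P \ne \emptyset,\ e \cap R = \emptyset,\ e \cap (V \setminus C) \ne \emptyset\}|,
\]
so $|E(P, R)|$ upper-bounds the bracketed cardinality, which in turn contains $E(P, Q)$ as a subset. This gives $|E(P, Q)| \le |E(P, R)| < \vol(P)/r^2$.

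With these two bounds in hand, I would decompose $\vol_{X'}(P) = \sum_{e \subseteq P}|e| + \sum_{e \in E(P, Q)} |e \cap P| \le r|E[P]| + (r-1)|E(P, Q)| \le r k^r + (r-1)\vol(P)/r^2$, using the crude estimate $|E[P]| \le k^r$. Combining this with the lower bound $\vol_{X'}(P) > (1 - 1/r^2)\vol(P)$ coming from the survival assumption and rearranging yields $\vol(P) < r^2 k^r/(r - 1)$. Since $d(v) \ge \lambda$ for every vertex $v$, we have $\vol(P) \ge k \lambda$, so
\[
\lambda < \frac{r^2 k^{r-1}}{r-1} \le \frac{r^2 (3r^2)^{r-1}}{r-1}.
\]
An elementary calculation shows that the right-hand side is strictly less than $r(4r^2)^r$ for every $r \ge 2$ (the ratio is $(3/4)^{r-1}/(4r(r-1)) < 1$), contradicting the hypothesis $\lambda \ge r(4r^2)^r$ and completing the proof.

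The main obstacle will be the step bounding $|E(P, Q)|$: the shave survival condition gives direct control only over edges incident to $P$ that \emph{leave} $X'$, whereas $E(P, Q)$ sits entirely inside $X'$. The bridge is the comparison cut $(R, V \setminus R)$, and the fact that its capacity is at least $\lambda$ requires precisely the large-side assumption $\min\{|C|, |V \setminus C|\} \ge 4r^2$—this is the single point at which that hypothesis enters the argument.
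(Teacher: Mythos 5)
Your proof is correct and follows essentially the same route as the paper's: both compare the min-cut $C$ against the candidate cut $R = C\setminus X'$, use the shave-survival condition $d_{X'}(v) > (1-1/r^2)d(v)$ to bound the hyperedges leaving $X'$ from $P = X'\cap C$, and reach a contradiction with $\lambda \ge r(4r^2)^r$ via the same volume count (your decomposition of $\vol_{X'}(P)$ is exactly the paper's Proposition \ref{prop:crossing-edges-volume-relationship}). The only difference is cosmetic: you use minimality of $C$ as a hypothesis ($|\delta(R)|\ge|\delta(C)|$) to bound $|E(P,Q)|$ and then contradict the lower bound on $\lambda$, whereas the paper runs the same inequalities in the opposite direction to conclude $|\delta(C)|>|\delta(R)|$ and contradict minimality directly.
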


\begin{proof}
Without loss of generality, we assume that $|X' \cap C| = \min\{|X' \cap C|, |X' \cap (V \setminus C)|\}$. Since $X'' \subseteq X'$, we have that $|X'' \cap C| \leq |X' \cap C|$. Thus, we only need to show that this inequality is strict. Suppose for contradiction that $|X''\cap C|=|X'\cap C|$. We note that $0 < |X'' \cap C| \leq 3r^2$.

Let $Z := X' \cap C = X'' \cap C$, and let $C' := C - X'$. 
Since $|C|\ge \min\{|C|, |V \setminus C|\} \ge 4r^2$ and $|Z| \leq 3r^2$, we know that $C'$ is nonempty. 

We note that $Z\subseteq X''$. By definition of $\shave$, we have that
\[
\vol_{X'}(Z) = \sum_{v \in Z} d_{X'}(v) > \sum_{v \in Z} \left(1-\frac{1}{r^2}\right)d(v) = \left(1-\frac{1}{r^2}\right)\vol(Z).
\]

We note that $|E(Z, V \setminus C)| \geq |E(Z, X' \setminus C)| = |E(Z, X' \setminus Z)|$, so by Proposition \ref{prop:crossing-edges-volume-relationship}, we have that
\begin{align*}
\left|E(Z, V \setminus C)\right| &\geq \left|E(Z, X' \setminus Z)\right| \\ &\geq \left(\frac{1}{r-1}\right)(\vol_{X'}(Z) - r|E[Z]|) \\ &> \left(\frac{1}{r-1}\right)\left(\left(1-\frac{1}{r^2}\right)\vol(Z)-r|Z|^r \right).
\end{align*}

We also know from the definition of \shave\ that
\[
|E(Z, C \setminus Z)| \leq \sum_{v \in Z} \left|E(\{v\}, C \setminus Z)\right| \leq \sum_{v \in Z} \frac{1}{r^2}d(v) = \frac{\vol(Z)}{r^2}.
\]

Therefore, using our assumption that 
$\lambda \ge r(4r^2)^r$, 
we have that
\begin{align*}
    \left|E(Z, (V \setminus C))\right| &> \left(\frac{1}{r-1}\right)\left(\left(1-\frac{1}{r^2}\right)\vol(Z) - r|Z|^r \right) \\
    &= \left(\frac{r^2-1}{r^2(r-1)}\right)\vol(Z) - \left(\frac{r}{r-1}\right)|Z|^r \\
    &= \frac{\vol(Z)}{r^2} + \frac{\vol(Z)}{r} - \left(\frac{r}{r-1}\right)|Z|^r \\
    &= \frac{\vol(Z)}{r^2} + \frac{\sum_{v \in Z} d(v)}{r} - r|Z|^r \\
    &\geq \frac{\vol(Z)}{r^2} + \frac{|Z|\lambda}{r} - r|Z|^r \\
    &\geq \frac{\vol(Z)}{r^2} + (4r^2)^r|Z| - r|Z|^r\\
    &\geq \frac{\vol(Z)}{r^2} + (4r^2)|Z|^r - r|Z|^r \\
    &\geq \frac{\vol(Z)}{r^2} \\
    &\geq |E(Z, C \setminus Z)|.
\end{align*}
We note that $E(Z, (V \setminus C))$ is the set of hyperedges which are cut by $C$ but not $C'$, while $E(Z, C \setminus Z)$ is the set of hyperedges which are cut by $C'$ but not $C$. Since we have shown that 
$|E(Z, V \setminus C)| > |E(Z, C \setminus Z)|$, 
we conclude that $|\delta(C)| > |\delta(C')|$. Since $(C, V\setminus C)$ is a min-cut and $\emptyset \neq C'\subseteq C\subsetneq V$, this is a contradiction.
\end{proof}

The following claim shows that the $\trim$ operation increases the cut capacity by at most a constant factor. 
\begin{claim}\label{clm:hyperedges-lost-to-trim}
Let $X$ be a subset of $V$ and $X' := \trim(X)$. Then, 
\begin{enumerate} 
\item $|E[X] - E[X']| \leq |\delta(X)|$, and 
\item $|\delta(X')| \leq 2|\delta(X)|$.
\end{enumerate}
\end{claim}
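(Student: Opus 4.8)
The plan is to realize $\trim$ as a sequence of single-vertex deletions and track the quantities $E[\cdot]$ and $\delta(\cdot)$ along the way. Let $v_1, \dots, v_\ell$ be the vertices deleted in order, put $X_0 := X$ and $X_j := X_{j-1} \setminus \{v_j\}$, so that $X_\ell = X'$, and recall that the deletion rule gives $d_{X_{j-1}}(v_j) < d(v_j)/(2r)$ for each $j$. I would prove Part 1 first, since Part 2 follows from it almost immediately.

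For Part 1, I would first note that $E[X] - E[X']$ decomposes along the deletions: the hyperedges leaving $E[X_{j-1}]$ when $v_j$ is removed are exactly those containing $v_j$ and contained in $X_{j-1}$, so $L := |E[X] - E[X']| = \sum_{j=1}^{\ell} d_{X_{j-1}}(v_j)$, and hence $L < D/(2r)$ where $D := \sum_{j=1}^{\ell} d(v_j)$. For each $j$ let $F_j$ be the set of hyperedges incident to $v_j$ that are \emph{not} contained in $X_{j-1}$; then $|F_j| = d(v_j) - d_{X_{j-1}}(v_j)$, so $\sum_j |F_j| = D - L$. The main step is a double-counting bound on $\sum_j |F_j|$. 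If $e \in F_j$ then $v_j \in e \cap X$, so either $e$ meets $V \setminus X$ --- in which case $e \in \delta(X)$ and $e$ can lie in $F_j$ for at most $|e \cap X| \le r-1$ indices $j$ --- or $e \subseteq X$ and $e$ meets $\{v_1, \dots, v_{j-1}\}$ --- in which case $e \in E[X] - E[X']$ and, enumerating the deleted vertices of $e$ in order of deletion, $e$ can lie in $F_j$ only for deleted vertices of $e$ other than the first, i.e.\ for at most $r-1$ indices $j$. (Any hyperedge not in $\delta(X) \cup (E[X] - E[X'])$ lies in no $F_j$.) Therefore $\sum_j |F_j| \le (r-1)\bigl(|\delta(X)| + L\bigr)$, i.e.\ $D - L \le (r-1)(|\delta(X)| + L)$, so $D \le rL + (r-1)|\delta(X)|$. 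Combining with $D \ge 2rL$ gives $rL \le (r-1)|\delta(X)|$, hence $L \le |\delta(X)|$, which is Part 1.

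For Part 2 I would establish the containment $\delta(X') \subseteq \delta(X) \cup (E[X] - E[X'])$: if $e \in \delta(X')$ then $e$ meets $X'$ and also meets $V \setminus X' = (V \setminus X) \cup \{v_1, \dots, v_\ell\}$; if it meets $V \setminus X$ then, since it also meets $X' \subseteq X$, it lies in $\delta(X)$, and otherwise $e \subseteq X$ while $e$ contains a deleted vertex, so $e \in E[X] - E[X']$. Taking cardinalities and invoking Part 1 gives $|\delta(X')| \le |\delta(X)| + |E[X] - E[X']| \le 2|\delta(X)|$.

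The step I expect to be the crux is the double-counting in Part 1: the multiplicity bound of $r-1$ (rather than the naive $|e| \le r$) on how often a hyperedge of $\delta(X)$ or of $E[X] - E[X']$ is charged is exactly what makes the final inequality $rL \le (r-1)|\delta(X)|$ close, so a looser count would not suffice. The remaining manipulations are routine.
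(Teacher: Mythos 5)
Your proof is correct. The paper's own proof uses a potential function $f(S) := \sum_{e \in \delta(S)} |e \cap S|$ and an amortized analysis: removing a vertex $v_j$ decreases $f$ by at least $\tfrac{1}{2}d(v_j)$ while adding at most $\tfrac{1}{2r}d(v_j)$ new hyperedges to the boundary, and combining this with $0 \le f \le (r-1)|\delta(X)|$ bounds the total number of hyperedges lost. You reach the same numerical bound $L \le \tfrac{r-1}{r}|\delta(X)|$ by a direct double-count: you make explicit the identity $L = \sum_j d_{X_{j-1}}(v_j)$ and the consequence $L < D/(2r)$ of the trim rule, then bound $D - L = \sum_j |F_j|$ by observing that each charged hyperedge, whether in $\delta(X)$ or in $E[X] - E[X']$, can appear in at most $r-1$ of the sets $F_j$. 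Unwinding the paper's potential-function step produces exactly your inequality $D \le rL + (r-1)|\delta(X)|$, so the two arguments rest on the same count; the difference is packaging. Your version avoids introducing the auxiliary potential, isolates transparently where the multiplicity $r-1$ comes from, and spells out for Part 2 the containment $\delta(X') \subseteq \delta(X) \cup (E[X] - E[X'])$, which the paper dispatches with ``the second part follows from the first.'' The paper's phrasing, on the other hand, presents it as a recognizable amortized charging scheme, which some readers may find quicker to parse.
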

\begin{proof}
We will prove the first part of the claim. The second part follows from the first part. 

For a set $S \subseteq V$, we define a potential function $f(S) := \sum_{e \in \delta(S)} |e \cap S|$. Since every hyperedge in $\delta(S)$ must include 
at least $1$ and at most $r-1$ 
vertices of $S$, we have that $|\delta(S)| \leq f(S) \leq (r-1)|\delta(S)|$. We note that the removal of a vertex $v$ from $X$ during the \trim\ process decreases $f(X)$ by at least $(1-\frac{1}{2r})d(v) - \frac{(r-1)}{2r}d(v) = \frac{1}{2}d(v)$. This is because, from the definition of \trim, at least a $1-\frac{1}{2r}$ fraction of $v$'s hyperedges were part of $\delta(X)$ before $v$ was removed, and each of these hyperedges losing a vertex causes $f(X)$ to decrease by one, while $v$'s removal can add at most $\frac{1}{2r}d(v)$ new hyperedges to $\delta(X)$, and each of these hyperedges can contribute at most $r-1$ to $f(X)$. We also note that when $v$ is removed, the cut capacity $|\delta(X)|$ increases by at most $|E(\{v\}, X \setminus \{v\})| \leq \frac{1}{2r}d(v)$. 
Therefore, the removal of a vertex $v$ from $X$ during the \trim\ process increases the cut capacity $|\delta(X)|$ by at most $\frac{1}{2r}d(v)$ and decreases the potential function $f(X)$ by at least $\frac{1}{2}d(v)$. Thus, for every additional hyperedge added to $\delta(X)$, the potential function must decrease by at least $\frac{d(v)/2}{d(v)/2r} = r$. Before the $\trim$ operation, we have $f(X) \leq (r-1)|\delta(X)|$, so the number of hyperedges that can be removed from $X$ at the end of the $\trim$ operation is at most $\frac{1}{r}(r-1)|\delta(X)| \leq |\delta(X)|$. 
\end{proof}

The following claim shows that the $\shave$ operation increases the cut capacity by at most a factor of $r^3$. 
\begin{claim}\label{clm:hyperedges-lost-to-shave}
Let $X'$ be a subset of $V$ and $X'' := \shave(X')$. Then
\begin{enumerate}
    \item $|E[X'] - E[X'']| \leq r^2(r-1)|\delta(X')|$, and
    \item $|\delta(X'')| \leq r^3|\delta(X')|$.
\end{enumerate} 
\end{claim}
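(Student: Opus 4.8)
The plan is to prove both parts by a direct charging argument, exploiting the fact that $\shave$ is a \emph{non-adaptive} operation (in contrast to $\trim$, whose analysis in Claim~\ref{clm:hyperedges-lost-to-trim} needed a potential-function argument because vertices are removed one at a time). Write $R := X' \setminus X''$ for the set of vertices removed by $\shave(X')$. The one observation that drives everything is this: by definition of $\shave$, every $v \in R$ satisfies $d_{X'}(v) \le (1 - 1/r^2)d(v)$, so at least $\tfrac{1}{r^2}d(v)$ of the hyperedges incident to $v$ fail to be contained in $X'$; each such hyperedge contains $v \in X'$ together with a vertex of $V \setminus X'$, hence lies in $\delta(X') = E(X', V\setminus X')$. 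Therefore $d(v) \le r^2\,|\delta(v) \cap \delta(X')|$ for every $v \in R$. I will also use twice that a hyperedge in $\delta(X')$ meets $X'$ in at least $1$ and at most $r-1$ vertices (since it also meets $V \setminus X'$), so in particular it meets $R \subseteq X'$ in at most $r-1$ vertices, giving $\sum_{v \in R} |\delta(v) \cap \delta(X')| \le (r-1)\,|\delta(X')|$.

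For the first part, every hyperedge $e \in E[X'] - E[X'']$ is contained in $X'$ but not in $X''$, so it contains some vertex of $R$; charging $e$ to such a vertex yields
\[
|E[X'] - E[X'']| \;\le\; \sum_{v \in R} d_{X'}(v) \;\le\; \sum_{v\in R} d(v) \;\le\; r^2 \sum_{v \in R} |\delta(v) \cap \delta(X')| \;\le\; r^2(r-1)\,|\delta(X')|.
\]

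For the second part, I would partition $\delta(X'')$ according to whether a hyperedge is contained in $X'$. If $e \in \delta(X'')$ and $e \subseteq X'$, then since $e$ also meets $V \setminus X''$ and is contained in $X'$, it meets $X' \setminus X'' = R$, so $e \in E[X'] - E[X'']$; there are at most $r^2(r-1)|\delta(X')|$ such hyperedges by the first part. If $e \in \delta(X'')$ and $e \not\subseteq X'$, then $e$ meets $V \setminus X'$ and also meets $X'' \subseteq X'$, hence $e \in \delta(X')$; there are at most $|\delta(X')|$ such hyperedges. Adding the two bounds gives $|\delta(X'')| \le \bigl(r^2(r-1) + 1\bigr)|\delta(X')| \le r^3|\delta(X')|$, the last inequality holding since $r \ge 1$.

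There is essentially no genuine obstacle here; the only thing requiring care is the bookkeeping with the definitions of Section~\ref{sec:prelims} — keeping straight the difference between $E[\cdot]$ (hyperedges fully inside a set) and $\delta(\cdot)=E(\cdot,\cdot)$ (hyperedges crossing a cut), and consistently using that a crossing hyperedge meets each side in between $1$ and $r-1$ vertices. This is precisely what produces the factor $r-1$ (rather than $r$) in the first part and lets the second part pay only one extra copy of $|\delta(X')|$.
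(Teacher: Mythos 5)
Your proof is correct and follows essentially the same charging argument as the paper: bound $|E[X']-E[X'']|$ by $\sum_{v\in R} d(v)$, use the shave condition to get $d(v)\le r^2|\delta(v)\cap\delta(X')|$, and observe that each hyperedge of $\delta(X')$ meets $X'$ in at most $r-1$ vertices. The paper states the second part as an immediate consequence of the first without spelling it out; your case split on whether a hyperedge of $\delta(X'')$ lies inside $X'$ is the same derivation made explicit.
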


\begin{proof}
We will prove the first part of the claim. The second part follows from the first part. 

We note that the removal of a vertex $v$ from $X'$ during the $\shave$ process decreases the number of hyperedges fully contained in $X$ by at most $d(v)$. We also know that if $v$ is removed, we must have $|\delta(v) \cap \delta(X')| \geq d(v)/r^2$, and therefore $d(v) \leq r^2|\delta(v) \cap \delta(X')|$. Thus, the total number of hyperedges removed from $X'$ in a single $\shave$ operation is at most $r^2 \sum_{v \in X'} |\delta(v) \cap \delta(X')| \leq r^2(r-1)|\delta(X')|$.
\end{proof}

\subsection{Proof of Theorem \ref{thm:number-of-hyperedges-in-mincuts}}\label{sec:structural-theorem}
In this section, we restate and prove Theorem \ref{thm:number-of-hyperedges-in-mincuts}.
\numEdgesMincuts*
\begin{proof}
Suppose the first conclusion does not hold. Then, by Lemma \ref{cor:small-side-not-tiny-implies-small-side-large}, the smaller side of every min-cut has size at least $(\lambda/2)^{1/r} \ge 4r^2$.  Let $(C, V \setminus C)$ be an arbitrary min-cut. We use Lemma \ref{lem:exp-decomposition-existence} with $\phi = (6r^2/\lambda)^{1/(r-1)}$ to get an expander decomposition $\mathcal{X}=\{X_1, \ldots, X_k\}$. We note that $\phi\le 1/(r-1)$ holds by the assumption that $\lambda \ge r(4r^2)^r$. For $i \in \{1, \dots, k\}$,  let $X'_i := \trim(X_i)$ and $X''_i := \shave_{3r^2}(X'_i)$. 

Let $i\in [k]$. By the definition of the expander decomposition and our choice of $\phi = (6r^2/\lambda)^{1/(r-1)}$, we have that
\begin{align*}
\lambda &\geq |E^o(X_i \cap C, X_i \cap (V \setminus C))| \\
&\geq \left(\frac{6r^2}{\lambda}\right)^{\frac{1}{r-1}} \min \{\vol(X_i \cap C), \vol(X_i \setminus C)\} \\
&\geq \left(\frac{6r^2}{\lambda}\right)^{\frac{1}{r-1}} \delta \min \{|X_i \cap C|, |X_i \setminus C|\}.
\end{align*}
Thus, $\min\{|X_i \cap C|, |X_i \setminus C| \} \leq (\lambda/\delta) (\lambda/6r^2)^{1/(r-1)} \leq (\lambda/6r^2)^{1/(r-1)} \leq (\delta/6r^2)^{1/(r-1)}$.

Therefore, by Claim \ref{clm:trim-bound}, we have that $\min\{|X'_i \cap C|, |X'_i \cap (V \setminus C)|\} \leq 3r^2$. 
We recall that $\lambda \ge r(4r^2)^r$ and every min-cut has size at least $4r^2$. 
By $3r^2$ repeated applications of Claim \ref{clm:shave-bound}, we have that $\min\{|X''_i \cap C|, |X''_i \cap (V \setminus C)|\} = 0$.

Let $\mathcal{X}'' := \{X''_1, \dots, X''_k\}$. 
Since $\min\{|X''_i \cap C|, |X''_i \cap (V \setminus C)|\} = 0$ for every min-cut $(C, V\setminus C)$ and every $X''_i\in \mathcal{X}''$, it follows 
that no hyperedge crossing a min-cut is fully contained within a single part of $\mathcal{X}''$. 
Thus, it suffices to show that $|E - \bigcup_{i=1}^k E[X''_i]|$ is small---i.e., the number of hyperedges not contained in any of the parts of $\mathcal{X}''$ is $\tilde{O}_r(m/\lambda^{\frac{1}{r-1}})$. 

By Claim \ref{clm:hyperedges-lost-to-trim}, we have that $|E[X_i] - E[X_i']| \leq 2|\delta(X_i)|$ and $|\delta(X_i')| \leq 2|\delta(X_i)|$ for each $i\in [k]$. 
By the second part of Claim \ref{clm:hyperedges-lost-to-shave}, we have that $|\delta(\shave_{j+1}(X'_i))| \leq r^3|\delta(\shave_{j}(X'_i))|$ for every non-negative integer $j$. Therefore, by repeated application of the second part of Claim \ref{clm:hyperedges-lost-to-shave}, for every $j \in [3r^2]$, we have that $|\delta(\shave_j(X'_i))| \leq 2r^{3j}|\delta(X_i)|$. By the first part of Claim \ref{clm:hyperedges-lost-to-shave}, for every $j \in [3r^2]$, we have that $|E[\shave_{j-1}(X_i')] - E[\shave_{j}(X_i')]| \leq r^3|\delta(\shave_{j-1}(X'_i))| \leq  2r^{3j}|\delta(X_i)|$. 
Therefore,
\begin{align*} 
    \left|E - \bigcup_{i=1}^k E[X''_i]\right| 
    &\le \left|E - \bigcup_{i=1}^k E[X_i]\right| + \sum_{i=1}^k \left|E[X_i] - E[X''_i]\right| \\ 
&= \left|E - \bigcup_{i=1}^k E[X_i]\right| + \sum_{i=1}^k \left( |E[X_i] - E[X'_i]| + \sum_{j=1}^{3r^2} |E[\shave_{j-1}(X_i')] - E[\shave_{j}(X_i')]| \right) \\
&\leq \left|E - \bigcup_{i=1}^k E[X_i]\right| + \sum_{i=1}^k \left( 2|\delta(X_i)| + \sum_{j=1}^{3r^2} 2r^{3j}|\delta(X_i)| \right) \\
&= \left|E - \bigcup_{i=1}^k E[X_i]\right| + \sum_{i=1}^k |\delta(X_i)|\left(2 + \sum_{j=1}^{3r^2} 2r^{3j} \right) \\
&\leq \left|E - \bigcup_{i=1}^k E[X_i]\right| + \sum_{i=1}^k 3r^{9r^2}|\delta(X_i)| \\
&\leq \left|E - \bigcup_{i=1}^k E[X_i]\right| + 3r^{9r^2}\sum_{i=1}^k \left|E(X_i, V \setminus X_i)\right| \\
&\leq 4r^{9r^2}\sum_{i=1}^k \left|E(X_i, V \setminus X_i)\right|.
\end{align*}

By Lemma \ref{lem:exp-decomposition-existence}, since $\mathcal{X}$ is an expander decomposition for $\phi = (6r^2/\lambda)^{1/(r-1)}$ and since $p = \sum_{e \in E} |e| \leq mr$, we have that 
\[
\sum_{i=1}^k \left|E(X_i, V \setminus X_i)\right| 
= O(r\phi p \log n) 
= O(r) \left(\frac{6r^2}{\lambda}\right)^{\frac{1}{r-1}}p \log n 
= O(r^2)  \left(\frac{6r^2}{\lambda}\right)^{\frac{1}{r-1}}m\log n.
\]
Thus, $|E - \bigcup_{i=1}^k E[X_i'']| = O(r^{9r^2+2}(6r^2 / \lambda)^{1/(r-1)} m\log n)$, thus proving the second conclusion.
\end{proof}

\section{Algorithm}\label{sec:min-cut-algo}

In this section, we prove Theorem \ref{theorem:algo}. In Section \ref{sec:small-size-min-cut}, we give an algorithm to find a min-cut in hypergraphs where some min-cut has small size. In Section \ref{sec:large-min-cut}, we give an algorithm which uses the expander decomposition and the $\trim$ and $\shave$ operations defined in Section \ref{sec:trim-and-shave}  to find a min-cut if the size of every min-cut is large. In Section \ref{sec:main-algorithm}, we apply the sparsification algorithm of Chekuri and Xu, followed by a combination of our two algorithms in order to prove Theorem \ref{theorem:algo}.

\subsection{Small-sized Hypergraph Min-Cut}\label{sec:small-size-min-cut}
We recall that the size of a cut is the number of vertices on the smaller side of the cut. In this section, we give a randomized algorithm to find connectivity in hypergraphs in which some min-cut has size at most $s$. 
The following is the main result of this section. 

\begin{theorem}\label{thm:new-small-mincut-algo}
Let $s$ be a positive integer and let $G=(V,E)$ be an $r$-rank hypergraph of size $p$ that has a min-cut $(C,V \setminus C)$ with $|C| \leq s$. Then, there exists a randomized algorithm $\textsc{SmallSizeMinCut}$ which takes $G$ and $s$ as input and runs in time $\tilde{O}_r(s^{6r}p)$ to return a min-cut of $G$ with high probability.
\end{theorem}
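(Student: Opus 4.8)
The plan is to exploit the hypothesis that the smaller side $C$ of the min-cut has at most $s$ vertices, which makes the part of $G$ lying ``inside $C$'' combinatorially negligible. Since $G$ has rank $r$, every hyperedge of $E[C]$ is a subset of $C$ of size between $2$ and $r$, so $|E[C]| \le \sum_{i=2}^{r}\binom{s}{i} = O_r(s^r)$, and hence
\[
\vol(C) \;=\; \sum_{v \in C} d(v) \;=\; \sum_{e \in E[C]} |e| \;+\; \sum_{e \in \delta(C)} |e \cap C| \;\le\; r\,|E[C]| + (r-1)\lambda \;=\; O_r(s^r) + (r-1)\lambda .
\]
In other words, starting from any one vertex $x \in C$ one can discover all of $C$ by exploring only $O_r(s^r + \lambda)$ hyperedge--endpoint incidences. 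The only possibly-large term, $\lambda$, is affordable because $p = \sum_{v \in V} d(v) \ge n\lambda \ge n$, so even $\tilde{O}(n)$ such local explorations cost only $\tilde{O}_r(p)$ up to factors polynomial in $s$ and $r$. This is precisely the regime in which the local-connectivity and sublinear-kernelization machinery of \cite{FNYSY20, NanongkaiSY19, LNPSY21} applies, and I would adapt it to hypergraphs.

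Concretely, \textsc{SmallSizeMinCut}$(G,s)$ would proceed as follows. First, preprocess: use \textsc{CXApproximation} (Theorem~\ref{thm:CX-approximation}) to get $k$ with $\lambda < k \le 3\lambda$, then \textsc{Certificate}$(G,k)$ (Theorem~\ref{thm:k-certificate}) to replace $G$ by a simple subhypergraph $G'$ of size $O(rkn) = O(r\lambda n)$ that preserves every cut of capacity $\le \lambda$ (in particular the min-cut $(C,V \setminus C)$ and the value $\lambda$); since $p \ge n\lambda$ this size is $O_r(p)$, and all remaining work runs on $G'$. Second, sample $\Theta(n\log n)$ vertices of $V$ uniformly at random; since $C \ne \emptyset$, with high probability some sampled vertex lies in $C$. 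Third, for each sampled vertex $x$, invoke a \emph{local cut-detection subroutine} --- a hypergraph version of the local edge/vertex-connectivity procedures of \cite{FNYSY20, NanongkaiSY19} combined with the sublinear-kernelization idea of \cite{LNPSY21} --- that is given $x$, the vertex budget $s$ and the capacity budget $k$, explores only the part of $G'$ near $x$, up to a volume budget of order $s^r + k$ (with a factor depending only on $r$), and returns a cut $(S, V \setminus S)$ with $x \in S$, $|S| \le s$ of the minimum capacity among all such cuts provided this minimum is at most $k$, and reports failure otherwise. Finally, output the smallest-capacity cut found over all seeds.

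For correctness: every cut the subroutine returns is a genuine cut of $G'$, hence of capacity $\ge \lambda$; and by the displayed bound, when $x \in C$ the candidate cut $(C,V \setminus C)$ has $x \in C$, $|C| \le s$ and $\vol(C) \le O_r(s^r) + (r-1)\lambda$, which is within the volume budget, so the subroutine does not fail and returns a cut of capacity $\le |\delta(C)| = \lambda$, i.e.\ exactly $\lambda$. Since some seed lies in $C$ with high probability, the algorithm outputs a min-cut of $G'$, which is a min-cut of $G$. For the running time: the preprocessing is $O(p)$, and each of the $\Theta(n\log n)$ subroutine calls costs $\tilde{O}_r\!\big(\poly(s)\,(s^r + k)\big) = \tilde{O}_r\!\big(\poly(s)\,(s^r + \lambda)\big)$, since it touches only that many incidences with $\poly(s,r)$ overhead per incidence; summing over seeds and using $n \le p$ and $n\lambda \le p$ yields $\tilde{O}_r(\poly(s)\,p)$, and carrying the polynomial through the reductions gives the stated $\tilde{O}_r(s^{6r} p)$, the exponent $6r$ absorbing the rank-$r$ blow-ups (translating the vertex budget $s$ into a bound on the number of hyperedges and incidences explored, plus the search overhead inside the subroutine).

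The heart of the argument --- and the one genuinely new ingredient --- is the local cut-detection subroutine for hypergraphs. The difficulty is to port the graph local-connectivity framework (a bounded depth-first / augmenting-path search with a volume cap, together with kernelization so that one does not pay for the potentially enormous degrees of vertices on the far side of the cut) to hyperedges of size up to $r$, and to prove that its running time is near-linear in the volume it explores with only a $\poly(s,r)$ multiplicative overhead and, crucially, \emph{no} extra factor of $\lambda$ per seed --- it is exactly this last requirement that forces the combinatorial estimate $|E[C]| = O_r(s^r)$ and the up-front $k$-certificate sparsification to be used in tandem. Everything else (the approximation of $\lambda$, the uniform sampling that hits $C$, and the observation that any returned cut already has capacity $\ge \lambda$) is routine.
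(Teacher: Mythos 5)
Your high-level plan --- exploit $|C| \le s$ via the volume bound $\vol(C) = O_r(s^r) + (r-1)\lambda$, then run a budgeted local search from a seed in $C$ --- is exactly the right shape, and you cite the right tools (\cite{FNYSY20, NanongkaiSY19, LNPSY21}). But the step you flag as ``the one genuinely new ingredient'' and then leave as a black box is not one black box; it is essentially the whole proof, and it cannot be realized as a single subroutine with a per-seed cost of $\tilde O_r(\mathrm{poly}(s)(s^r+\lambda))$ independent of the value of $\lambda$. The paper splits into two regimes with two different graph encodings and two different local-search primitives, precisely because neither one achieves what you are asking for across all $\lambda$.

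When $\lambda$ is small (the paper takes $k \lesssim s^r\log n$), the hypergraph is replaced by a directed graph $D_G$ in which each hyperedge becomes a unit-capacity arc $(e_{\mathrm{in}},e_{\mathrm{out}})$ sandwiched by $\infty$-weight incidence arcs, and one runs a Forster-et-al.-style residual BFS from each seed $x$ with random early stopping and path flipping, aborting if more than $\Theta(k^2 r s^r)$ arcs get marked. The per-seed cost here is $O(\lambda^3 r s^r)$ (not $O_r(s^r + \lambda)$!), so over all $n$ seeds this is only affordable because $\lambda$ is small. When $\lambda$ is large (at least $\Theta(s^r\log n)$), the paper instead builds a vertex-weighted bipartite incidence graph $B$, shows that min-cuts of $G$ are exactly min-weight separators of $B$, and applies the Li-et-al.\ kernelization --- sample a random hitting set $T$, contract $T_x' = T_x \cup (N(T_x)\cap U_V)$, delete redundant vertices --- to shrink each seed's neighborhood to a kernel with $O_r(s^r)$ $U_V$-vertices, on which Dinitz's algorithm finds the separator; the bound $\lambda \gtrsim s^r\log n$ is used in Lemma~\ref{lem:kernel-is-small} to make the kernel small, and again in Claim~\ref{clm:min-cut-is-scratch} to show the min separator is a $t$-scratch. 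Your write-up also does not establish any of the nontrivial correspondence lemmas (min-cuts of $G$ versus directed cuts of $D_G$, or versus separators of $B$), nor the probabilistic claims bounding the failure events; and the proposal to sample $\Theta(n\log n)$ seeds from an $n$-vertex set is a red flag that you are hand-waving here --- the paper simply iterates over all $x\in U_V$. In short: correct strategy, but the actual theorem is in the part you deferred, and as stated there is no reason to believe the single ``local cut-detection subroutine'' you postulate exists with the cost you assign it.
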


\begin{remark}
We note that the above result gives a randomized algorithm. In contrast, we give a simple but slightly slower deterministic algorithm for this problem in Appendix \ref{sec:det-small-size-min-cut}. 
\end{remark}

We prove Theorem \ref{thm:new-small-mincut-algo} using one of two possible algorithms based on the connectivity $\lambda$ of the input hypergraph: By Theorem \ref{thm:CX-approximation}, we can find a $k$ such that $\lambda \leq k \leq 3\lambda$ in $O(p)$ time. If $k \leq 2700s^r\log n$, then we use the algorithm from Theorem \ref{thm:small-lambda-small-cut} to find a min-cut in time $\tilde{O}_r(s^{3r}p + s^{6r})$ and if $k > 2700s^r \log n$ (i.e., $\lambda > 900s^r \log n$), then we use the algorithm from Theorem \ref{thm:big-lambda-small-cut} to find a min-cut in time $\tilde{O}_r(s^{3r}p)$.
These two algorithms exploit two slightly different ways of representing the hypergraph as a graph.

\subsubsection{Algorithm for small $\lambda$}
This section is devoted to proving the following theorem, which allows us to find a min-cut in a hypergraph with a small-sized min-cut. The algorithm is fast if the min-cut capacity is sufficiently small.

\begin{theorem}\label{thm:small-lambda-small-cut}
Let $s$ be a positive integer and let $G=(V,E)$ be an $r$-rank hypergraph of size $p$ that has a min-cut $(C,V \setminus C)$ with $|C| \leq s$. Then, there exists a randomized algorithm which takes $G$ and $s$ as input and runs in time $\tilde{O}_r(p\lambda^2rs^r + \lambda^4r^2s^{2r})$ to return a min-cut of $G$ with high probability.
\end{theorem}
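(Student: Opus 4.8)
\textbf{Proof proposal for Theorem \ref{thm:small-lambda-small-cut}.}

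The plan is to exploit the fact that the min-cut $(C, V\setminus C)$ has a small side $C$ with $|C|\le s$, together with a small capacity $\lambda$, by reducing the problem to a collection of $s$-$t$ min-cut / local-cut computations in an auxiliary graph. First I would replace the hypergraph $G$ by a sparse certificate: apply \textsc{CXApproximation} (Theorem \ref{thm:CX-approximation}) to get $k$ with $\lambda<k\le 3\lambda$ in $O(p)$ time, then \textsc{Certificate}$(G,k)$ (Theorem \ref{thm:k-certificate}) to obtain a simple subhypergraph of size $O(rnk)=O(r\lambda n)$ preserving all min-cuts. From now on work with this sparsified hypergraph. The key structural observation is that if $|C|\le s$, then every vertex $v\in C$ has $d(v)\ge\lambda$, yet the number of hyperedges with both endpoints inside $C$ is at most $\binom{n}{\le r}$ restricted to $C$, i.e.\ at most $s^r$; hence all but at most $s^r$ of $v$'s hyperedges leave $C$, so $v$ sees at least $\lambda-s^r$ cut hyperedges. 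This suggests a "local" search: guess a vertex $v\in C$ (there are $\le n$ choices, or one can subsample), guess a small "seed" subset inside $C$ incident to $v$, and grow a min-cut around it.

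Concretely, I would build the standard bipartite incidence representation (the "star graph" $G'$ with a vertex for each original vertex and a vertex for each hyperedge, each hyperedge-node joined to the vertices it contains), so that hypergraph cuts of capacity $\lambda$ correspond to vertex/edge cuts in $G'$ of comparable size. The main loop iterates over candidate vertices $v$ (and, to handle the $E[C]$ hyperedges that stay inside $C$, over small guesses of the portion of $C$ near $v$ — there are at most $2^{s^r}$ such subsets in the worst case, but since $|C|\le s$ and we only need the hyperedges through $v$, the number of relevant guesses is bounded by a function of $s$ and $r$). For each guess I run a max-flow / min-cut computation with source the guessed seed and sink a super-node for the rest of $V$, capped at capacity $\lambda$ (or $k$), so each flow computation costs $\tilde O(|E'|\cdot\lambda)=\tilde O(p\lambda\cdot\mathrm{poly}(r,s^r))$ via Ford–Fulkerson-style augmentation (only $\lambda$ augmenting paths needed). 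Repeating over all $O(n)$ vertices and over the $\mathrm{poly}_r(s^r)$ guesses, and using the sparsifier so that $|E'|n$-type terms become $\lambda n$-type terms, should yield the claimed $\tilde O_r(p\lambda^2 r s^r + \lambda^4 r^2 s^{2r})$ bound: the $p\lambda^2 rs^r$ term from running $\tilde O(s^r)$-many flow computations of cost $\tilde O(p\lambda)$ across candidate vertices with a further $\lambda r$ factor, and the $\lambda^4 r^2 s^{2r}$ term from working on the sparsified hypergraph of size $O(r\lambda n)$ where the relevant $n$ is itself effectively $O(\lambda^{?})$ after a second reduction, or from the cost of enumerating seed subsets. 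Randomization enters because we subsample the candidate vertices / seeds to hit $C$ with high probability rather than enumerating all of $V$, shaving an $n$ factor.

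The correctness argument has two parts: (a) for the \emph{right} guess (a vertex $v\in C$ and the correct local seed), the capped max-flow computation returns exactly $(C,V\setminus C)$ — this needs the observation that $C$ is the unique minimal/minimum side containing the seed and that capping at $\lambda$ does not lose the optimum since $|\delta(C)|=\lambda$; (b) for \emph{wrong} guesses the algorithm only ever returns cuts of capacity $\ge\lambda$, so taking the cheapest cut over all guesses is safe. The main obstacle I anticipate is handling the hyperedges \emph{internal} to the small side $C$: unlike in graphs, a hyperedge fully inside $C$ contributes to the degrees of vertices in $C$ without crossing the cut, so the naive "grow until degree exceeds $\lambda$" argument must be adjusted to account for up to $s^r$ such internal hyperedges, and getting the flow network (capacities, seed enumeration) to correctly isolate $C$ despite these internal hyperedges — while keeping the number of guesses bounded by $\mathrm{poly}_r(s^r)$ rather than $2^{s^r}$ — is the delicate step. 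A secondary technical point is bounding the flow-computation cost: since hyperedges can be large, one must be careful that the incidence graph $G'$ has size $O(p)$ (not $O(rp)$ or worse) and that each augmenting path found respects hyperedge semantics, i.e.\ a hyperedge is "cut" only once regardless of how many of its vertices lie on each side.
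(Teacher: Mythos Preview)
Your proposal has a genuine runtime gap. You suggest running, for each candidate source vertex, a Ford--Fulkerson-style max-flow on the incidence graph capped at $\lambda$ augmenting paths, costing $\tilde O(p\lambda)$ per call. Iterating over $n$ sources already gives $\tilde O(np\lambda)$; even using $n\lambda\le p$ (minimum degree at least $\lambda$), this is $\tilde O(p^2)$, not $\tilde O_r(p\lambda^2 s^r)$. To hit the stated bound the \emph{per-vertex} work must be $O(\lambda^3 r s^r)$, i.e.\ \emph{sublinear in $p$}, and no amount of seed-subset enumeration or sparsification repairs this: a standard flow computation touches $\Omega(p)$ arcs regardless of the seed. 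Your attempt to recover via subsampling sources is also off---the paper runs its subroutine on \emph{every} $x\in U_V$; randomization is not used to shave the $n$ factor.

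What the paper actually does is a \emph{local} cut algorithm in the directed incidence graph $D_G$ (each hyperedge $e$ becomes a unit-weight arc $e_{in}\to e_{out}$ with $\infty$-weight arcs to/from its vertices). From each $x$ it runs $k{+}1$ rounds of BFS; in each round, every newly explored arc is ``marked'' and with probability $\Theta(1/(r(s^r+k)))$ the BFS halts there, sets $y$ to the current endpoint, and \emph{reverses} the $x$--$y$ path (\`a la Forster et al.). A hard cap of $O(k^2 r s^r)$ marked arcs aborts the call. If $x\in C$, then $C$ has at most $s+2s^r+\lambda$ arcs touching it in $D_G$, so with constant probability every chosen $y$ lands in $V_D\setminus C$, each reversal decrements the weight of $(C,V_D\setminus C)$ by one, and after $\lambda$ rounds the BFS gets trapped and outputs a min-cut. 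The cap makes the per-call cost $O(k\cdot k^2 r s^r)=O(\lambda^3 r s^r)$; summed over $n$ vertices and $O(\log n)$ repetitions this is $\tilde O(n\lambda^3 r s^r)=\tilde O(p\lambda^2 r s^r)$. The $\lambda^4 r^2 s^{2r}$ term is simply a base case: when $p\le O(\lambda^2 r s^r)$ one runs \textsc{SlowMinCut} in $O(np)=O(p^2)$ time. The missing ingredient in your plan is precisely this random-stop/path-reversal local exploration that keeps each call independent of $p$.
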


In the rest of this subsection, let $s$ be a positive integer and let $G=(V, E)$ be an $r$-rank $n$-vertex hypergraph of with $m$ hyperedges and size $p$ that has a min-cut $(C, V\setminus C)$ with $|C|\le s$. Let $\lambda$ be the min-cut capacity of $G$ and let $k$ be the approximation of $\lambda$ obtained by running the algorithm from Theorem \ref{thm:CX-approximation}. If $p \leq 512k^2rs^r$, then $n \leq p \leq 512k^2rs^r$. In this case, we run the algorithm \textsc{SlowMinCut} from Theorem \ref{thm:slow-min-cut} to compute the min-cut in $G$ in time $O(np) = \tilde{O}(\lambda^4r^2s^{2r})$. Henceforth, we assume that $p > 512k^2rs^r$.

We construct an arc-weighted directed graph $D_G = (V_D, E_D)$ as follows: 
The vertex and arc sets of $D_G$ are given by $V_D := U_V \cup U_{E_{in}} \cup U_{E_{out}}$ and $E_D := E_1 \cup E_2$ respectively, where $U_V, U_{E_{in}}, U_{E_{out}}, E_1,$ and $E_2$ are defined next. 
The vertex set $U_V$ has a vertex $u_v$ for each $v \in V$. The vertex set $U_{E_{in}}$ has a vertex $e_{in}$ for every hyperedge $e \in E$, and the vertex set $U_{E_{out}}$ has a vertex $e_{out}$ for every hyperedge $e \in E$. The arc set $E_1$ has an arc $(e_{in}, e_{out})$ for every $e \in E$. The arc set $E_2$ has arcs $(v, e_{in})$ and $(e_{out}, v)$ for every $e \in E$ and $v \in e$. We assign a weight of $1$ to each arc in $E_1$ and a weight of $\infty$ to each arc in $E_2$. We note that $|V_D| = n + 2m$, and $|E_D| = m + 2p$. Furthermore, $D_G$ can be constructed from $G$ in $\tilde{O}(p)$ time.

\begin{definition}
A directed cut in $D_G$ is an ordered partition of the vertices into two parts $(C, V_D \setminus C)$ such that $\emptyset \neq C \cap U_V \neq U_V$, and its weight is the total weight of arcs outgoing from $C$.
\end{definition}

The following proposition relates min-cut in $G$ to min-weight directed cut in $D_G$.

\begin{proposition}\label{prop:reduction-to-directed-graph}
An algorithm that runs in time $\tilde{O}(f(p,s,r,\lambda))$ to find a min-weight directed cut in $D_G$, for some function $f$, can be used to find a min-cut in $G$ in time $\tilde{O}(p + f(p,s,r,\lambda))$.
\end{proposition}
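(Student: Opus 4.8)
The plan is to exhibit a weight-preserving correspondence between cuts of $G$ and finite-weight directed cuts of $D_G$, so that any min-weight directed cut of $D_G$ immediately yields a min-cut of $G$. The reduction algorithm then is: (i) build $D_G$ from $G$, which takes $\tilde O(p)$ time since $|V_D| = n + 2m = O(p)$ and $|E_D| = m + 2p = O(p)$; (ii) call the assumed algorithm to obtain a min-weight directed cut $(C, V_D \setminus C)$ in time $\tilde O(f(p,s,r,\lambda))$; and (iii) return the cut $(S, V \setminus S)$ of $G$ with $S := \{v \in V : u_v \in C\}$. The directed-cut constraint $\emptyset \neq C \cap U_V \neq U_V$ is exactly what guarantees $(S, V\setminus S)$ is a genuine cut of $G$, so it remains only to check that the minimum directed-cut weight equals $\lambda$ and that an optimal $C$ recovers a min-cut.

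For the forward direction I would take an arbitrary cut $(S, V\setminus S)$ of $G$ and set $C := \{u_v : v \in S\} \cup \{e_{in} : e \cap S \neq \emptyset\} \cup \{e_{out} : e \subseteq S\}$. No arc of weight $\infty$ leaves $C$: an arc $(u_v, e_{in})$ with $v \in e$ leaves $C$ only if $v \in S$ and $e \cap S = \emptyset$, which is impossible; likewise $(e_{out}, u_v)$ leaves $C$ only if $e \subseteq S$ and $v \notin S$, impossible since $v \in e$. The only arcs leaving $C$ are the weight-$1$ arcs $(e_{in}, e_{out})$ with $e_{in} \in C$ and $e_{out} \notin C$, i.e. with $e \cap S \neq \emptyset$ and $e \not\subseteq S$ — precisely the hyperedges of $\delta(S)$. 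Hence this directed cut has weight $|\delta(S)| = d(S)$, and it satisfies the constraint since $C \cap U_V = \{u_v : v \in S\}$ is nonempty and proper. In particular, taking $S$ to be a min-cut of $G$ shows some directed cut has weight $\lambda < \infty$.

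For the reverse direction, let $(C, V_D \setminus C)$ be a min-weight directed cut; by the previous paragraph its weight is at most $\lambda$, hence finite, so no $\infty$-weight arc leaves $C$. Put $S := C \cap U_V$ (identified with a vertex subset of $G$), which is nonempty and proper. For every hyperedge $e$ crossing $(S, V\setminus S)$ — pick $v \in e \cap S$ and $w \in e \setminus S$ — the $\infty$-weight arc $(u_v, e_{in})$ forces $e_{in} \in C$ and the $\infty$-weight arc $(e_{out}, u_w)$ forces $e_{out} \notin C$, so the weight-$1$ arc $(e_{in}, e_{out})$ leaves $C$. Summing over $\delta(S)$, the weight of the directed cut is at least $d(S) \ge \lambda$. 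Combining the two directions, the minimum directed-cut weight is exactly $\lambda$, and the recovered $(S,V\setminus S)$ has $d(S) \le \lambda$, hence is a min-cut of $G$; the total time is $\tilde O(p + f(p,s,r,\lambda))$.

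The argument is essentially routine; the one place to be careful — and the only place it could fail were the gadget designed differently — is the verification that the $\infty$-weight arcs never leave $C$ for a non-crossing hyperedge yet always force $e_{in}\in C,\ e_{out}\notin C$ for a crossing one. This exactness is what makes the directed-cut weight track $d(S)$ on the nose rather than merely up to constant factors, and it is the crux that makes the reduction lossless.
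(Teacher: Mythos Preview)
Your proof is correct and follows essentially the same approach as the paper: both directions use the identical construction $C = \{u_v : v \in S\} \cup \{e_{in} : e \cap S \neq \emptyset\} \cup \{e_{out} : e \subseteq S\}$ for the forward bound and recover $S = C \cap U_V$ for the reverse, with the $\infty$-weight arcs forcing $e_{in} \in C$ and $e_{out} \notin C$ for each crossing hyperedge. Your writeup is in fact slightly more careful than the paper's in explicitly verifying that no $\infty$-weight arc leaves $C$ in the forward direction.
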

\begin{proof}
We first note that every cut $(C, V \setminus C)$ in $G$ has a corresponding cut in $D_G$, namely $(C', V_D \setminus C')$ where $C' := \{u_v \colon v \in C\} \cup \{e_{in} \colon e \cap C \neq \emptyset \} \cup \{e_{out} \colon e \subseteq C\}$. The arcs cut by $(C', V_D \setminus C')$ will be arcs of the form $(e_{in}, e_{out})$ where $e \cap C \neq \emptyset$, but $e$ is not contained in $C$, and the number of such arcs is exactly the number of hyperedges in $\delta(C)$. Therefore, the min-weight of a directed cut in $D_G$ is at most the min-cut capacity in $G$.

Next, let $(C', V_D \setminus C')$ be a min-weight directed cut in $D_G$. Let $C := \{v \colon u_v \in C'\}$. We will show that $C' \cap U_{E_{in}} = \{e_{in} \colon e \cap C \neq \emptyset\}$ and $C' \cap U_{E_{out}} = \{e_{out} \colon e \subseteq C \}$. We note that a min-cut in $D_G$ does not cut any infinite cost hyperedges. Therefore, $C'$ contains every vertex $e_{in}$ such that $e \cap C \neq \emptyset$. Similarly, $C'$ can contain a vertex $e_{out}$ only if $e \subseteq C$. Thus, $(C', V_D \setminus C')$ cuts all arcs $(e_{in}, e_{out})$ where $e \in \delta(C)$. Therefore, the min-weight of a directed cut in $D_G$ is at least the min-cut capacity in $G$. 

The above arguments imply that the min-weight of a directed cut in $D_G$ is equal to the min-cut capacity in $G$ and moreover, given a min-weight directed cut in $D_G$, we can recover a min-cut in $G$ immediately. The run-time guarantees follow. 

\end{proof}

By Proposition \ref{prop:reduction-to-directed-graph}, in order to prove Theorem \ref{thm:small-lambda-small-cut}, it suffices to design an algorithm that runs in time $\tilde{O}_r(p\lambda^2rs^r + \lambda^4r^2s^{2r})$ to find a min-weight directed cut $(C, V_D \setminus C)$ in the directed graph $D_G$ corresponding to $G$, under the assumptions that $|C \cap U_V| \leq s$ and $p\ge 512k^2rs^r$. In Algorithm \ref{alg:s-sized-small-lambda} we give a randomized algorithm to find a min-weight directed cut under these assumptions.
Algorithm \ref{alg:s-sized-small-lambda} is inspired by an algorithm of Forster et al. \cite{FNYSY20}. We show in Theorem \ref{thm:small-size-small-min-cut-correctness} that if $x \in C$ for some min-weight directed cut $(C, V_D \setminus C)$ in $D_G$ with $|C \cap U_V| \leq s$, then running Algorithm \ref{alg:s-sized-small-lambda} on $(D_G,x,k,s)$ will return a min-weight directed cut with constant probability. Thus, running Algorithm \ref{alg:s-sized-small-lambda} on $(D_G,x,k,s)$ for each $x \in U_V$ and returning the best cut found will indeed return a min-weight directed cut in $D_G$ with constant probability; repeating this process $\log n$ times and taking the best cut found will return a min-weight directed cut with high probability. We also show in Theorem \ref{thm:small-size-small-min-cut-correctness} that Algorithm \ref{alg:s-sized-small-lambda} can be implemented to run in time $O(\lambda^3rs^r)$. Thus, running Algorithm \ref{alg:s-sized-small-lambda} $O(\log n)$ times for each $x \in U_V$ and returning the best cut found takes time $\tilde{O}(nk^3rs^r) = \tilde{O}(pk^2rs^r)$. Adding in the time for the base case where $p \leq 512k^2rs^r$, we get a total running time of $\tilde{O}(p\lambda^2rs^r + \lambda^4r^2s^{2r})$.

\begin{algorithm}[ht]
\caption{{SmallSizeSmallMinCut}($D_G,x,k,s$)}\label{alg:s-sized-small-lambda}
\begin{algorithmic}
\State Set all arcs as unmarked
\For{$i$ from $1$ to $k + 1$}
\State $y \gets NIL$
\State Run a new iteration of BFS from $x$ to get a BFS tree $T$ as follows:
\While{the BFS algorithm still has an arc $(u,v)$ to explore}
\If{$(u,v)$ is not marked}
\State Mark $(u,v)$
\If{at least $512k^2rs^r$ arcs are marked}
\State \Return $\bot$
\EndIf
\State With probability $\frac{1}{8r(4s^r+k)}$, stop BFS and set $y \gets v$
\EndIf
\EndWhile
\If{$y = NIL$}
\State \Return $V(T)$
\EndIf
\State Flip the orientation of all arcs on the $x-y$ path in $T$
\EndFor
\State \Return $\bot$
\end{algorithmic}
\end{algorithm}

Theorem \ref{thm:small-size-small-min-cut-correctness} completes the proof of theorem \ref{thm:small-lambda-small-cut}.

\begin{theorem}\label{thm:small-size-small-min-cut-correctness}
Let $(C, V_D \setminus C)$ be a min-weight directed cut in $D_G$ with $|C \cap U_V| \leq s$ and $p\ge 512k^2rs^r$. For $x\in C$, Algorithm \ref{alg:s-sized-small-lambda} returns a min-weight directed cut in $D_G$ with probability at least $\frac{3}{4}$. Moreover, Algorithm \ref{alg:s-sized-small-lambda} can be implemented to run in time $O(\lambda^3rs^r)$. 
\end{theorem}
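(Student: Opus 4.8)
The plan is to analyze Algorithm \ref{alg:s-sized-small-lambda} as a randomized variant of the Ford–Fulkerson/augmenting-path approach, following the local-cut technique of Forster et al. The algorithm maintains a flow (implicitly, via arc reorientations) from $x$ and tries to push $k+1$ units; if it succeeds, then the max-flow from $x$ to the rest exceeds $\lambda$, which is impossible when $x$ lies on the small side $C$ of a directed cut of weight $\lambda$. So the key structural fact to establish first is: \textbf{if the algorithm never returns $\bot$ and never "gets stuck" (i.e. always finds an augmenting target $y$), then after $k+1$ successful augmentations we have routed $k+1$ units of flow out of $x$, contradicting $|\delta(C)|=\lambda\le k$.} Hence in the good case the algorithm must, in one of its $k+1$ iterations, run a BFS that exhausts all reachable arcs without sampling a stopping vertex $y$; at that point $V(T)$ is exactly the set of vertices reachable from $x$ in the residual graph, which is the source side of a minimum directed cut — and since the infinite-weight arcs are never cut, projecting back to $U_V$ (via Proposition \ref{prop:reduction-to-directed-graph}) gives a genuine min-cut of $G$.

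Next I would handle the probability bound. There are two bad events to rule out. First, the algorithm could return $\bot$ because more than $512k^2rs^r$ arcs get marked; I would bound the expected number of marked arcs by observing that, conditioned on $x\in C$ with $|C\cap U_V|\le s$, the source side of the relevant residual cut after each augmentation stays "small" — it contains at most $s$ original vertices and therefore at most $O(s^r)$ hyperedge-vertices of $D_G$, so the BFS in the good case explores only $O(r s^r + k)$ arcs before it would exhaust the reachable region (the $+k$ accounts for marked arcs along the residual boundary). Since the per-arc stopping probability is $\frac{1}{8r(4s^r+k)}$, the BFS in each of the $\le k+1$ iterations explores $O(\text{(its reachable size)} + \text{Geometric}(\tfrac{1}{8r(4s^r+k)}))$ arcs, and a union/Markov bound over the $k+1$ iterations keeps the total marked below $512k^2rs^r$ with probability $\ge 7/8$ (this is where the constant $512$ and the exponent $k^2$ come from — each iteration contributes $O(k(s^r+k)r)=O(k^2rs^r)$ expected arcs, and there are $k+1$ iterations). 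Second, even in an iteration where the residual-reachable set has already shrunk to the true source side, the algorithm must avoid sampling a $y$ so that it actually returns $V(T)$ rather than performing a spurious augmentation; the probability of surviving all $O(rs^r+k)$ arc-explorations without stopping is $(1-\frac{1}{8r(4s^r+k)})^{O(rs^r+k)} \ge$ some constant, say $\ge 7/8$. Combining, with probability $\ge 3/4$ the algorithm terminates by returning a set $V(T)$ that is the source side of a minimum directed cut.

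For the running-time claim I would argue that, conditioned on not returning $\bot$, at most $512k^2rs^r$ arcs are ever marked, and each BFS iteration costs time linear in the arcs it touches; a marked arc is never re-marked, and the reorientation of an $x$–$y$ path costs $O(\text{path length})=O(rs^r+k)$, so across $k+1$ iterations the total work is $O(k^2rs^r + k(rs^r+k)) = O(k^3 r s^r) = O(\lambda^3 r s^r)$ after absorbing the $k=O(\lambda)$ factors (using $k\le 3\lambda$). I would be a little careful that re-running BFS from scratch each iteration does not blow the bound — but since we only ever touch marked arcs plus at most one fresh path's worth of new arcs per iteration, and marking is monotone with a hard cap, the amortized count is fine.

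\textbf{Main obstacle.} The delicate part is the structural claim that the residual-reachable set from $x$ stays confined to $O(rs^r+k)$ arcs throughout all augmentations — i.e. that augmenting along residual paths cannot cause the reachable frontier to balloon past the small side. This requires arguing that after $i$ successful augmentations the residual graph still has a cut of weight $\lambda - i$ separating $x$ from everything outside the (fixed) small side $C'$, so the reachable set is contained in $C'$ together with at most $\lambda-i\le k$ "escaped" arcs across the shrinking residual boundary. Pinning down this invariant — essentially that augmenting paths in this gadget graph never leave $C'$ except through the $\le k$ finite-weight cut arcs — is where most of the real work lies; everything else is Markov/Chernoff bookkeeping and a standard max-flow termination argument. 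The rest of the proof (the probability arithmetic and the running-time accounting) is routine once that invariant is in hand.
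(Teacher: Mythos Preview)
Your high-level framing (augmenting paths via arc reversals, and why returning $V(T)$ in iteration $j$ certifies a directed cut of weight at most $j-1$) matches the paper. The gap is in the probability analysis, and it is precisely the point you flag as the ``main obstacle'': the claim that the residual-reachable set from $x$ stays confined to $O(rs^r+k)$ arcs throughout all augmentations is \emph{false}, and the paper does not use it. In the early iterations, before the residual weight of $(C, V_D\setminus C)$ has been driven to $0$, the set reachable from $x$ is the entire graph $D_G$: arcs still cross from $C$ to $V_D\setminus C$, and every vertex is reachable through them. So you cannot bound the number of marked arcs per iteration by the size of any small region.

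The paper's argument for the first bad event is simpler and bypasses this entirely: the number of newly marked arcs in one BFS run is stochastically dominated by a geometric random variable with mean $8r(4s^r+k)$, independent of graph structure, because the BFS stops with probability $\tfrac{1}{8r(4s^r+k)}$ at each newly marked arc. Summing over $\lambda+1\le k+1$ iterations and applying Markov's inequality gives $\Pr[\ge 512k^2rs^r \text{ marked}]\le 1/8$ directly. For the second bad event, the paper does not analyze whether the final BFS ``survives without sampling a $y$''; instead it bounds the probability that \emph{any} sampled $y$ across \emph{all} iterations lands in $C$. This is what you are missing: in iterations $1,\ldots,\lambda$ you need $y\in V_D\setminus C$ so that each path-reversal actually decrements the residual weight of $(C, V_D\setminus C)$; only after $\lambda$ such decrements is the reachable set in iteration $\lambda+1$ contained in $C$ (at which point ``not sampling $y\in C$'' coincides with ``not sampling any $y$''). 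The bound on $\Pr[\text{some }y\in C]$ comes from counting arcs incident to $C$: since $|C\cap U_V|\le s$, the set $C$ contains at most $2s^r$ vertices of $U_{E_{out}}$ and hence at most $2s^r+\lambda$ of $U_{E_{in}}$, each of degree at most $r$, giving at most $r(4s^r+k)$ arcs touching $C$; each is marked at most once, and a union bound yields $1/8$.
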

\begin{proof}
We begin by showing the following claim.
\begin{claim}\label{clm:alg-returns-small-cut}
If Algorithm \ref{alg:s-sized-small-lambda} returns $V(T)$ in iteration $j$ of its for-loop, then $(V(T),V_D \setminus V(T))$ is a directed cut in $D_G$ with weight at most $j-1$. 
\end{claim}
\begin{proof} We note that whenever we flip the arcs in an $x-y$ path, the weight of any directed cut $(C, V_D \setminus C)$ with $x \in C$ either decreases by $1$ (if $y \in V_D \setminus C$) or stays the same. We also note that if the algorithm eventually returns $V(T)$, then
the last iteration must have explored all arcs reachable from $x$ without exploring $512k^2rs^r$ arcs. Since $|E_D| \geq p > 512k^2rs^r$, this means that not all arcs of $D_G$ are reachable from $x$, and therefore $(V(T), V_D \setminus V(T))$ is a directed cut in $D_G$. The weight of the cut $(V(T), V_D \setminus V(T))$ has decreased by at most $1$ in each previous iteration of BFS run, since this is the only time when the algorithm flips the arcs of a path. Therefore, if $V(T)$ is returned in iteration $j$ of the algorithm's for-loop, the corresponding cut in $D_G$ has weight at most $j-1$.
\end{proof}

By Claim \ref{clm:alg-returns-small-cut}, it suffices to show that with constant probability, Algorithm \ref{alg:s-sized-small-lambda} returns $V(T)$ after $\lambda +1$ iterations of BFS. By Claim \ref{clm:alg-returns-small-cut}, the algorithm cannot return $V(T)$ within $\lambda$ iterations of BFS (since this would imply that there exists a directed cut with weight less than $\lambda$). Therefore, since $\lambda \leq k$, the algorithm will always run at least $\lambda + 1$ iterations of BFS if it does not terminate because of marking $512k^2rs^r$ arcs. 
If $x \in C$, then whenever the algorithm flips the orientations of all arcs of an $x-y$ path where $y \in V_D \setminus C$, the weight of the cut $(C, V_D \setminus C)$ in the resulting graph decreases by $1$. Thus, if $x \in C$ and the algorithm fails to return a min-cut after $\lambda+1$ iterations, either the first $\lambda+1$ iterations marked more than $512k^2rs^r$ arcs, or one of the choices for vertex $y$ chosen within the first $\lambda$ iterations is in $C$.
\begin{claim}\label{clm:failure-prob-too-many-marks}
The probability that Algorithm \ref{alg:s-sized-small-lambda} returns $\bot$ due to marking at least $512k^2rs^r$ arcs within the first $\lambda + 1$ iterations of BFS is at most $1/8$.
\end{claim}
\begin{proof}
The expected number of arcs marked by a single iteration of BFS is $8r(4s^r+k)$, so the expected number of arcs marked by $\lambda+1$ iterations of BFS is \[8(\lambda+1)r(4s^r+k) \leq 8(k+1)r(4s^r+k) \leq 64rk^2s^r.\] Therefore, by Markov's inequality, the probability that at least $512k^2rs^r$ arcs are marked in the first $\lambda+1$ iterations is at most $1/8$.
\end{proof}

\begin{claim}\label{clm:failure-prob-bad-y}
The probability that Algorithm \ref{alg:s-sized-small-lambda} picks a vertex from $C$ as $y$ in some iteration of its execution is at most $1/8$.
\end{claim}

\begin{proof}
If the algorithm never terminates a BFS iteration on an arc with an endpoint in $C$, then it will always choose the vertex $y$ from $V_D \setminus C$. We recall that $|C \cap U_V| \leq s$. Therefore, $|C \cap U_{e_{out}}| \leq 2s^r$, since the number of hyperedges of size at most $r$ over a subset of vertices of size $s$ is at most 
\[
\binom{s}{r} + \binom{s}{r-1} + \binom{s}{r-2} + \dots + \binom{s}{2} \leq s^r + \dots + s^2 \leq 2s^r.
\]
Since $C$ is a min-weight directed cut, we also have that $|C \cap U_{e_{in}}| \leq |C \cap U_{e_{out}}| + \lambda \leq |C \cap U_{e_{out}}| + k$. Also, since $C$ is a min-weight directed cut, we know that for any vertex $v$ in $C \cap U_V$, all of $v$'s neighbors are also in $C$ (since the arcs between vertices in $U_V$ and $V_D \setminus U_V$ have infinite weight). Thus, every arc which is incident to a vertex in $C$ is incident to a vertex in $C \cap (U_{e_{in}} \cup U_{e_{out}})$. Since each vertex in $U_{e_{in}} \cup U_{e_{out}}$ has degree at most $r$, this means that the total number of arcs which touch vertices in $C$ is at most $r(|C \cap U_{e_{in}}| + |C \cap U_{e_{out}}|) \leq r(4s^r+k)$. Each of these arcs is marked at most once during the execution of the algorithm, and has a $1/(8r(4s^r+k))$ chance of causing BFS to terminate when it is marked. Thus, by union bound, the probability that any of these arcs cause the BFS to terminate (and potentially not flip an arc from the min-cut) is at most $1/8$. \end{proof} 

Combining Claim \ref{clm:failure-prob-too-many-marks} with Claim \ref{clm:failure-prob-bad-y}, 
and by union bound, the total probability that the algorithm returns an incorrect answer is at most $1/4$.

Finally, we analyze the runtime of Algorithm \ref{alg:s-sized-small-lambda}. The runtime is dominated by the breadth first searches. There are at most $k+1$ searches, and each breadth first search visits at most $512k^2rs^r$ arcs, so the runtime of 
Algorithm \ref{alg:s-sized-small-lambda} is $O(\lambda^3rs^r)$. 
\end{proof}



\subsubsection{Algorithm for large $\lambda$}
In this section, we prove the following theorem which allows us to find min-cut in hypergraphs where some min-cut has small size. The algorithm is fast if the min-cut capacity is sufficiently large.

\begin{theorem}\label{thm:big-lambda-small-cut}
Let $s$ be a positive integer, and let $G=(V,E)$ be an $n$-vertex $r$-rank hypergraph of size $p$ with $n \geq 2r$ that has a min-cut $(C,V \setminus C)$ with $|C| \leq s$ and min-cut capacity $\lambda \geq 900s^r\log n$. Then, there exists a randomized algorithm which takes $G$ and $s$ as input and runs in time $\tilde{O}_r(s^{3r}p)$ to return a min-cut of $G$ with high probability.
\end{theorem}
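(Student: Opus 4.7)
The plan is to sparsify $G$ so that its min-cut shrinks to $\tilde O(s^r)$, and then invoke the small-$\lambda$ subroutine of \Cref{thm:small-lambda-small-cut}. Since that subroutine's runtime is dominated by the term $\tilde O_r(p\lambda^2 r s^r)$, trading a factor of $\lambda/\tilde O(s^r)$ in the size $p$ for a factor of $(\lambda/\tilde O(s^r))^{2}$ reduction in $\lambda^2$ yields a net speedup to the target $\tilde O_r(s^{3r}p)$. This is a sampling analogue of what is achieved in graphs by standard Karger-style sparsification, and it exactly exploits the assumption that $\lambda$ is large.

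First, run $\textsc{CXApproximation}$ on $G$ in $O(p)$ time to obtain an estimate $k\in[\lambda,3\lambda]$. Next, construct a hypergraph $H$ on the same vertex set by including each hyperedge of $G$ independently with probability $q := C s^r\log n / k$ for a sufficiently large constant $C$; this sampling takes $O(p)$ time and makes the expected size of $H$ equal to $qp = O(s^r p\log n / \lambda)$. For every cut $(D,V\setminus D)$ of $G$, the capacity $|\delta_H(D)|$ is a sum of independent $0/1$ variables with mean $q|\delta_G(D)|$. Standard Chernoff bounds combined with a union bound over the $\binom{n}{\le s}\le n^s$ cuts whose small side has size at most $s$ guarantee that every such cut satisfies $|\delta_H(D)| \in (1\pm\tfrac13)\, q|\delta_G(D)|$ with high probability. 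The hypothesis $\lambda\ge 900\,s^r\log n$ is calibrated precisely so that $q\lambda=\Omega(s^r\log n)$ drives the per-cut failure probability below $n^{-\Omega(s)}$ and makes the union bound succeed.

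Now apply \Cref{thm:small-lambda-small-cut} on $H$ with the same size parameter $s$. With $\lambda_H = \Theta(s^r\log n)$ and $p_H = qp = O(s^r p \log n / \lambda)$, the runtime evaluates to
\[
\tilde O_r\!\left(p_H\lambda_H^{2} r s^{r} + \lambda_H^{4} r^{2} s^{2r}\right)
\;=\; \tilde O_r\!\left(\tfrac{s^{4r}p}{\lambda} + s^{6r}\right)
\;\subseteq\; \tilde O_r(s^{3r}p),
\]
where the first term collapses to $\tilde O_r(s^{3r}p)$ via $\lambda\ge 900\,s^r\log n$ and the second is dominated by $s^{3r}p$ whenever $p\ge s^{3r}$. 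In the remaining regime $p < s^{3r}$, which forces $n\le p\le s^{3r}$, one simply runs $\textsc{SlowMinCut}$ on $G$ directly in time $O(np)\le O(s^{3r}p)$, absorbing the small-$p$ case into the same overall bound.

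The chief obstacle is upgrading the approximate preservation of cut capacities under sampling to an \emph{exact} min-cut guarantee for $G$: cut capacities in $G$ are integers that can differ by only one, while the Chernoff analysis separates them only up to a constant multiplicative factor. To bridge the gap, one records every small-sized candidate cut examined during the execution of \Cref{thm:small-lambda-small-cut} on $H$ (for instance, one candidate per starting vertex $x\in U_V$ in the internal BFS search), computes the exact $G$-capacity of each in $O(p)$ time, and outputs the one with minimum $G$-capacity. The $(1\pm\tfrac13)$ preservation from the union bound ensures that the true min-cut of $G$ remains a small-sized min-cut of $H$ with high probability, so it will appear among the enumerated candidates, and the added post-verification only contributes polylogarithmic overhead absorbed by the $\tilde O_r$ notation.
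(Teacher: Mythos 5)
Your route is genuinely different from the paper's (which never subsamples hyperedges; it builds the weighted bipartite incidence graph $B$, reduces min-cut to a min-weight separator, and uses the kernelization of Li et al.\ plus Dinitz's algorithm, with the hypothesis $\lambda \geq 900 s^r \log n$ used only to bound the kernel size). Unfortunately, your route has a gap that I do not see how to close: \emph{random sparsification cannot preserve exact min-cuts}. With sampling rate $q = \Theta(s^r \log n / \lambda)$, Chernoff gives concentration of each cut's capacity only up to a $(1\pm\epsilon)$ multiplicative factor with $\epsilon$ a constant (to get $\epsilon = O(1/\lambda)$, which is what distinguishing integer capacities would require, you would need $q\lambda\epsilon^2 = \Omega(\log n)$, i.e.\ $q \gg 1$). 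Consequently a cut $D$ of $G$ with $|\delta_G(D)|$ up to roughly $2\lambda$ can become a min-cut of $H$, and the algorithm of \Cref{thm:small-lambda-small-cut} run on $H$ is only guaranteed to output \emph{some} min-cut of $H$ --- a $2$-approximate min-cut of $G$. Your patch of recording candidates and re-verifying them in $G$ does not repair this: the candidates produced by Algorithm~\ref{alg:s-sized-small-lambda} are the sets $V(T)$ at which the BFS/augmentation process on $D_H$ certifies a minimum directed cut of $H$, and nothing forces the exact min-cut of $G$ to ever appear among them. The statement you rely on, that ``the true min-cut of $G$ remains a min-cut of $H$,'' is exactly what the $(1\pm\frac13)$ guarantee fails to deliver.

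There is a second, independent problem with the concentration step itself: you union-bound only over the $\le n^s$ cuts with small side of size at most $s$. That does not rule out some large-sided cut of $G$ collapsing far below $\frac{2}{3}q\lambda$ in $H$, in which case $H$ need not even possess a small-sized min-cut and the precondition of \Cref{thm:small-lambda-small-cut} is violated. Fixing this requires a Karger-style cut-counting bound over \emph{all} near-minimum cuts of the hypergraph (such bounds exist for rank-$r$ hypergraphs, with rank-dependent losses), which you would need to invoke and fold into the choice of the constant $C$. Even with that repaired, the exactness obstruction above remains, which is presumably why the paper avoids sampling altogether and instead keeps all hyperedges, encoding them as unit-weight vertices of $B$ and shrinking the \emph{graph} via contraction-based kernelization rather than shrinking the \emph{cut values} via sampling.
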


In the rest of this subsection, let $s$ be a positive integer and let $G=(V, E)$ be an $r$-rank $n$-vertex hypergraph of with $m$ hyperedges and size $p$ that has a min-cut $(C, V\setminus C)$ with $|C|\le s$ and min-cut capacity $\lambda \geq 900s^r \log n$.
We construct a vertex-weighted bipartite graph $B=(V_B, E_B)$ as follows:
The vertex set of $B$ is given by $V_B := U_V \cup U_E$, and the edge set of $B$ is $E_B$, where $U_V, U_E, E_B$ are defined next.
The vertex set $U_V$ has a vertex $u_v$ for every vertex $v \in V$.
The vertex set $U_E$ has a vertex $u_e$ for every hyperedge $e \in E$.
The edge set $E_B$ has an edge $\{u_v, u_e\}$ for every $v \in V$, $e \in E$ such that the hyperedge $e$ contains the vertex $v$ in $G$. We assign a weight of $\infty$ to each vertex in $U_V$ and a weight of $1$ to each vertex in $U_E$. We note that $|V_B| = m + n$ and $|E_B| = p$. Furthermore, the graph $B$ can be constructed from $G$ in $\tilde{O}(p)$ time.

We call a $3$-partition $(L,S,R)$ of the vertices of $B$ a {\em separator} if removing $S$ from $B$ disconnects $L$ from $R$. We define the weight of the separator as the weight of $S$. 
Throughout this section, for a vertex $u \in V_B$, we use $N(u)$ to denote $\{v \in U_V \cup U_E \colon \{u,v\} \in E_B \}$ and $N[u]$ to denote $N(u) \cup \{u\}$. 
By the weights that we have assigned to vertices of $B$, we have the following observation:
\begin{observation}\label{obs:subset-of-ue}
Every min-weight separator $(L, S, R)$ in $B$ has $S\subseteq U_E$. 
\end{observation}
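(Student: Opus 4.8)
The plan is simple: produce one separator of finite weight, and then minimality immediately rules out any infinite-weight vertex from $S$. The infinite-weight vertices of $B$ are precisely those of $U_V$, so this forces $S \subseteq U_E$.

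First I would convert a hypergraph min-cut of $G$ into a separator of $B$. Fix a min-cut $(C, V \setminus C)$ of $G$, and set
\[
L := \{u_v : v \in C\} \cup \{u_e : e \subseteq C\}, \qquad S_0 := \{u_e : e \in \delta(C)\}, \qquad R := V_B \setminus (L \cup S_0).
\]
Unwinding definitions, $R = \{u_v : v \in V \setminus C\} \cup \{u_e : e \subseteq V \setminus C\}$, using that every hyperedge is either contained in $C$, contained in $V \setminus C$, or crossing (i.e., in $\delta(C)$). I would then check that $(L, S_0, R)$ is a separator: each edge of $B$ has the form $\{u_v, u_e\}$ with $v \in e$, so if such an edge survived the deletion of $S_0$ and joined $L$ to $R$, its $U_E$-endpoint $u_e$ would lie outside $S_0$, hence $e \subseteq C$ or $e \subseteq V \setminus C$; but then $v \in e$ places $u_v$ on the same side as $u_e$, a contradiction. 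Since $C, V \setminus C \neq \emptyset$, both $L$ and $R$ are nonempty, so $(L, S_0, R)$ is a legitimate separator, of weight $|S_0| = |\delta(C)| = \lambda < \infty$.

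Consequently any min-weight separator $(L, S, R)$ has weight at most $\lambda$, in particular finite. Since every vertex of $U_V$ has weight $\infty$, a set $S$ of finite weight contains no vertex of $U_V$, i.e., $S \subseteq U_E$, as claimed. There is essentially no obstacle here; the only point that warrants a careful sentence is verifying that deleting $S_0$ genuinely disconnects $L$ from $R$ — which is exactly the three-case analysis on the types of hyperedge above — together with confirming that $L$ and $R$ are nonempty, so that we are exhibiting an honest separator rather than a vacuous one.
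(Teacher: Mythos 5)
Your proof is correct and matches the paper's intent: the paper states this observation without proof, remarking only that it follows from the weight assignment, and your argument fills in exactly the routine step that makes this rigorous — namely, exhibiting a finite-weight separator (built from any min-cut of $G$) so that minimality excludes all infinite-weight vertices of $U_V$ from $S$. (This finite-weight separator is the same one the paper later writes down explicitly in the proof of Proposition~\ref{prop:corresponence-between-cuts-and-separators}.)
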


The following proposition relates min-cut in $G$ to minimum weight separator in $B$ and translates our assumption that the min-cut $(C, V \setminus C)$ in $G$ has $|C| \leq s$ into a constraint on a min-weight separator in $B$. 

\begin{proposition}\label{prop:corresponence-between-cuts-and-separators}
The following hold: 
\begin{enumerate}
\item If $(L,S,R)$ is a min-weight separator in $B$, then $(\{v \colon u_v \in L\}, \{v \colon u_v \in R\})$ is a min-cut in $G$. 
\item If $(C, V \setminus C)$ is a min-cut in $G$,  
then 
for 
\begin{align*}
L &:= \{u_v \colon v \in C\} \cup \{u_e \colon e \in E \text{ and } e \subseteq C\}, \\
S &:= \{u_e \colon e \in \delta(C)\}, \text{ and}\\ 
R &:= \{u_v \colon v \in V \setminus C\} \cup \{u_e \colon e \in E \text{ and } e \subseteq V \setminus C \},
\end{align*}
the tuple $(L,S,R)$ is a min-weight separator in $B$; moreover, if $|C|\le s$, then $|L|\le 3s^r$. 
\end{enumerate}
Hence, an algorithm that runs in time $\tilde{O}(f(p,s,r,\lambda))$ to find a min-weight separator in $B$, for some function $f$, can be used to find a min-cut in $G$ in time $\tilde{O}(p + f(p,s,r,\lambda))$.
\end{proposition}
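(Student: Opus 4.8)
The plan is to establish a weight-preserving correspondence between cuts of $G$ and separators of $B$, from which both parts fall out as the two directions of a single equivalence; the algorithmic ``Hence'' then follows by adding the (already noted) $\tilde{O}(p)$ construction time of $B$. The one genuine subtlety is that the $\infty$-weights on $U_V$ (and the definition of a separator) quietly forbid certain degenerate partitions, so I would dispose of those up front.

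The core is a lower-bound (charging) step: \emph{every separator $(L',S',R')$ of $B$ has weight at least $\lambda$.} If $S'$ contains a vertex of $U_V$ its weight is $+\infty$ and we are done, so assume $S'\subseteq U_E$; then $U_V$ splits between $L'$ and $R'$. If $L'\cap U_V=\emptyset$ (symmetrically for $R'$), then any $u_e\in L'$ has all its $B$-neighbours in $U_V\subseteq R'$, contradicting the separator property; hence, discarding any empty hyperedge, this case cannot occur and $C':=\{v:u_v\in L'\}$ is a nonempty proper subset of $V$, i.e.\ a cut of $G$. For each $e\in\delta(C')$ the vertex $u_e$ is adjacent to some $u_v\in L'$ with $v\in e\cap C'$ and to some $u_w\in R'$ with $w\in e\setminus C'$; if $u_e\notin S'$ then $u_e\in L'\cup R'$, and either way deleting $S'$ leaves a path from $L'$ to $R'$ through $u_e$, a contradiction. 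Thus $\{u_e:e\in\delta(C')\}\subseteq S'$, so $\mathrm{wt}(S')\ge|\delta(C')|\ge\lambda$.

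Next I would handle the construction direction (Part~2): the stated $(L,S,R)$ is a $3$-partition of $V_B$ with no edge between $L$ and $R$ --- an edge $\{u_v,u_e\}$ with $u_v\in L$ forces $v\in C\cap e$, hence $e\not\subseteq V\setminus C$ and $u_e\notin R$, and symmetrically from the $R$ side --- so it is a separator of weight $|S|=|\delta(C)|=\lambda$, which by the charging step is minimum. The ``moreover'' is a count: $|L|=|C|+|\{e\in E:e\subseteq C\}|\le s+\bigl(\binom{s}{2}+\dots+\binom{s}{r}\bigr)\le s+2s^r\le 3s^r$, reusing the bound already used for Algorithm~\ref{alg:s-sized-small-lambda}. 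For Part~1, let $(L,S,R)$ be min-weight; Observation~\ref{obs:subset-of-ue} gives $S\subseteq U_E$, the degenerate-case analysis from the charging step (now using $\mathrm{wt}(S)\le\lambda$, which holds by Part~2) shows $C:=\{v:u_v\in L\}$ is a genuine cut with complement $\{v:u_v\in R\}$, and the same charging argument gives $\{u_e:e\in\delta(C)\}\subseteq S$, so $\lambda\le|\delta(C)|\le\mathrm{wt}(S)=\lambda$ and $(C,V\setminus C)$ is a min-cut. Finally, building $B$ from $G$ costs $\tilde{O}(p)$, the assumed routine returns a min-weight separator in $\tilde{O}(f(p,s,r,\lambda))$, and translating it to a min-cut via Part~1 costs $O(|V_B|)=O(m+n)=O(p)$, giving the claimed $\tilde{O}(p+f(p,s,r,\lambda))$.

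I expect the degenerate separators --- those with $S$ meeting $U_V$, or with all of $U_V$ on one side --- to be the step needing the most care: these are exactly the configurations the $\infty$-weights exist to suppress, and they must be ruled out (or shown to have weight $\ge\lambda$) before one is entitled to say ``a cut of $G$ has capacity at least $\lambda$.'' Everything else is routine bookkeeping about the bipartite incidence structure of $B$.
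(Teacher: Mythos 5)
Your proposal is correct and follows essentially the same route as the paper: both rely on Observation~\ref{obs:subset-of-ue} to force $S\subseteq U_E$, establish the cut-separator correspondence via the same ``$u_e\in S$ for every $e\in\delta(C)$'' argument, use the same $s+2s^r\le 3s^r$ count, and invoke the $\tilde O(p)$ construction time of $B$. The only difference is presentational --- you front-load a single ``every separator has weight $\ge\lambda$'' charging lemma and then derive both parts from it, whereas the paper argues the two inequalities separately --- but the underlying mathematics is identical.
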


\begin{proof}
Let $(L,S,R)$ be a min-weight separator in $B$. By Observation \ref{obs:subset-of-ue} we have that $S \subseteq U_E$, and therefore $(\{v \colon u_v \in L\}, \{v \colon u_v \in R\})$ is a partition of $V$. Observation \ref{obs:subset-of-ue} also implies that $L \cap U_V$ and $R \cap U_V$ are both nonempty, and therefore $(\{v \colon u_v \in L\}, \{v \colon u_v \in R\})$ is a cut. Every edge $e$ crossing this cut is of the form $e = \{u_x,u_y\}$, where $u_x \in L$ and $u_y \in R$. For such an edge $e$, the vertex $u_e$ is adjacent to vertices in both $L$ and $R$, and therefore, by the definition of a separator, we have that $u_e \in S$. Thus, we have that the weight of the cut $(\{v \colon u_v \in L\}, \{v \colon u_v \in R\})$ is at most $|S|$.

Next, let $(C, V \setminus C)$ be a min-cut in $G$, and let the sets $L,S,R$ be as defined in the proposition. We note that by the definition of $B$, removing $S$ will disconnect $L$ from $R$. Thus, $(L,S,R)$ is indeed a separator. Furthermore, since $S \subseteq U_E$, the weight of $(L,S,R)$ is $|S| = |\delta(C)|$.

The arguments in the previous two paragraphs imply that the min-cut capacity in $G$ is at most the weight of a min-weight separator in $B$, and the weight of a minimum weight separator in $B$ is at most the min-cut capacity in $G$. Thus, the min-cut capacity in $G$ is equal to the weight of a min-weight separator in $B$. 

Next, we show that if $|C|\le s$, then $|L|\le 3s^r$. We note that $L = \{u_v \colon v \in C\} \cup \{u_e \colon e \in E \text{ and } e \subseteq C\}$. We have that $|\{u_v \colon v \in C\}| \leq s$, and \[|\{u_e \colon e \in E \text{ and } e \subseteq C\}| \leq \binom{s}{r} + \binom{s}{r-1} \dots \binom{s}{2} \leq s^r + \dots s^2 \leq 2s^r.\]
Thus, $|L| \leq s + 2s^r \leq 3s^r$.

The run-time to obtain a min-cut of $G$ from a min-weight separator of $B$ follows from the fact that $B$ can be constructed in $\tilde{O}(p)$ time. 
\end{proof}

By 
Proposition \ref{prop:corresponence-between-cuts-and-separators}, 
in order to prove Theorem \ref{thm:big-lambda-small-cut}, it suffices to design an algorithm  that runs in time $\tilde{O}_r(s^{3r}p)$ to find a min-weight separator $(L,S,R)$ in the bipartite graph $B$, under the assumption that there exists a min-weight separator $(L, S, R)$ in $B$ with $|L| \leq 3s^r$. To achieve this, we use a slightly modified version of on algorithm of Li et al. \cite{LNPSY21} to obtain a minor of $B$, which we call a \textit{kernel}, which has the following two properties with high probability:
\begin{enumerate}
    \item A min-weight separator in the kernel can be used to find a min-weight separator in $B$ in $\tilde{O}(p)$ time.
    \item The kernel has at most $9rs^r$ vertices from $U_V$.
\end{enumerate}
The first property helps in reducing the problem of finding a min-weight separator in $B$ to the problem of finding a kernel and of finding a min-weight separator in the kernel. The second property allows us to design an algorithm which finds a min-weight separator in a kernel in time $\tilde{O}(\lambda r^2s^{3r})$.
In Algorithm \ref{alg:find-kernel}, we give a randomized procedure that takes $B$, a set $X \subseteq V$ such that $X \cap L \neq \emptyset$, and an integer $\ell$ with $|L|/2 \leq \ell \leq |L|$ as inputs and returns a kernel. Running this algorithm $O(\log n)$ times with $X = U_V$ for each $\ell \in \{1, 2, 4, \dots, 2^{\lceil\log(3s^r)\rceil} \}$ will return a kernel. We show the correctness of Algorithm \ref{alg:find-kernel} in Lemma \ref{lem:alg-finds-kernel}. Unfortunately, the runtime of Algorithm \ref{alg:find-kernel} is $\tilde{O}(|E_H| \cdot |X|)$, which is too slow for our purposes. In Lemma \ref{lem:efficient-kernel}, we show that there is a faster algorithm that returns the same kernel as Algorithm \ref{alg:find-kernel}. Finally, in Lemma \ref{lem:alg-for-min-separator} we
obtain an $\tilde{O}(s^{3r}p)$ time algorithm for finding a min-weight separator in $B$ by combining the algorithm for finding a kernel with an algorithm due to Dinitz \cite{Dinitz06} which finds a min-weight separator in a kernel.



We will need the following definitions (we recall that the graph $B$ has $m+n$ vertices):

\begin{definition} [\cite{LNPSY21}]
\begin{enumerate}
    \item For $X, Y\subseteq V_B$, a separator $(L,S,R)$ is an \emph{$(X,Y)$-separator} if $X \subseteq L, Y \subseteq R$. 
    \item A minimum weight $(X,Y)$-separator will be denoted as a \emph{min $(X,Y)$-separator}. 
    \item For a positive integer $t$, a separator $(L,S,R)$ is a \emph{$t$-scratch} if $|S| \leq t$, $|L| \leq t/(100 \log (m+n))$, and $|S_{low}| \geq 300|L|\log (m+n)$, where $S_{low} := \{v \in S : \deg(v) \leq 8t\}$. 
\end{enumerate}

\end{definition}

\begin{definition}\label{def:kernel}
Let $x \in V_B$. 
We say that $(B_x,Z_x,x,t_x)$ is a kernel if the following hold: 
\begin{enumerate}
    \item $B_x$ is a subgraph of the graph obtained from $B$ by contracting some subset $T_x \subseteq V_B\setminus \{x\}$ 
    into the single vertex $t_x$ such that $B_x$ contains $x$ and $t_x$.
    \item $Z_{x}$ is a subset of $V_B$ such that $(L,S,R)$ is a min $(\{x\},\{t_x\})$-separator in $B_{x}$ iff $(L',S \cup Z_{x},R')$ is a min $(\{x\},T_x)$-separator in $B$, where $L'$ is the component of $V_B \setminus (S \cup Z_{x})$ containing $x$, and $R' := V_B \setminus (L' \cup S \cup Z_x)$. 
\end{enumerate}
\end{definition}

Let $(L,S,R)$ be a min-weight separator in $B$, and let $x \in L$. The following lemma gives us a way to choose a set $T_x$ so that with constant probability $\emptyset \neq T_x \subsetneq R$. For such a $T_x$, we have that $(L,S,R)$ is a min $(\{x\}, T_x)$-separator, which means that we can potentially recover $(L,S,R)$ from the min-weight separator in a kernel.
\begin{lemma}\label{lem:Tx-subset-R}
Let $(L,S,R)$ be a min-weight separator in $B$ with $|L|\le 3s^r$. Let $x \in L$ and let $\ell$ be a positive integer with $|L|/2 \leq \ell \leq |L|$. Let $T$ be a subset of $V(B)$ chosen by sampling each vertex with probability $1/(8\ell)$, and let $T_x = T \setminus N[x]$. Then, we have that $\emptyset \neq T_x \subsetneq R$ with probability $\Omega(1)$.
\end{lemma}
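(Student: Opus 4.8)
The plan is to show each of the two requirements, $T_x\neq\emptyset$ and $T_x\subsetneq R$, separately with constant probability and then combine them via independence. Recall that in our application the seed set $X$ lies in $U_V$, so we may take $x=u_v$ for a vertex $v$ of $G$; since $x\in L$, writing $C:=\{w\in V:u_w\in L\}$ we have $v\in C$ and $|C|=|L\cap U_V|\geq1$. I would first record two deterministic facts. (i) Since $(L,S,R)$ is a separator, $B$ has no edge between $L$ and $R$, so $N[x]\subseteq L\cup S$ and in particular $N[x]\cap R=\emptyset$; hence $R\subseteq V_B\setminus N[x]$, and whenever $T$ avoids $(L\cup S)\setminus N[x]$ we get $T_x=T\setminus N[x]=T\cap R$. (ii) The set $(L\cup S)\setminus N[x]$ has size at most $2(|L|-1)$: trivially $|L\setminus N[x]|\leq|L|-1$ because $x\in L\cap N[x]$, and $|S\setminus N[x]|\leq|L|-1$ because, by Observation \ref{obs:subset-of-ue}, $S\subseteq U_E$; the weight of any min-weight separator equals the min-cut capacity $\lambda$ by Proposition \ref{prop:corresponence-between-cuts-and-separators}, so $|S|=\lambda$; the $d(v)$ hyperedge-vertices $u_e$ with $v\in e$ all lie in $L\cup S$, of which at most $|L\cap U_E|=|L|-|C|\leq|L|-1$ lie in $L$, so at least $d(v)-(|L|-1)\geq\lambda-(|L|-1)$ lie in $S$ (using $\lambda\leq d(v)$, the capacity of the cut $(\{v\},V\setminus\{v\})$); hence $|S\setminus N[x]|=|S|-|S\cap N(u_v)|\leq|L|-1$. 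Using $\ell\geq|L|/2$ this gives $|(L\cup S)\setminus N[x]|\leq2|L|\leq4\ell$.

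With (i) and (ii) in hand the rest is a routine tail computation. Since $(L\cup S)\setminus N[x]$ and $R$ are disjoint, the event $\{T\cap((L\cup S)\setminus N[x])=\emptyset\}$ is independent of everything determined by $T\cap R$. By (ii) and the monotonicity of $y\mapsto(1-1/y)^y$ (for integer $\ell\geq1$ we have $8\ell\geq8$), $\Pr[T\cap((L\cup S)\setminus N[x])=\emptyset]=(1-\tfrac1{8\ell})^{|(L\cup S)\setminus N[x]|}\geq(1-\tfrac1{8\ell})^{4\ell}\geq(\tfrac78)^{4}>\tfrac12$. Conditioned on this event $T_x=T\cap R$, so it remains to lower bound $\Pr[\emptyset\neq T\cap R\subsetneq R]$. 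We have $\Pr[R\subseteq T]=(8\ell)^{-|R|}\leq\tfrac18$ since $|R|\geq1$, and $\Pr[T\cap R=\emptyset]=(1-\tfrac1{8\ell})^{|R|}$ is bounded away from $1$ since $|R|\geq n-|C|\geq n-3s^r$ while $\ell\leq|L|\leq3s^r$, provided $n$ exceeds a fixed constant multiple of $s^r$ (if not, no kernelization is needed because $B$ itself already has $O(s^r)$ vertices in $U_V$, so that regime is handled separately). A union bound then yields $\Pr[\emptyset\neq T\cap R\subsetneq R]=\Omega(1)$, and multiplying the two independent factors gives $\Pr[\emptyset\neq T_x\subsetneq R]=\Omega(1)$.

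The one genuinely substantive step is fact (ii), and within it the bound $|S\setminus N[x]|\leq|L|-1$. A priori $|S|=\lambda$ is huge (at least $900s^r\log n$), so it is not obvious that deleting $N[x]$ leaves only $O(s^r)$ vertices of $S$; the gain comes precisely from $x$ being a vertex of $U_V$ --- this is why the hypothesis must effectively be $x\in L\cap U_V$ rather than an arbitrary $x\in L$ (for $x=u_e$ with $e\subseteq C$ one would have $S\setminus N[x]=S$) --- together with the two numeric inputs $|S|=\lambda$ and $\lambda\leq d(v)$. Everything else is standard tail estimation plus the small-$n$ bookkeeping noted above.
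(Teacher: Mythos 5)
Your proof is correct and mirrors the paper's argument: the same deterministic bound on $|(L\cup S)\setminus N[x]|$ drives everything (your derivation via $d(v)\ge\lambda$ and $|S|=\lambda$ is equivalent to the paper's via $|N(x)|\ge\lambda$, both landing on $O(|L|)$), and your explicit independence-based combination is a cleaner version of the union bound the paper performs implicitly across its three paragraphs. One avoidable wrinkle: your lower bound $|R|\ge n-3s^r$ forces the small-$n$ caveat you flag, whereas the paper sidesteps this entirely by invoking the surrounding hypothesis $\lambda\ge 900s^r\log n$ (together with $n\ge 2r$) to conclude $|V_B\setminus N[x]|\ge|U_E|/4\ge\lambda/4$, so that $|R|\ge\lambda/4-|L|=\Omega(s^r\log n)$ with no side condition on $n$.
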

\begin{proof}
By the hypothesis of Theorem \ref{thm:big-lambda-small-cut}, we have that $|S| = \lambda \geq 900s^r\log n$. 
We also have that 
$|L| \leq 3s^r \leq |S|/2$. We also note that a random separator consisting of the neighbors of a single vertex of $U_V$ has expected size at most $r|U_E|/|U_V|$, so $|S| \leq (r/n)|U_E| \leq |U_E|/2$. Consequently, we have that 
\[|(U_V \cup U_E) \setminus N[x]| \geq |U_E| - |S| - |L|  \geq |U_E| - (|U_E|/2) - (|U_E|/4) \geq |U_E|/4 \geq |S|/4 \geq s^r.\] Since each vertex of $V(B) \setminus N[x]$ is included in $T_x$ with probability $1/(8\ell) \geq 1/24s^r$, it follows that at least one of them will be included in $T_x$ with at least constant probability.

Now we note that since $N(x)$ is a separator in $B$, $|N(x)| \geq \lambda$. Thus, $|(L \cup S) \setminus N[x]| \leq (|L| + \lambda) - \lambda = |L|$. Since each of these vertices is included in $T_x$ with probability at most $1/(8(|L|/2)) = 1/(4|L|)$, the probability that none of them are included is at least a constant.

Since $|R| \geq 1$ by the definition of a separator, and since each vertex of $R$ is included in $T_x$ with probability $1/(8\ell) \leq 1/8$, the probability that at least one vertex of $R$ is excluded from $T_x$ is also at least a constant. 
\end{proof}

Our technique to find a min-weight separator in $B$ is to 
find, for each $x \in U_V$, a kernel with $O(\lambda r s^{2r})$ edges. Algorithm \ref{alg:find-kernel} is a slightly modified version of an algorithm of Li et al. \cite{LNPSY21} for computing a small kernel. The difference is that while our algorithm contracts $T_x' := T_x \cup (N(T_x) \cap U_V)$, their algorithm merely contracts $T_x$.
We need this additional contraction in order to prove Lemma \ref{lem:kernel-is-small}, which bounds the size of the kernel. This bound on the size of the kernel that we obtain is necessary to prove Claim \ref{clm:min-separator-in-kernel}, which shows that we can find a min-weight separator in a kernel efficiently.
Additionally, while Li et al. analyzed the algorithm only for an unweighted graph, we run it on the graph $B$, which has vertex weights (of a special kind). 

\begin{algorithm}[ht]
\caption{{FindKernel}($B,X,\ell$)}\label{alg:find-kernel}
\begin{algorithmic}
\State $T \gets \emptyset$
\For{$v$ in $V(B)$}
\State Add $v$ to the set $T$ with probability $1/(8\ell)$
\EndFor
\For{$x \in X$}
\State $T_x \gets T  \setminus N[x]$
\State $T_x' \gets T_x \cup (N(T_x) \cap U_V)$
\State Obtain $B'$ from $B$ by contracting the vertices of
$T_x'$
into a single vertex $t_x$
\State $Z_{x} \gets \{v \in V(B') \colon \{x,v\},\{v,t_x\} \in E(B')\}$
\State Obtain $B''$ from $B'$ by deleting the vertices of $Z_{x}$ and the edges that are fully contained in $N(t_x)$ 
\State Obtain $B'''$ from $B''$ by deleting all vertices which are disconnected from $x$ in $B'' \setminus N[t_x]$
\State Obtain $B_{x}$ from $B'''$ by deleting from $B'''$ all neighbors of $t_x$ with degree one 
\EndFor
\State \Return $\{ (B_{x},Z_{x},x,t_x) \colon x \in X\}$
\end{algorithmic}
\end{algorithm}

Now we show the correctness of Algorithm \ref{alg:find-kernel}.

\begin{lemma}\label{lem:alg-finds-kernel}
Consider executing Algorithm \ref{alg:find-kernel} on input $(B, X, \ell)$ for some positive integer $\ell$ such that $|L|/2 \le \ell \le |L|$, where $X$ is a subset of $U_V$ and $(L, S, R)$ is a min-weight separator in $B$. For $x\in X\cap L$, if the set $T_x$ chosen by the algorithm satisfies $\emptyset \neq T_x \subsetneq R$, then the tuple $(B_x, Z_x,x,t_x)$ returned by the algorithm is a kernel.

\end{lemma}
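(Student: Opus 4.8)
The plan is to verify the two defining properties of a kernel from Definition~\ref{def:kernel} for the tuple $(B_x,Z_x,x,t_x)$ produced by Algorithm~\ref{alg:find-kernel}, taking the contracted subset in Definition~\ref{def:kernel}(1) to be $T_x' = T_x \cup (N(T_x)\cap U_V)$. First I would record the easy structural facts. Since $T_x = T\setminus N[x]$ contains no neighbour of $x$, and since $B$ is bipartite with $x\in U_V$, the vertex $x$ is adjacent in $B$ to no vertex of $T_x'$; hence $x\neq t_x$, $x\notin N_{B'}[t_x]$, $x\notin Z_x$, and $x$ is not a degree-one neighbour of $t_x$, so $x$ survives all four deletion steps and lies in $B_x$. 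Likewise $t_x$ is never placed into $Z_x$, never lies in $B''\setminus N[t_x]$, and is not a neighbour of itself, so $t_x\in B_x$; and since $B',B'',B''',B_x$ are obtained from one another only by deleting vertices and edges, $B_x$ is a subgraph of the graph obtained from $B$ by contracting $T_x'$. This establishes property (1), modulo the minor point (discussed below) that $t_x$ remains a genuine, non-isolated vertex.

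For property (2) I would build up the separator correspondence by peeling off one reduction at a time, using throughout that every finite-weight separator lies in $U_E$ (because $U_V$-vertices have weight $\infty$) and that a vertex separator $(L,S,R)$ induces no edge between $L$ and $R$. The first step, $B\rightsquigarrow B'$, is where the hypothesis $\emptyset\neq T_x\subsetneq R$ enters: from $T_x\subseteq R$, bipartiteness and $S\subseteq U_E$ force $N(T_x)\cap U_V\subseteq R$, so $T_x'\subseteq R$; contracting a subset of the far side into $t_x$ is weight-preserving and gives a bijection between minimum $(\{x\},T_x)$-separators of $B$ (equivalently minimum $(\{x\},T_x')$-separators of $B$, since a finite-weight $(\{x\},T_x)$-separator automatically also separates $x$ from $N(T_x)\cap U_V$) and minimum $(\{x\},\{t_x\})$-separators of $B'$. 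The remaining reductions are each lossless on minimum $(\{x\},\{t_x\})$-separators: (i) every common neighbour $z$ of $x$ and $t_x$ lies in \emph{every} $(\{x\},\{t_x\})$-separator (via the path $x$–$z$–$t_x$), so deleting $Z_x$ exactly lowers the optimum value by $|Z_x|$ and shifts the optimal separators by $Z_x$ — this is the source of the ``$S\cup Z_x$'' in Definition~\ref{def:kernel}(2); (ii) an edge with both endpoints in $N(t_x)$ is irrelevant, since each endpoint already lies in $S$ or on the $t_x$-side (being joined to $t_x$ by an edge); (iii) a vertex disconnected from $x$ in $B''\setminus N[t_x]$, and a degree-one neighbour of $t_x$, can only ever lie in $S$ or on the $t_x$-side of a minimum $(\{x\},\{t_x\})$-separator, so deleting it preserves all such minimum separators. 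Chaining these equivalences, $(L,S,R)$ is a minimum $(\{x\},\{t_x\})$-separator of $B_x$ if and only if $(L',S\cup Z_x,R')$, with $L'$ the component of $x$ in $V_B\setminus(S\cup Z_x)$ and $R'$ the complement, is a minimum $(\{x\},T_x)$-separator of $B$ — which is exactly property (2).

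I expect the main obstacle to be making reduction (iii) rigorous, i.e.\ showing that deleting vertices ``behind'' $N[t_x]$ and degree-one neighbours of $t_x$ never removes a vertex used by some minimum $(\{x\},\{t_x\})$-separator. The easy direction is that such a vertex $v$ cannot lie in $L$ (any $x$–$v$ path avoiding $S$ would have to reach $v$ through a vertex of $N(t_x)\setminus S$, which would then sit in $L$ adjacent to $t_x\in R$, impossible); the delicate direction is ruling out $v\in S$ for some optimal $S$, for which I would argue that because every $x$–$v$ path passes through $N(t_x)$ (first type) or $v$'s only neighbour is $t_x$ (second type), one may move $v$ from $S$ to the $t_x$-side without increasing the weight and without reconnecting $x$ to $t_x$, contradicting minimality. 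The remaining care is the bookkeeping that $t_x$ stays non-isolated and that the extra contraction of $N(T_x)\cap U_V$ (the deviation from the algorithm of Li et al.) does not disturb any of the above — both again following from $T_x\subsetneq R$ and the fact that every modification only touches vertices that are on, or forced onto, the $t_x$-side.
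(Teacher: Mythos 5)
Your proof is correct and follows the same overall skeleton as the paper's (chain through $B\to B'\to B''\to B'''\to B_x$, arguing each step preserves the kernel property, with the first step using $S\subseteq U_E$ to get $T_x'\subseteq R$), but the tool you use to justify the last deletion steps is genuinely different. The paper packages the Menger duality into a single reusable observation: once $(L,S,R')$ is a min separator in $B'$ with $S\subseteq U_E$, there exist $|S|$ vertex-disjoint $x$--$t_x$ paths in $B'$ each visiting exactly one vertex of $N(x)$ and exactly one of $N(t_x)$; every subsequent deletion (edges inside $N(t_x)$, vertices cut off from $x$ in $B''\setminus N[t_x]$, degree-one neighbours of $t_x$) is then shown in one line to miss all of these witness paths, so the separator value cannot drop. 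You instead argue vertex-by-vertex where a deleted vertex can sit in any optimal separator, ruling out $L$ via the ``first touch of $N(t_x)$ would be an $L$--$R$ edge'' argument and ruling out $S$ via an explicit swap (move $v$ from $S$ to the $t_x$-side, using that $v$ has no edge to $L$, to exhibit a strictly cheaper separator --- ``without increasing the weight'' in your write-up should really be ``strictly decreasing it''). Both are the same Menger-flavoured logic; the paper's version is a bit cleaner because one observation discharges all three deletion steps uniformly, whereas yours re-derives the exchange argument case-by-case, but in return your treatment of property~(1) (that $x$, $t_x$ actually survive and $B_x$ is a genuine subgraph of the contraction) is more explicit than the paper's.

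One small point worth tightening: when you rule out $v\in S$, you also need to observe (as you do, implicitly, by contradiction with optimality) that deleting a vertex that is on the $t_x$-side of every optimum cannot \emph{create} a cheaper separator either --- if $S'$ separated $x$ from $t_x$ in $B''-v$ with $|S'|<\lambda$, then $S'\cup\{v\}$ would be a separator of $B''$ of weight at most $\lambda$, and hence an optimum containing $v$, which you have just excluded. Making that half-sentence explicit closes the loop on the ``iff'' in Definition~\ref{def:kernel}(2).
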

\begin{proof}
Suppose that $\emptyset \neq T_x \subsetneq R$. Then, $(L,S,R)$ is a min $(\{x\},T_x)$-separator, and contracting $T_x$ cannot destroy $(L,S,R)$ (or any other min $(\{x\},T_x)$-separator). Let $R'$ be $R$ with the vertices of $T_x'=T_x \cup (N(T_x) \cap U_V)$ contracted into the single vertex $t_x$. By Observation \ref{obs:subset-of-ue} we have that $S \subseteq U_E$, which means that since $T_x \subseteq R$ then $N(T_x) \cap U_V \subseteq R$ as well. Thus, $(L,S,R')$ is a min-weight separator in $B'$. Therefore, $(B', \emptyset, x, t_x)$ is a kernel. 

By Observation \ref{obs:subset-of-ue}, we have that the weight of the separator $(L,S,R)$ is $|S|$. Consequently, we have the following observation:

\begin{observation}\label{obs:disjoint-paths}
There exist $|S|$ vertex-disjoint $x-t_x$ paths in $B'$, each of which visits exactly one vertex from $N(x)$ and exactly one vertex from $N(t_x)$.
\end{observation}

Since every edge in $B'[N(x)]$ or $B'[N(t_x)]$ is not present in any of these paths, removing all of these edges does not affect the weight of a min $(\{x\},\{t_x\})$-separator in $B'$. On the other hand, every vertex $v \in N(x) \cap N(t_x)$ is included in every $(\{x\},\{t_x\})$-separator in $B'$. Therefore, $(B'',Z_{x},x,t_x)$ is a kernel. 

By Observation \ref{obs:disjoint-paths}, removing from $B''$ vertices which are disconnected from $x$ in $B'' \setminus N[t_x]$ cannot affect the size of a min $(\{x\},\{t_x\})$-separator, since such vertices cannot participate in any $s-t_x$ paths which use only a single vertex of $N(t_x)$. Thus, $(B''',Z_{x},x,t_x)$ is also a kernel.

Finally, we note that degree one neighbors of $t_x$ cannot participate in any $x-t_x$ paths using only one vertex from $B'''[N[t_x]]$. Thus, again by Observation \ref{obs:disjoint-paths}, removing such vertices does not affect the weight of a min $(\{x\},\{t_x\})$-separator, and thus $(B_{x},Z_{x},x,t_x)$ is a kernel.
\end{proof}

By Lemmas \ref{lem:Tx-subset-R} and \ref{lem:alg-finds-kernel}, running Algorithm \ref{alg:find-kernel} $O(\log n)$ times for each $\ell\in \{1, 2, 4, 8, ..., 2^{\lceil\log(3s^r)\rceil}\}$ and $X=U_V$ will return a kernel $(B_x, Z_x,x,t_x)$ for some $x\in L$ with high probability, where $(L, S, R)$ is a min-weight separator of $B$ with $|L|\le 3s^r$. 
We next show that the kernel computed by Algorithm \ref{alg:find-kernel} is not too large.
\begin{lemma}\label{lem:kernel-is-small}
Consider executing Algorithm \ref{alg:find-kernel} on input $(B, X, \ell)$ for some positive integer $\ell$ such that $|L|/2 \le \ell \le |L|$, where $X$ is a subset of $U_V$ and $(L, S, R)$ is a min-weight separator in $B$ with $|L|\le 3s^r$. For $x\in X\cap L$, if the set $T_x$ chosen by the algorithm satisfies $\emptyset \neq T_x \subsetneq R$ and the tuple $(B_x, Z_x,x,t_x)$ returned by the algorithm is a kernel, then $B_{x}$ contains at most $9rs^r$ vertices from $U_V$ with high probability.
\end{lemma}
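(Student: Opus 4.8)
The plan is to track precisely which vertices of $U_V$ can survive the construction of $B_x$, reduce the count to the vertices reachable from $x$ after the pruning, and then show this reachable set barely leaves $L$.

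\emph{Reduction to a component.} First I would unwind Algorithm~\ref{alg:find-kernel}. The key observation is that $N(t_x)\subseteq U_E$: since $B$ is bipartite and $T_x'=T_x\cup(N(T_x)\cap U_V)$, every $U_V$-neighbour of a $U_E$-vertex of $T_x$ already lies in $T_x'$, so after contracting $T_x'$ the vertex $t_x$ has no neighbour in $U_V$. (This is exactly why the extra contraction of $N(T_x)\cap U_V$ is built into the algorithm.) Hence $Z_x\subseteq S\subseteq U_E$, and all subsequent deletions — of $Z_x$, of edges inside $N(t_x)$, and of degree-one neighbours of $t_x$ — remove only $U_E$-vertices, while $N[t_x]\cap U_V=\emptyset$. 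Therefore $V(B_x)\cap U_V=K\cap U_V$, where $K$ is the connected component of $x$ in the graph obtained from $B'$ by deleting $Z_x\cup N[t_x]$ (the edge-deletions inside $N(t_x)$ are irrelevant once $N[t_x]$ is removed). Writing $x=u_{v_0}$ with $v_0\in C$, and using the hypothesis $T_x\subsetneq R$ together with $S\subseteq U_E$ to see that $T_x'\subseteq R$ (so $T_x'$ contains no $u_v$ with $v\in C$), we get
\[ |B_x\cap U_V| \;\le\; |L\cap U_V| + \bigl|\{\,w\in V\setminus C:\ u_w\notin T_x'\,\}\bigr| \;\le\; |L| + \bigl|\{\,w\in V\setminus C:\ u_w\notin T_x'\,\}\bigr|. \]
Since $|L|\le 3s^r$, it remains to bound the number of vertices of $V\setminus C$ that escape the contraction.

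\emph{Escaping vertices are few (the main step).} For $w\in V\setminus C$ one has $u_w\notin T_x'$ only if every hyperedge through $w$ that avoids $v_0$ is unsampled. Each hyperedge is sampled independently with probability $1/(8\ell)\ge 1/(24s^r)$ (as $\ell\le|L|\le 3s^r$), and every vertex of a hypergraph has degree at least $\lambda$ (the single-vertex cut). So for a suitable absolute constant $c$, if $w$ were contained in at least $c\,s^r\log n$ hyperedges avoiding $v_0$, then one of them would be sampled except with probability at most $n^{-2}$; a union bound over the at most $n$ vertices shows that, with high probability, every $w\in V\setminus C$ with $u_w\notin T_x'$ lies in fewer than $c\,s^r\log n$ hyperedges avoiding $v_0$, hence — since $d(w)\ge\lambda\ge 900 s^r\log n$ — in at least $\lambda/2$ hyperedges containing both $v_0$ and $w$. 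Call such a vertex \emph{heavy}. Every hyperedge containing $v_0\in C$ and a vertex of $V\setminus C$ lies in $\delta(C)$, and $|\delta(C)|=\lambda$, so counting incidences,
\[ \tfrac{\lambda}{2}\cdot\#\{\text{heavy }w\} \;\le\; \sum_{e\in\delta(C),\,v_0\in e}|e\cap(V\setminus C)| \;\le\; (r-1)\lambda, \]
i.e. there are at most $2(r-1)$ heavy vertices. Combining with the previous display, $|B_x\cap U_V|\le 3s^r+2(r-1)\le 9rs^r$ with high probability.

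\emph{Main obstacle.} The delicate point is the second step: a priori the component $K$ could grow without bound through the high-degree vertex $v_0$, because only a vanishing fraction of the crossing hyperedges at $v_0$ contain a \emph{sampled} vertex. What rescues the argument is the interplay of two facts — the extra contraction of $N(T_x)\cap U_V$ makes $N(t_x)$ absorb every crossing hyperedge all of whose $V\setminus C$-endpoints lie in some sampled hyperedge, and $d(w)\ge\lambda$ for \emph{every} vertex forces all but $O(r)$ vertices of $V\setminus C$ to be absorbed, with the $O(r)$ bound coming from $|\delta(C)|=\lambda$ rather than from $\deg(v_0)$. The only other place requiring care is getting the first step's bookkeeping (on $N(t_x)$, $Z_x$, $B'$, $B''$, $B'''$) exactly right so that the pruning steps provably touch no vertex of $U_V$.
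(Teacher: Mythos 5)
Your proof is correct and follows essentially the same strategy as the paper's: both argue that, with high probability over the sampling of $T$, any $U_V$-vertex $u_w$ escaping contraction must share $\Omega(\lambda)$ hyperedges with $v_0$, and then bound the number of such $w$ by a counting argument against the hyperedges incident to $v_0$. The only differences are cosmetic: you track the component of $x$ directly instead of invoking the kernel's min-separator $(L',S',R')$, and you count incidences against the $\le \lambda$ crossing hyperedges through $v_0$ (giving the sharper intermediate bound $3s^r + 2(r-1)$) rather than against all $\le r(3s^r+\lambda)$ edges incident to $N(x)$ as the paper does.
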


\begin{proof}
Consider $(L', S', R')$ where $S':= S \setminus Z_x$, $L'$ is the component of $V[B_x] \setminus S'$ which contains $x$, and $R' := V[B_x] \setminus (L' \cup S')$. 
Since $T_x\subseteq R$ and $(B_x, Z_x,x,t_x)$ is a kernel, we have that $(L',S',R')$ is a min-weight $(x,t_x)$ separator. We know that $|L' \cap U_V| \leq |L| \leq 3s^r \leq rs^r$ and that $S' \subseteq U_E$, so it suffices to show that $|R'\cap U_V| \leq 8rs^r$.


Consider an arbitrary vertex $v \in R' \cap U_V$. Since $v$ was not contracted into $t_x$, it must be the case that $v$ was not in $T_x$, and none of $v$'s neighbors were in $T_x$. If $v$ had more than $8\ell \log n$ neighbors in $V(B) \setminus N(x)$, then with high probability, one of the neighbors of $v$ would be in $T_x$. Therefore, with high probability, $v$ has at most $8\ell \log n$ neighbors in $V(B) \setminus N(x)$. We know, however, that $v$ has at least $\lambda$ neighbors in $B$. Therefore, $v$ has at least $\lambda - 8\ell \log n$ neighbors in $N(x)$.

Since $x \in L$, we have that $|N(x)| \leq |L| + \lambda \leq 3s^r + \lambda$. Since $x \in U_V$, we also have that $N(x) \subseteq U_E$. Since $G$ has rank $r$, it follows that each vertex in $N(x)$ has degree at most $r$. Thus, the vertices of $N(x)$ collectively have at most $r(3s^r +\lambda)$ edges incident to them. Since each vertex of $R' \cap U_V$ is incident to at least $\lambda - 8\ell \log n$ of those edges, we have that 
\[|R' \cap U_V| \leq \frac{r(3s^r+\lambda)}{\lambda - 8\ell \log n}.\]

From the hypothesis of Theorem \ref{thm:big-lambda-small-cut}, we have that $\lambda \geq 900s^r \log n$. From the hypothesis of the claim we are proving, we have that $\ell \leq 3s^r$. Therefore, $\lambda - 8\ell \log n \geq \lambda /2$, so

\[
\frac{r(3s^r+\lambda)}{\lambda - 8\ell \log n} \leq \frac{2r(3s^r+\lambda)}{\lambda} \leq 2r(3s^r+1) \leq 8rs^r.\qedhere
\]
\end{proof}

The following lemma will be useful in showing that the kernel generated by Algorithm \ref{alg:find-kernel} can be obtained efficiently.
\begin{lemma}\label{lem:efficient-kernel}
There is a randomized algorithm that takes as input the bipartite graph $B$, a subset $X\subseteq U_V$, and positive integers $\ell, t$, 
chooses a subset $T$ of $V(B)$ by sampling each vertex with probability $1/(8\ell)$, 
and runs in time $\tilde{O}(|E(H)| + |X|t\ell)$ to return a tuple $(B_x, Z_x,x,t_x)$ for every $x\in X$. Additionally, the returned collection satisfies the following with high probability: 
\begin{itemize}
    \item For $x\in X$, if $B$ contains a $t$-scratch $(L,S,R)$ with $x \in L$, $|L|/2 \leq \ell \leq |L|$, and $\emptyset \neq T \setminus N[x] \subseteq R$, then the tuple $(B_{x},Z_{x},x,t_x)$ returned by the algorithm is a kernel with $|E(B_{x})| = O(t\ell \log p)$ and $B_x$ contains at most $9rs^r$ vertices from $U_V$.
\end{itemize}


\end{lemma}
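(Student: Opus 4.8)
The plan is to exhibit an implementation of Algorithm~\ref{alg:find-kernel} that, after $O(|E(H)|)=O(p)$ of preprocessing shared across all of $X$, spends only $\tilde{O}(t\ell)$ time on each $x\in X$. The written form of Algorithm~\ref{alg:find-kernel} is slow because, for every $x$, it contracts $T_x'=T_x\cup(N(T_x)\cap U_V)$ inside \emph{all} of $B$ and then prunes the contracted graph, touching $\Theta(p)$ edges --- even though the graph $B_x$ that ultimately survives has only $O(t\ell\log p)$ edges. The fix is never to build the contracted graph $B'$ explicitly: for each $x$ I would instead run a truncated, ``contract-as-you-go'' graph search from $x$ that materializes only a small superset of $B_x$, and then execute the remaining (purely local, deterministic) lines of Algorithm~\ref{alg:find-kernel} on that small graph. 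This is essentially the efficient kernelization of Li et al.~\cite{LNPSY21}, adapted in two harmless ways: we contract $T_x'$ rather than $T_x$ (which only \emph{shrinks} the object we build), and $B$ carries the special vertex weights used throughout this section, which do not affect any of the combinatorial steps below.

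For the shared preprocessing I would sample $T\subseteq V(B)$ and store it for $O(1)$-time membership, and in one scan of $E(B)$ compute, for every $w\in U_V$, the number $c_w$ of neighbours of $w$ in $T$. For a fixed $x$, I would then mark $N[x]$ (aborting immediately with a trivial tuple if $|N(x)|$ exceeds the budget $\Theta(t\ell\log p)$ --- which is safe, since if $x$ lay in the part $L$ of a $t$-scratch then $|N(x)|\le|L|+|S|=O(t)$), and decrement $c_w$ for every vertex $w$ lying in a hyperedge $e$ with $u_e\in N(x)\cap T$, so that afterwards $c_w$ counts neighbours of $w$ in $T\setminus N[x]$. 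With these structures the test ``$u\in T_x'$?'' costs $O(1)$: for $u\in U_E$ it is ``$u\in T$ and $u\notin N[x]$'', and for $u\in U_V$ it is ``$c_u>0$, or ($u\in T$ and $u\notin N[x]$)''. The key combinatorial point here is that a $U_V$-vertex adjacent to $T_x$ lies in $N(T_x)\cap U_V\subseteq T_x'$, so on $U_V$ ``adjacent to $T_x'$'' coincides with ``in $T_x'$'', and consequently $N(t_x)\subseteq U_E$. Now I run BFS from $x$, identifying every encountered vertex of $T_x'$ with a single sink $t_x$, and expanding (scanning the adjacency list of) a vertex $v$ only if $v=x$ or $v\notin N[t_x]$; if more than $\Theta(t\ell\log p)$ edges get scanned I abort. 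When it does not abort, the search has materialized exactly $B^\ast:=B'[\{x\}\cup L^+\cup N_{B'}(\{x\}\cup L^+)\cup\{t_x\}]$, where $L^+$ is the component of $x$ in $B'\setminus N[t_x]$; on $B^\ast$ I then run the remaining lines of Algorithm~\ref{alg:find-kernel} verbatim (form $Z_x=N(x)\cap N(t_x)$, delete the edges inside $N(t_x)$, delete the vertices outside $N[t_x]$ that are disconnected from $x$ in the punctured graph, delete the degree-one neighbours of $t_x$), and finally restore $c_w$ and the $N[x]$-marks before moving on.

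Correctness has two halves. First, when the search does not abort, its output equals that of Algorithm~\ref{alg:find-kernel} on the same $T$: $B_x$ as defined there is a connected subgraph of $B'$ containing $x$ whose only vertices in $N[t_x]$ are surviving neighbours of $t_x$, each of which has a neighbour in $L^+\cup\{x\}$; hence $V(B_x)\setminus\{t_x\}\subseteq\{x\}\cup L^+\cup N_{B'}(\{x\}\cup L^+)$, so $B_x\subseteq B^\ast$, and every pruning test of Algorithm~\ref{alg:find-kernel} is decided by edges present in $B^\ast$ and therefore gives the same verdict in $B^\ast$ as in $B'$. Thus on this event the returned tuple is a kernel with at most $9rs^r$ vertices from $U_V$ by Lemmas~\ref{lem:alg-finds-kernel} and~\ref{lem:kernel-is-small}, and $|E(B_x)|=O(t\ell\log p)$ since $B_x\subseteq B^\ast$ and $B^\ast$ has at most $\Theta(t\ell\log p)$ edges. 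Second, under the lemma's hypothesis --- a $t$-scratch $(L,S,R)$ with $x\in L$, $|L|/2\le\ell\le|L|$ and $\emptyset\ne T\setminus N[x]\subseteq R$ --- the search does not abort with high probability: by the sampling argument behind Lemma~\ref{lem:kernel-is-small}, with high probability every $U_V$-vertex with more than $O(\ell\log p)$ neighbours outside $N(x)$ has a neighbour in $T\setminus N[x]$ and is therefore absorbed into $t_x$, so it is absent from $L^+$; consequently each $v\in\{x\}\cup(L^+\cap U_V)$ has $\deg_B(v)\le|N(x)|+O(\ell\log p)=O(t+\ell\log p)$, while $|L^+\cap U_V|\le|V(B_x)\cap U_V|$ is bounded by Lemma~\ref{lem:kernel-is-small} and every $U_E$-vertex the BFS scans has degree at most $r$. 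Summing these contributions keeps the number of scanned edges within the budget, so the search completes; together with the $\tilde{O}(t\ell)$ of per-$x$ bookkeeping this gives the claimed $\tilde{O}(|E(H)|+|X|t\ell)$ total running time.

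I expect the main obstacle to be precisely this running-time analysis of the truncated, contract-as-you-go BFS. One has to (i) verify that the expansion rule makes the search visit exactly $\{x\}\cup L^+\cup N_{B'}(\{x\}\cup L^+)$ --- in particular that $N(t_x)\subseteq U_E$, so the boundary vertices are cheap and never need to be expanded --- (ii) check that the $B^\ast$ so obtained contains all of $B_x$ and that each subsequent pruning step is faithful to the one in $B'$, and (iii) carry the $T$-sampling concentration of Lemma~\ref{lem:kernel-is-small} from a bound on the number of surviving $U_V$-vertices all the way to a bound on their $B$-degrees, hence on the number of edges the BFS scans and on $|E(B_x)|$. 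Once these are in place, the remaining bookkeeping --- making ``$u\in T_x'$?'' an $O(1)$ test and undoing the $c_w$-decrements and $N[x]$-marks between successive $x$'s without exceeding $\tilde{O}(t\ell)$ per vertex --- is routine.
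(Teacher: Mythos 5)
Your proof takes a genuinely different route from the paper. The paper's proof is essentially a citation: it invokes the efficient kernelization algorithm of Li et al.\ (Algorithm 1 in \cite{LNPSY21}) and only verifies that the two local modifications --- that $B$ carries vertex weights (which the algorithm never reads) and that $T_x'=T_x\cup(N(T_x)\cap U_V)$ is contracted instead of $T_x$ (which only shrinks the object and, by Lemmas~\ref{lem:alg-finds-kernel} and~\ref{lem:kernel-is-small}, still yields a kernel with the required $U_V$-bound) --- do not affect correctness or running time. You instead reconstruct the LNPSY21 kernelization from scratch as a truncated, contract-as-you-go BFS with a budget and abort. The reconstruction buys a self-contained argument and makes explicit several facts the paper leaves implicit (e.g.\ $N(t_x)\subseteq U_E$, $O(1)$-time membership testing in $T_x'$ via the counters $c_w$, and the containment $B_x\subseteq B^\ast$ that makes the local pruning faithful), at the cost of having to redo the runtime analysis rather than import it.

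There is a gap in that redone runtime analysis. To show the BFS stays within budget, you bound the edges scanned by $\tilde O\bigl(|L^+\cap U_V|\cdot(t+\ell\log p)\bigr)$ and then invoke the \emph{stated} bound of Lemma~\ref{lem:kernel-is-small}, $|L^+\cap U_V|\le|V(B_x)\cap U_V|\le 9rs^r$. But $9rs^r\cdot t$ is not $\tilde O(t\ell)$ in general: $\ell$ can be as small as $1$ while $s^r$ can be large, so this yields per-$x$ cost $\tilde O(rs^r t)$, not the claimed $\tilde O(t\ell)$. (In the downstream application this is absorbed into $\tilde O_r(\cdot)$ because the loop over $\ell$ goes up to $3s^r$ anyway, so the final theorem survives, but the lemma as stated is not proved.) The fix is to use the sharper bounds hidden inside the \emph{proof} of Lemma~\ref{lem:kernel-is-small} rather than its conclusion: $|L'\cap U_V|\le|L|\le 2\ell$, and $|R'\cap U_V|\le 2r\bigl(3s^r/\lambda+1\bigr)=O(r)$ once one actually uses $\lambda\ge 900\,s^r\log n$ instead of dropping $\lambda$ from the denominator. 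This gives $|L^+\cap U_V|=O(\ell+r)$, and then $O\bigl((\ell+r)\cdot r\cdot(t+\ell\log p)\bigr)=\tilde O_r(t\ell)$ does close the argument (using $\ell\log p=O(t)$ from the $t$-scratch definition). You flag the degree/edge-count accounting as ``the main obstacle,'' and indeed this is exactly where the argument as written does not go through.
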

\begin{proof}
This follows directly from the results of \cite{LNPSY21}. They design an algorithm (Algorithm 1 in \cite{LNPSY21}) which chooses a set $T$ according to the same distribution as Algorithm \ref{alg:find-kernel}, but uses linear sketching and BFS-like computations starting from $x$ to compute a tuple $(B_{x},Z_{x},x,t_x)$ more efficiently than Algorithm \ref{alg:find-kernel}; moreover, if $B$ contains a $t$-scratch $(L,S,R)$ with $x \in L, |L|/2 \leq \ell \leq |L|$, and $\emptyset\neq T \setminus N[x] \subseteq R$, then with high probability the tuple $(B_x,Z_x,x,t_x)$ returned by their algorithm is a kernel with $|E(B_{x})| = O(t\ell \log p)$, and this kernel is the tuple that would be output by their version of Algorithm \ref{alg:find-kernel} when run on the same input with the same source of randomness.


We obtain an algorithm with the guarantees given in the statement of the lemma by applying the modifications described in \cite{LNPSY21} to our version of Algorithm \ref{alg:find-kernel}. There are only two differences between Algorithm \ref{alg:find-kernel} and the corresponding algorithm in \cite{LNPSY21}. First, their algorithm is only for unweighted graphs, while we run Algorithm \ref{alg:find-kernel} on the weighted graph $B$. Second, we contract $T_x' := T_x \cup (N(T_x) \cap U_V)$ rather than just $T_x$. We now show that neither of those differences affects the correctness or runtime of the modified algorithm.

We note that since Algorithm \ref{alg:find-kernel} does not even look at the weights on the vertices, the fact that $B$ is weighted does not affect the runtime of our algorithm. 

The additional contraction of $N(T_x) \cap U_V$ into $t_x$ can be performed without increasing the runtime of the algorithm, and as shown in Lemma \ref{lem:alg-finds-kernel} the resulting algorithm still finds a kernel whenever $\emptyset \neq T_x \subsetneq R$. Furthermore, by Lemma \ref{lem:kernel-is-small}, this kernel contains at most $9rs^r$ vertices from $U_V$ with high probability. We skip the detailed description and analysis of the faster algorithm for finding kernels, since it is identical to \cite{LNPSY21}.
\end{proof}

The following lemma completes the proof of Theorem \ref{thm:big-lambda-small-cut}.
\begin{lemma}\label{lem:alg-for-min-separator}
There exists a randomized algorithm that runs in time $\tilde{O}(s^{3r}p)$ to find a min-weight separator in $B$ with high probability.
\end{lemma}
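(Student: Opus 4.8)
The plan is to assemble the pieces already developed in this subsection into a single algorithm and to account for its running time and success probability. First I would fix the target scratch parameter: since the min-cut capacity of $G$ equals the weight $|S|=\lambda$ of a min-weight separator $(L,S,R)$ in $B$, and since $\lambda\ge 900s^r\log n$ while $|L|\le 3s^r$ by Proposition~\ref{prop:corresponence-between-cuts-and-separators}, a min-weight separator is automatically a $t$-scratch for an appropriate choice of $t=\Theta(\lambda)$ (one checks $|L|\le t/(100\log(m+n))$ and, because every vertex of $U_E$ has degree at most $r\le 8t$, that $S_{low}=S$ so $|S_{low}|=\lambda\ge 300|L|\log(m+n)$). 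Thus the hypotheses of Lemma~\ref{lem:efficient-kernel} are met with this $t$ for the true source vertex $x\in L$.

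Next I would describe the outer loop. Run the efficient kernel-finding algorithm of Lemma~\ref{lem:efficient-kernel} with $X=U_V$, with $t=\Theta(\lambda)$, and with $\ell$ ranging over the $O(\log(s^r))=O_r(\log n)$ powers of two in $\{1,2,4,\dots,2^{\lceil\log(3s^r)\rceil}\}$, repeating each call $O(\log n)$ times to drive up the success probability; by Lemmas~\ref{lem:Tx-subset-R}, \ref{lem:alg-finds-kernel}, and \ref{lem:efficient-kernel}, with high probability at least one of these runs produces, for the true $x\in L$ and the correct $\ell$ with $|L|/2\le\ell\le|L|$, a genuine kernel $(B_x,Z_x,x,t_x)$ with $|E(B_x)|=O(t\ell\log p)=O_r(\lambda s^r\log p)$ and at most $9rs^r$ vertices from $U_V$. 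For each tuple returned, find a min-weight $(\{x\},\{t_x\})$-separator in $B_x$ using Dinitz's algorithm \cite{Dinitz06}; by the defining property of a kernel (Definition~\ref{def:kernel}), unioning that separator with $Z_x$ and taking the side containing $x$ yields a min $(\{x\},T_x)$-separator in $B$, and when $\emptyset\neq T_x\subsetneq R$ this has weight exactly $\lambda$, i.e.\ it is a min-weight separator in $B$. Return the minimum-weight separator found over all tuples; translate it back to a min-cut of $G$ via Proposition~\ref{prop:corresponence-between-cuts-and-separators}.

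For the running time, I would argue as follows. Each invocation of the Lemma~\ref{lem:efficient-kernel} algorithm costs $\tilde O(|E_B|+|U_V|\,t\ell)=\tilde O(p+n\lambda s^r)$; since $G$ is simple we may assume $p=O(\lambda n)$ after the Chekuri--Xu sparsification used upstream, and in any case the total over all $O_r(\log^2 n)$ invocations is $\tilde O_r(p+n\lambda s^r)$. The key point making this good enough is the kernel-size bound: each kernel has $O_r(\lambda s^r\log p)$ edges and $O_r(s^r)$ vertices from $U_V$, so a min $(\{x\},\{t_x\})$-separator of weight $\le\lambda$ can be found by Dinitz's algorithm in $O(\lambda\cdot|E(B_x)|)=\tilde O_r(\lambda^2 s^{2r})$ time per kernel, and there are at most $n\cdot O_r(\log^2 n)$ kernels total, giving $\tilde O_r(n\lambda^2 s^{2r})$; using $p=O(\lambda n)$, i.e.\ $n=O(p/\lambda)$, this is $\tilde O_r(p\lambda s^{2r})$. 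Wait---this still has a $\lambda$ factor, so I should instead observe (as in Claim~\ref{clm:min-separator-in-kernel}, which I would invoke) that because $B_x$ has only $O_r(s^r)$ vertices in $U_V$ and $S'\subseteq U_E$ with $|S'|=\lambda$, one can first discard low-degree duplicate $U_E$-vertices and bound the relevant flow computation by $\tilde O_r(s^{3r}p)$; combined with the $\tilde O_r(p+n\lambda s^r)=\tilde O_r(s^r p)$ kernel-construction cost this yields the claimed $\tilde O_r(s^{3r}p)$.

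The main obstacle I expect is the running-time bookkeeping for the separator computation inside each kernel: naively Dinitz costs $\lambda$ augmentation rounds, which is too slow, so one must use the structural bound that the kernel has only $O_r(s^r)$ vertices from $U_V$ (hence few ``real'' sides to a separator) to cap the work---this is exactly the role of the as-yet-unstated Claim~\ref{clm:min-separator-in-kernel}, and getting its statement and the $s^{3r}$ exponent to line up with the theorem is the delicate part. The correctness side is comparatively routine given Lemmas~\ref{lem:Tx-subset-R}--\ref{lem:efficient-kernel} and the kernel definition.
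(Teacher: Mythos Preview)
Your overall architecture matches the paper exactly: loop over $\ell\in\{1,2,\dots,2^{\lceil\log(3s^r)\rceil}\}$, repeat $\Theta(\log n)$ times, call the kernel-finding routine of Lemma~\ref{lem:efficient-kernel} with $X=U_V$ and $t=\Theta(\lambda)$ (the paper takes $t=k+r+300s^r\log n$), verify via the $t$-scratch condition that the hypotheses are met, run Dinitz inside each returned kernel, and lift the result back through Definition~\ref{def:kernel}. Your correctness outline and the kernel-construction time $\tilde O_r(ps^r+ns^{2r})$ are right.

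The one place where your reasoning goes off is the mechanism behind Claim~\ref{clm:min-separator-in-kernel}. You correctly flag that a naive $\lambda$-round flow computation per kernel is too slow and that the claim must exploit the $O_r(s^r)$ bound on $|V(B_x)\cap U_V|$, but your suggested fix (``discard low-degree duplicate $U_E$-vertices'') is not what makes it work. The actual argument is a \emph{diameter bound}: because $B$ is bipartite with one side $U_V$, every path in $B_x$ that visits $d$ vertices must visit $\Omega(d)$ vertices of $U_V$ (the contraction into $t_x$ costs at most two consecutive non-$U_V$ vertices). Hence every $x$--$t_x$ path in $B_x$ has length $O(rs^r)$. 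Dinitz's algorithm terminates after at most as many blocking-flow phases as the length of the shortest augmenting path can grow, so it uses only $O(rs^r)$ phases rather than $\lambda$. Each phase costs $|E(B_x)|=\tilde O(t\ell\log p)=\tilde O_r(\lambda s^{2r})$, giving $\tilde O_r(\lambda s^{3r})$ per kernel; summing over $\tilde O(n)$ kernels and using $\lambda n\le p$ (which holds unconditionally, no sparsification needed) yields the claimed $\tilde O_r(s^{3r}p)$.
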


\begin{proof}
Let $k$ be the approximation of $\lambda$ obtained by running the algorithm from Theorem \ref{thm:CX-approximation}. To find a min-weight separator in $B$, we run the algorithm from Lemma \ref{lem:efficient-kernel} on $B$ with $t = k+r+300s^r\log n$, and $X = U_V$ for $\Theta(\log n)$ times for each $\ell \in \{ 1,2, 4,8, \dots, 2^{\lceil \log(3s^r)\rceil}\}$. Then, for each tuple $(B_{x}, Z_{x},x,t_x)$ returned, if $B_{x}$ contains at most $9rs^r$ vertices from $U_V$ then we compute a min $(\{x\}, \{t_x\})$-separator in $B_{x}$. For each such separator $(L,S,R)$, we compute a corresponding separator $(L', S', R')$ in $B$ by setting $S' := S \cup Z_x$, $L'$ to be the component of $V_B \setminus S'$ which contains $x$, and $R' := V_B \setminus (L' \cup S')$. By Definition \ref{def:kernel}, if $(B_x, Z_x,x,t_x)$ is a kernel then $(L',S',R')$ is a min-weight $(\{x\},T_x)$-separator in $B$.
We return the cheapest among all computed separators.

We first analyze the correctness of our procedure. We will need the following claim:

\begin{claim}\label{clm:min-cut-is-scratch}
Let $(L,S,R)$ be a min-weight separator in $B$ with $|L| \leq 3s^r$. Then, $(L,S,R)$ is a $t$-scratch.
\end{claim}

\begin{proof}
Since each min-weight separator in $B$ corresponds to a min-cut in $G$, we know that $|S| = \lambda \leq k \leq t$. By assumption and by our choice of $t$, we have that $|L| \leq 3s^r \leq t/(100\log n)$. By Observation \ref{obs:subset-of-ue}, we have that $S \subseteq U_E$. Also, by the definition of $B$, every vertex in $U_E$ has degree at most $r$, which is less than $8t$. Therefore, we have that $S_{low} = S$. Thus, $|S_{low}| = |S| = \lambda \geq 900s^r\log n \geq 300|L| \log n$, so $(L,S,R)$ is a $t$-scratch.\end{proof}

Let $(L, S, R)$ be a min-weight separator in $B$ with $|L|\le 3s^r$. By Claim \ref{clm:min-cut-is-scratch}, $(L,S,R)$ is a $t$-scratch. Since we run the algorithm of Lemma \ref{lem:efficient-kernel} $\Theta(\log n)$ times 
for each $\ell = 1,2,4,8, \dots, 2^{\lceil \log(3s^r)\rceil}$, we will run the algorithm $\Theta(\log n)$ times for some 
$\ell \in [|L|/2, |L|]$. By Lemma \ref{lem:Tx-subset-R}, for each $x \in L$, each time we run the algorithm, the set $T$ that is chosen satisfies $\emptyset \neq T_x \subsetneq R$ with constant probability, so with high probability this occurs for some $x \in L$ at least once. Lemma \ref{lem:efficient-kernel} tells us that when the algorithm chooses such a $T$, then with high probability it will return a kernel $(B_x, Z_x,x,t_x)$ where $B_x$ has at most $9rs^r$ vertices from $U_V$. 
By Definition \ref{def:kernel}, a min $(\{x\},\{t_x\})$-separator $(L, S, R)$ in this kernel gives a min $(\{x\}, T_x)$-separator $(L', S\cup Z_x, R')$ in $B$, where $L'$ is the component of $V_B\setminus (S\cup Z_x)$ containing $x$, and $R'=V_B\setminus (L'\cup S\cup Z_x)$. Since $T_x \subseteq R$, a min $(\{x\},T_x)$-separator in $B$ is in fact a min-weight separator in $B$. Thus, the algorithm finds a min-weight separator in $B$ with high probability.

We now analyze the runtime of our procedure. The algorithm from Lemma \ref{lem:efficient-kernel} runs in time $\tilde{O}(|E_B| + |U_V|(k+r+300s^r)(3s^r)) = \tilde{O}(p + n(\lambda+r+s^r)s^r) = \tilde{O}(p+ps^r +rns^{2r}) = \tilde{O}(ps^r+rns^{2r})$. Running the algorithm $\Theta(\log n)$ times each for $\log(3s^r)$ different values of $\ell$ increases the running time by a factor of $O(\log(3s^r)\log n) = O(r(\log s)(\log n)) = O(r\log^2 n)$ giving us a total runtime of $\tilde{O}(prs^r + r^2ns^{2r})$. To bound the time to compute a min $(\{x\},\{t_x\})$-separator in each kernel we use the following claim.
\begin{claim}\label{clm:min-separator-in-kernel}
If $(B_{x},Z_{x},x,t_x)$ is a tuple returned by the algorithm from Lemma \ref{lem:efficient-kernel} with at most $9rs^r$ vertices from $U_V$ and $O(\lambda r s^{2r})$ edges, then a min $(\{x\},\{t_x\})$-separator in $B_{x}$ can be computed in time $\tilde{O}(\lambda r^2s^{3r})$.
\end{claim}

\begin{proof}
Since $B$ is bipartite, every path of length $d$ in $B$ visits at least $\lceil d /2 \rceil$ vertices in $U_V$. The graph $B_{x}$ is obtained from a bipartite graph by contracting some subset of vertices. Therefore, except for the vertex $t_x$ created by the contraction, every vertex in $B_{x}$ which is not in $U_V$ must have all of its neighbors in $U_V$. This means that for every path in $d$, every pair of consecutive vertices in the path not including the vertex generated by the contraction has at least one vertex from $U_V$. Therefore, every path of length $d$ in $B_{x}$ visits at least $\lceil (d-2)/2 \rceil$ vertices from $U_V$. By the assumption of the claim, $B_{x}$ contains $O(rs^r)$ vertices from $U_V$. Thus, every path in $B_{x}$ has length $O(rs^r)$. 

Therefore, we can use Dinitz's algorithm to efficiently find a min $(\{x\},\{t_x\})$-separator in $B_{x}$ \cite{Dinitz06}. Since the lengths of paths in $B_{x}$ are bounded by $O(rs^r)$, the number of blocking flows found by Dinitz's algorithm is also $O(rs^r)$. Each blocking flow can be found in time proportional to the number of edges in $B_{x}$, and by 
assumption this is $\tilde{O}(\lambda rs^{2r})$.
Therefore, the total runtime of Dinitz's algorithm on a $B_x$ is $\tilde{O}(\lambda r^2s^{3r})$. 
\end{proof}

Let $(B_x, Z_x,x,t_x)$ be a kernel returned by the algorithm from Lemma \ref{lem:efficient-kernel}. By Lemma \ref{lem:efficient-kernel}, we have that $B_x$ contains at most $9rs^r$ vertices from $U_V$ and $O(\lambda r s^{sr})$ edges with high probability. By Claim \ref{clm:min-separator-in-kernel}, we can compute a min-weight separator for a single $B_x$ with at most $9rs^r$ vertices from $U_V$ and $O(\lambda r s^{sr})$ edges in time $\tilde{O}(\lambda r^2s^{3r})$. Thus, we can compute a min-weight separator for all $B_x$ with at most $9rs^r$ vertices from $U_V$ and $O(\lambda r s^{sr})$ edges in time $\tilde{O}_r(s^{3r}\lambda n) = \tilde{O}_r(s^{3r}p)$. 

\end{proof}

\subsection{Expander Decomposition Based Min-Cut}\label{sec:large-min-cut}
In this section, we give an algorithm to find connectivity in hypergraphs in which every min-cut has large size. In order to analyze the run-time of our algorithm, we need to bound the run-time to implement trim and shave operations. This is summarized in the following claim.

\begin{claim}\label{clm:efficient-trim-and-shave}
Let $G=(V,E)$ be an $n$-vertex $r$-rank hypergraph of size $p$ with $p \geq n$, and let $\mathcal{X}=\{X_1, \dots, X_k\}$ be a collection of disjoint subsets of $V$. Then, there exists a deterministic algorithm which takes as input $G$  and $\mathcal{X}$, and runs in time $O(pr)$ to return $\{\trim(X_1), \dots, \trim(X_k)\}$ and $\{\shave(X_1), \dots, \shave(X_k)\}$.
\end{claim}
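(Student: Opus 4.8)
The plan is to realize both operations with a constant number of linear passes over $G$ together with a worklist-based simulation of the (adaptive) trim process, reusing precomputed degree data throughout.

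\textbf{Preprocessing, all in $O(p)$ time.} First I would compute $d(v)$ for every $v\in V$ in one pass over the hyperedge list, and an array recording for each $v$ the unique index $i$ with $v\in X_i$ (or a null symbol if $v$ lies in no part); since the $X_i$ are pairwise disjoint this costs $O(\sum_i|X_i|)\le O(n)\le O(p)$ (using $p\ge n$). Because the $X_i$ are disjoint, each hyperedge is contained in at most one part, so a single pass over $E$---for each $e$, read the part index of one vertex of $e$ and then check that every vertex of $e$ has the same index---produces, for every $i$, the list $E[X_i]=\{e\in E:e\subseteq X_i\}$ in $O(p)$ total time. From these lists I would build, for each $i$ and each $v\in X_i$, the incidence list $\{e\in E[X_i]:v\in e\}$ and the count $d_{X_i}(v)=|\{e\in E[X_i]:v\in e\}|$, again in $O(p)$ total.

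\textbf{Shave and trim.} With the $d_{X_i}(v)$ and $d(v)$ available, $\shave(X_i)=\{v\in X_i:d_{X_i}(v)>(1-1/r^2)d(v)\}$ is read off by one scan over $\bigcup_iX_i$, in $O(n)$ time. For $\trim(X_i)$ I would simulate the deletion process: keep a working copy of $d_{X_i}(\cdot)$ (so the copy needed by $\shave$ is untouched), a membership flag for the current part, and for each $e\in E[X_i]$ a counter $\mathrm{alive}(e)$ initialized to $|e|$; seed a queue with all $v\in X_i$ having $d_{X_i}(v)<d(v)/(2r)$. Repeatedly pop $v$; if $v$ was already deleted, skip it; otherwise delete $v$ and, walking $v$'s incidence list, decrement $\mathrm{alive}(e)$ for each incident $e$, and only at the moment $\mathrm{alive}(e)$ drops from $|e|$ to $|e|-1$ (the instant $e$ ceases to be contained in the current part) do we, for each still-present $u\in e$, decrement the working $d_{X_i}(u)$ and enqueue $u$ if it has just fallen below $d(u)/(2r)$. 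When the queue is empty, the undeleted vertices form $\trim(X_i)$; repeat for every $i$.

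\textbf{Correctness and running time.} The trim simulation is correct because of two monotonicity facts: a vertex is deleted at most once, and at all times the working value $d_{X_i}(v)$ equals the number of hyperedges through $v$ still contained in the current part, which is nonincreasing over the process; hence once a present vertex is below its threshold it stays below, the seeding step plus the propagation step together enqueue every vertex that ever becomes violating, and when the queue empties no present vertex violates---exactly the stopping condition in the definition of $\trim$. For the time bound: each $e\in E[X_i]$ undergoes its unique ``fully contained $\to$ not contained'' transition at most once, paying $O(|e|)$ for the propagation loop then, plus $O(|e|)$ cheap counter decrements over the whole run; each deletion of $v$ costs $O(1)$ per incidence; and the number of enqueue events is $O(n)+O(\sum_v(\text{incidences of }v))=O(p)$. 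Summing over all parts (each hyperedge belongs to at most one $E[X_i]$), the trim simulation runs in $O(\sum_i\sum_{e\in E[X_i]}|e|)=O(p)$, so with preprocessing and shave the total is $O(p)\le O(pr)$.

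\textbf{Main obstacle.} The one delicate point is the running time of trim: testing ``is $e$ still entirely inside the current part?'' by re-scanning $e$ each time one of its vertices is deleted costs $O(|e|^2)\le O(r|e|)$ per hyperedge---which already yields the claimed $O(pr)$ bound---so the care goes into the $\mathrm{alive}(\cdot)$-counter bookkeeping that triggers the $O(|e|)$ propagation exactly once per hyperedge and pushes the amortized cost down to $O(p)$.
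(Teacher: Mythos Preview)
Your proposal is correct and takes essentially the same approach as the paper: precompute part labels and within-part degrees in $O(p)$, read off \shave\ by a scan, and realize \trim\ by a worklist-driven simulation that propagates degree decrements the moment a hyperedge ceases to be contained in its part. The only cosmetic difference is that the paper maintains dynamic incidence lists $\delta_{\mathcal X}(v)$ (removing $e$ from every $\delta_{\mathcal X}(u)$ when the first vertex of $e$ is deleted) whereas you use an $\mathrm{alive}(e)$ counter to detect that transition; your amortized analysis in fact yields $O(p)$ for \trim, slightly sharper than the paper's stated $O(pr)$, but both meet the claimed bound.
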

\begin{proof}

For $v\in V$, let $part(v):=i$ if $v\in X_i$ and $0$ otherwise. For $e\in E$, let $epart(e):=i$ if $e\subseteq X_i$ and $0$ otherwise. Let $d_{\mathcal{X}}(v):=d_{X_i}(v)$ if $v\in X_i$ and $0$ otherwise and $\delta_{\mathcal{X}}(v):=\delta_{X_i}(v)$ if $v\in X_i$ and $\emptyset$ otherwise. We also recall that the degree of a vertex in $G$ is denoted by $d(v)$. The functions $part:V\rightarrow \{0, \dots, k\}$, $d: V \rightarrow \mathbb{Z}_+$, $d_{\mathcal{X}} : V \rightarrow \mathbb{Z}_{\geq 0}$, and $\delta_{\mathcal{X}}: V \rightarrow 2^E$ can be computed in $O(p)$ time using Algorithm \ref{alg:trim-shave-helper}. 

\begin{algorithm}\label{alg:trim-shave-helper}
\caption{TrimShaveHelper$(G=(V,E),\mathcal{X}=\{X_1, \dots, X_k\})$}
\begin{algorithmic}
\ForAll{$v \in V$}
\State $part(v), d(v), d_{\mathcal{X}}(v) \gets 0$
\State $\delta_{\mathcal{X}}(v) \gets \emptyset$
\EndFor
\For{$i=1,\ldots, k$}
\ForAll{$v \in X_i$}
\State $part(v) \gets i$
\EndFor
\EndFor
\ForAll{$e \in E$}
\State $v \gets$ an arbitrary vertex of $e$
\State $epart(e) \gets part(v)$
\ForAll{$u \in e \setminus \{v\}$}
\If{$part(u) \neq part(v)$}
\State $epart(e) \gets 0$
\EndIf
\EndFor
\If{$epart(e) \neq 0$}
\ForAll{$v \in e$}
\State $d_{\mathcal{X}}(v) \gets d_{\mathcal{X}}(v) + 1$
\State Add $e$ to $\delta_{\mathcal{X}}(v)$
\EndFor
\EndIf
\EndFor
\State \Return $part,d,d_X,\delta_X$
\end{algorithmic}
\end{algorithm}

We claim that Algorithm \ref{alg:trim-shave-helper} can be implemented to run in time $O(p)$. We note that looping over all of the vertices takes time $O(n)$. Since the sets $X_i$s are disjoint, looping over all the vertices of each $X_i$ also takes time $O(n)$. Looping over all edges and within each edge looping over all vertices in the edge takes time $O(\sum_{e \in E} |e|) = O(p)$. Since $n \leq p$, the complete algorithm can be implemented to run in time $O(p)$.

We use Algorithm \ref{alg:shave} to perform the $\shave$ operation. Since the sets $X_i$s are disjoint, iterating over all vertices of each $X_i$ takes time $O(n)$, so the portion of the algorithm outside of the call to \textsc{TrimShaveHelper} can be implemented to run in time $O(p)$, and therefore the whole algorithm can be as well.

\begin{algorithm}
\caption{ShaveAlgorithm$(G=(V,E), \mathcal{X}=\{X_1, \dots, X_k\}$}\label{alg:shave}
\begin{algorithmic}
\State $part,d,d_{\mathcal{X}},\delta_{\mathcal{X}} \gets \textsc{TrimShaveHelper}(G,\mathcal{X})$
\For{$i=1, \dots, k$}
\ForAll{$v \in X_i$}
\If{$d_{\mathcal{X}}(v) \leq (1-1/r^2)d(v)$}
\State Remove $v$ from $X_i$
\EndIf
\EndFor
\EndFor
\State \Return $\mathcal{X}$
\end{algorithmic}
\end{algorithm}

\begin{algorithm}
\caption{TrimAlgorithm$(G=(V,E), \mathcal{X}=\{X_1, \dots, X_k\})$} \label{alg:trim}
\begin{algorithmic}
\State $x,y,d,d_{\mathcal{X}},\delta_{\mathcal{X}} \gets \textsc{TrimShaveHelper}(G,\mathcal{X})$
\State $T \gets \emptyset$
\State $F \gets V$
\While{$T \cup F \neq \emptyset$}
\If{$T \neq \emptyset$}
\State Let $v \in T$ be arbitrary
\State Remove $v$ from $T$ and from $X_{part(v)}$
\State $part(v) \gets 0$
\ForAll{$e \in \delta_{\mathcal{X}}(v)$}
\For{$u \in e \setminus \{v\}$}
\State Remove $e$ from $\delta_{\mathcal{X}}(u)$
\State $d_{\mathcal{X}}(u) \gets d_{\mathcal{X}}(u) - 1$
\If{$d_{\mathcal{X}}(u) < d(u)/2r$ and $u \not\in T$}
\State Add $u$ to $T$
\EndIf
\EndFor
\EndFor
\Else
\State Let $v \in F$ be arbitrary
\State remove $v$ from $F$
\If{$d_{\mathcal{X}}(v) < d(v)/2r$ and $part(v) \neq 0$}
\State Add $v$ to $T$
\EndIf
\EndIf
\EndWhile
\State \Return $\mathcal{X}$
\end{algorithmic}
\end{algorithm}
We use Algorithm \ref{alg:trim} to perform the $\trim$ operation. Here the set $F$ stores vertices that the algorithm has yet to examine, and $T$ stores vertices that the algorithm has determined must be trimmed. The values of $d_{\mathcal{X}}(v)$ and $\delta_{\mathcal{X}}(v)$ are updated as vertices are trimmed, and whenever $d_{\mathcal{X}}(v)$ changes for a vertex $v$, the algorithm verifies whether that vertex $v$ should be trimmed. Removing vertices from $X_i$ cannot increase $d_{X_i}(v)$ for any vertex $v \in X_i$. Therefore, if a vertex $v$ satisfies $d_{X_i}(v) < d(v) / 2r$ when $v$ is added to $T$, $v$ will still satisfy this inequality when it is actually removed from $X_i$.
We note that each vertex is chosen at most once as the arbitrary vertex from $F$, since whenever a vertex is chosen from $F$ it is removed from $F$ and never added back to it. Whenever a vertex $v$ is chosen from $T$, it is removed from $T$, and $part(v)$ is set to $0$. Since the algorithm only adds a vertex $v$ to $T$ only if $part(v) \neq 0$, each vertex is chosen at most once as the arbitrary vertex from $T$.

The case of choosing a vertex from $F$ can be implemented to run in constant time since we perform an arithmetic comparison and a constant number of set additions and removals. 
When the algorithm selects a vertex from $T$, it iterates over all hyperedges of that vertex which are currently in $X_{part(v)}$. For each of these hyperedges $e$, all iterations of the innermost \textbf{for} loop can be implemented to run in time $O(|e|)$. Thus, the total time spent processing a vertex $v$ from $T$ is $O(\sum_{e \in \delta(v)} |e|)$. Since each vertex is processed in $T$ at most once, the total time to process all of the vertices is $O(\sum_{v \in V}\sum_{e\in \delta(v)}|e|)=O(pr)$.
\end{proof}


The following theorem is the main result of this section. 
\begin{theorem}\label{thm:large-mincut-algo}
Let $G=(V,E)$ be an $n$-vertex $r$-rank simple hypergraph of size $p$ with connectivity $\lambda$ such that 
$\lambda \ge r(4r^2)^r$ 
and every min-cut $(C,V \setminus C)$ in $G$ has $\min\{|C|,|V \setminus C|\} > r-\log (\lambda/4r) / \log n$. Then, there exists a randomized algorithm which takes $G$ as input
and runs in time 
\[
\tilde{O}_r\left(p + (\lambda n)^{1+o(1)} + 
\min\left\{\lambda^{\frac{r-3}{r-1}+o(1)}n^{2+o(1)},
\frac{n^{r+r \cdot o(1)}}{\lambda^{\frac{r}{r-1}}},  \lambda^{\frac{5r-7}{4r-4}+o(1)}n^{\frac{7}{4}+o(1)}\right\}\right),
\]
to return a min-cut of $G$ with high probability.

\end{theorem}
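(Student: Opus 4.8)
The plan is to make the proof of the structural theorem (Theorem \ref{thm:number-of-hyperedges-in-mincuts}, conclusion \ref{itm:few-edges}) constructive and then feed the resulting small hypergraph into the fastest available black-box min-cut algorithm. First I would run \textsc{CXApproximation} (Theorem \ref{thm:CX-approximation}) in $O(p)$ time to obtain a constant-factor estimate $k$ of $\lambda$, and then run \textsc{Certificate}$(G,k)$ (Theorem \ref{thm:k-certificate}) in $O(p)$ time to reduce to a simple hypergraph $G'$ of size $O(r\lambda n) = (\lambda n)^{1+o(1)}$ (for constant $r$) on which all min-cuts of $G$ are preserved; this is the source of the $(\lambda n)^{1+o(1)}$ term. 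On $G'$, I would invoke the algorithmic expander decomposition of Lemma \ref{lem:efficient-expander-decomp} with $\phi = (6r^2/\lambda)^{1/(r-1)}$, which runs in time $O((p')^{1+o(1)}) = (\lambda n)^{1+o(1)}$ and returns a partition $\mathcal{X} = \{X_1,\dots,X_k\}$ with $\sum_i |\delta(X_i)| = O(r\phi (p')^{1+o(1)})$.

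Next I would apply the $\trim$ operation to every part and then $3r^2$ rounds of $\shave$, using Claim \ref{clm:efficient-trim-and-shave} so that each $\trim$ and each $\shave$ over all parts costs only $O(p'r)$ time; over $O(r^2)$ rounds this is $\tilde O_r(p') = (\lambda n)^{1+o(1)}$. The exact argument from the proof of Theorem \ref{thm:number-of-hyperedges-in-mincuts}—using Lemma \ref{cor:small-side-not-tiny-implies-small-side-large} (which applies since $\lambda \ge r(4r^2)^r$ and no min-cut is small-sized, hence every min-cut is large-sized with smaller side $\ge 4r^2$), then Claim \ref{clm:trim-bound}, then $3r^2$ applications of Claim \ref{clm:shave-bound}—shows $\min\{|X_i'' \cap C|, |X_i'' \setminus C|\} = 0$ for every min-cut $(C,V\setminus C)$ and every final part $X_i''$. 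I would then form $G''$ by contracting each part $X_i''$ of the trimmed-and-shaved decomposition (keeping the removed singletons as their own vertices); this contraction is valid because no hyperedge crossing a min-cut lies inside a single part, so $\lambda(G'') = \lambda(G)$ and a min-cut of $G''$ lifts to a min-cut of $G$. The number of vertices in $G''$ is the number of surviving parts plus the number of singleton vertices removed; by the volume/degree counting in the structural proof combined with $\sum |\delta(X_i)|$ being small, this is $O_r(n/\lambda^{1/(r-1)})$, and the number of hyperedges of $G''$ is at most $m = O_r(\lambda n)$, actually the $\tilde O_r(m/\lambda^{1/(r-1)})$ bound from conclusion \ref{itm:few-edges} after merging parallel copies. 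Finally I would run whichever of \textsc{FPZMinCut} (Theorem \ref{thm:FPZ-min-cut}, cost $\tilde O(p'' + (n'')^r)$), \textsc{CXMinCut} (Corollary \ref{thm:CX-min-cut}, cost $O(p'' + r\lambda (n'')^2)$), or \textsc{CQMinCut} (Theorem \ref{thm:CQ-min-cut}, cost $\tilde O(\sqrt{p'' n'' (m''+n'')^{1.5}})$) is fastest on $G''$; substituting $n'' = O_r(n/\lambda^{1/(r-1)})$ and $p'' = \tilde O_r(\lambda n / \lambda^{1/(r-1)})$ and simplifying yields the three terms $\lambda^{(r-3)/(r-1)+o(1)} n^{2+o(1)}$, $n^{r + r\cdot o(1)}/\lambda^{r/(r-1)}$, and $\lambda^{(5r-7)/(4r-4)+o(1)} n^{7/4+o(1)}$.

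The main obstacle I expect is the bookkeeping that shows $G''$ really has only $O_r(n/\lambda^{1/(r-1)})$ vertices rather than just $O_r(n)$ vertices. The number of parts surviving $\trim$ and $\shave$ is bounded by the number of original expander parts, but we must also control the number of \emph{singleton} vertices produced by $\trim$ and by the $3r^2$ $\shave$ rounds: a naive bound would charge one singleton per removed vertex, which could be $\Theta(n)$. The fix is to observe, as in Claims \ref{clm:hyperedges-lost-to-trim} and \ref{clm:hyperedges-lost-to-shave}, that each removed vertex either has most of its incident hyperedges crossing the current cut boundary or gets its degree-contribution slashed, so the total number of removed vertices is bounded (up to $\poly(r)$ factors) by $\sum_i |\delta(X_i)|/\delta = O_r(\phi p / \lambda) = O_r(n/\lambda^{1/(r-1)})$ since $\delta \ge \lambda$ and $p = O(rm)$, $\phi = (6r^2/\lambda)^{1/(r-1)}$; here one must be careful that the degree $d(v)$ used in the $\trim$/$\shave$ thresholds is the degree in $G'$ (equivalently in $G$ up to the $k$-certificate sparsification), which is $\ge \lambda$ for the relevant vertices. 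Carrying the $o(1)$ exponents through the three runtime expressions—in particular tracking how the $n^{o(1)}$ slack from the algorithmic expander decomposition and from $\textsc{GraphExpanderDecomp}$ interacts with the $n^r$ term, which is why the middle bound reads $n^{r + r\cdot o(1)}$—is the other place that requires care but no new ideas.
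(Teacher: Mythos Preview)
Your proposal is correct and follows essentially the same route as the paper's proof (Algorithm \ref{alg:exp-decomp-min-cut}): sparsify via \textsc{Certificate}, run the algorithmic expander decomposition on $G'$, apply $\trim$ and $3r^2$ rounds of $\shave$, contract each surviving part, and call the fastest of \textsc{CXMinCut}/\textsc{FPZMinCut}/\textsc{CQMinCut} on the contracted hypergraph. The one simplification you are missing dissolves exactly what you flag as the ``main obstacle'': rather than separately counting surviving parts and removed singletons, the paper observes that contraction cannot decrease any vertex degree, so every vertex of $G''$ still has degree at least $\delta' \ge \lambda$, and hence $n_2 \cdot \lambda \le \sum_{e \in E''}|e| \le r\,|E''| = O_r\bigl(\phi\, p_1^{1+o(1)}\bigr)$ immediately yields $n_2 = O_r\bigl(n^{1+o(1)}/\lambda^{1/(r-1)}\bigr)$ with no singleton bookkeeping at all.
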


\begin{proof}
We will use algorithms from Theorems \ref{thm:CX-approximation}, \ref{thm:k-certificate}, \ref{thm:FPZ-min-cut}, and \ref{thm:CQ-min-cut}, Corollary \ref{thm:CX-min-cut}, and Lemma \ref{lem:efficient-expander-decomp} to find a min-cut. We use \textsc{CXApproximation} to find a constant-approximation $k$ for the min-cut capacity. Next, we use \textsc{Certificate}$(G,k)$ to get a graph $G'$ with size $p_1=O(rkn)=O(r\lambda n)$. Next, we find an expander decomposition $\cX$ of $G'$ with respect to the parameter $\phi:=(6r^2/\delta)^{1/(r-1)}$. We then apply the trim operation on $\cX$ followed by $3r^2$ shave operations to obtain a collection $\cX''$ of disjoint subsets of $V$. We contract all these subsets and run either \textsc{FPZMinCut} or \textsc{CXMinCut} or \textsc{CQMincut} on the resulting hypergraph to find a min-cut. We present the pseudocode of our approach in Algorithm \ref{alg:exp-decomp-min-cut}. 

\begin{algorithm}
\caption{ExpDecompMinCut$(G)$}\label{alg:exp-decomp-min-cut}
\begin{algorithmic}
\State $k \gets \textsc{CXApproximation}(G)$
\State $G' \gets \textsc{Certificate}(G,k)$
\State $\delta' \gets \min_{v \in V} d_{G'}(v)$
\State $\phi \gets \min\{(6r^2/\delta')^{1/(r-1)}, 1/(r-1)\}$
\State $\mathcal{X} \gets$ \textsc{ExpanderDecomposition}$(G',\phi)$
\State $\mathcal{X'} \gets \textsc{TrimAlgorithm}(G',\mathcal{X})$
\State $\mathcal{X}'' \gets \mathcal{X'}$
\For{$i=1, \ldots, 3r^2$}
\State $\mathcal{X}'' \gets \textsc{ShaveAlgorithm}(\mathcal{X''})$
\EndFor
\State Let $G''$ be the hypergraph obtained by contracting each set $X'' \in \mathcal{X''}$ into a single vertex
\State Run $\text{CXMinCut}(G'')$, $\text{FPZMinCut}(G'')$, and $\text{CQMinCut}(G'')$ simultaneously, return the output of the earliest terminating among the three, and terminate
\end{algorithmic}
\end{algorithm}


We now prove the correctness of Algorithm \ref{alg:exp-decomp-min-cut}. 
By hypothesis, we know that every min-cut $(C, V\setminus C)$ has $\min\{|C|, |V\setminus C|\}>r-\log{(\lambda/4r)}/\log{n}$. Hence, by Lemma \ref{cor:small-side-not-tiny-implies-small-side-large}, every min-cut $(C, V\setminus C)$ has $\min\{|C|, |V\setminus C|\}>(\lambda/2)^{1/r}\ge 4r^2$ (by assumption on $\lambda$). 

By Theorem \ref{thm:CX-approximation}, we have that $k > \lambda$. Therefore, by Theorem \ref{thm:k-certificate}, the subhypergraph $G'$ is a simple hypergraph with connectivity $\lambda$ and the min-cuts in $G'$ are exactly the min-cuts of $G$. Consequently, the min-degree $\delta'$ in $G'$ is at least the connectivity $\lambda$. Let $(C, V \setminus C)$ be an arbitrary min-cut in $G'$ and let $\mathcal{X}=(X_1, \ldots, X_t)$ be the expander decomposition of $G'$ for the parameter $\phi$. By definition, $\phi\le 1/(r-1)$. Hence, by Lemma \ref{lem:efficient-expander-decomp}, for every $i \in [t]$, we have that 
\begin{align*}
\delta' \geq \lambda &\geq \left|E^o_{G'}(X_i \cap C, X_i \cap (V \setminus C))\right| \\
&\geq \min\left\{\left(\frac{6r^2}{\delta'}\right)^{\frac{1}{r-1}},\frac{1}{r-1}\right\} \min \{\vol_{G'}(X_i \cap C), \vol_{G'}(X_i \setminus C)\} \\
&\geq \min\left\{\left(\frac{6r^2}{\delta'}\right)^{\frac{1}{r-1}},\frac{1}{r-1}\right\}  \delta' \min \{|X_i \cap C|, |X_i \setminus C|\}.
\end{align*}
Thus, $\min\{|X_i \cap C|, |X_i \setminus C| \} \leq \max\{(\delta'/6r^2)^{1/(r-1)},r-1\}$ for every $i\in [t]$. 
If $\max\{(\delta'/6r^2)^{1/(r-1)},r-1\}=r-1$, then $\min\{|X_i' \cap C|, |X_i' \setminus C| \} \le r-1\le 3r^2$ for every $i\in [t]$ and if $\max\{(\delta'/6r^2)^{1/(r-1)},r-1\}=(\delta'/6r^2)^{1/(r-1)}$, then by Claim \ref{clm:trim-bound}, we have that $\min\{|X_i' \cap C|, |X_i' \setminus C| \} \le 3r^2$ for every $i\in [t]$. 
We recall that $\lambda \ge r(4r^2)^r$ and every min-cut has size at least $4r^2$. Hence, by 
$3r^2$ repeated applications of Claim \ref{clm:shave-bound}, we have that $\min\{|X''_i \cap C|, |X''_i \cap (V \setminus C)|\} = 0$ for every $i\in [t]$. 
In particular, this means that contracting each $X''_i$ does not destroy $(C, V \setminus C)$, so $\delta(C)$ is still a min-cut set in $G''$. Since contraction cannot decrease the capacity of any cut, this implies that every min-cut in $G''$ is a min-cut in $G'$, and thus, Algorithm \ref{alg:exp-decomp-min-cut} returns a min-cut.

We now bound the run-time of Algorithm \ref{alg:exp-decomp-min-cut}. 
By Theorem \ref{thm:CX-approximation}, \textsc{CXApproximation} runs in time $O(p)$ to return $k\le 3\lambda$. By Theorem \ref{thm:k-certificate},  \textsc{Certificate} also runs in time $O(p)$ to return a hypergraph 
$G'=(V, E')$ of size $p_1:= \sum_{e \in E'} |e|=O(rkn)=O(r\lambda n)$ and having min-degree $\delta'\ge \lambda$. By Lemma \ref{lem:efficient-expander-decomp}, we obtain an expander decomposition $\cX$ of $G'$ in time 
\[
O\left(p_1^{1+o(1)}\right)
\]
such that the number of hyperedges of $G'$ intersecting multiple parts of $\cX$ is 
$
O(r\phi p_1^{1+o(1)})
$. 
By Claim \ref{clm:efficient-trim-and-shave}, we can implement a $\trim$ operation followed by $3r^2$ $\shave$ operations to obtain a collection $\cX''$ of disjoint subsets of $V$ in time $O(p_1r \cdot 3r^2) = O(p_1r^3)$.   
By Claim \ref{clm:hyperedges-lost-to-trim} and $3r^2$ repeated applications of Claim \ref{clm:hyperedges-lost-to-shave} (similar to the proof of Theorem \ref{thm:number-of-hyperedges-in-mincuts}), we have that 
$
|E''| = O(4r^{9r^2+1}\phi p_1^{1+o(1)}).
$
Since contraction cannot decrease the degree of any vertex, every vertex in $G''=(V'',E'')$ has degree at least $\delta'$. Therefore, the number of vertices $n_2$ in $G''$ satisfies $n_2\delta' \le \sum_{v\in V''}d_{G''}(v)=\sum_{e\in E''}|e|\le r|E''|$, which implies that 
\[
n_2
= O\left( \frac{4r^{9r^2+2}\phi p_1^{1+o(1)}}{\delta'}\right)
= O\left( \frac{4r^{9r^2+2}\phi p_1^{1+o(1)}}{\lambda}\right).
\]
Let $p_2 := \sum_{e \in E''} |e|$. Since contraction cannot increase the size of the hypergraph, we have that $p_2 \leq p_1$. Moreover, contraction does not increase the rank and hence, the rank of $G''$ is at most $r$. Therefore, the run-time of
$\textsc{CXMinCut}$ on $G''$ is  
\begin{align*}
O(p_2 + r\lambda n_2^2) &= O\left(p_1 + \frac{16r^{18r^2+4}\phi^2 p_1^{2+o(1)}}{\lambda}\right) \\
&= O \left(r\lambda n + \frac{r^{19r^2}(\lambda n)^{2+o(1)}\phi^2}{\lambda} \right) \\
&= O\left(r\lambda n + r^{19r^2}\lambda^{1+o(1)}n^{2+o(1)}\phi^2 \right) \\
&= O\left(r\lambda n + r^{20r^2}\lambda^{\frac{r-3}{r-1}+o(1)}n^{2+o(1)}  \right).
\end{align*}
The second equation is by the bound on $p_1$, and the last equation is by the setting of $\phi$ and the fact that $\delta' \geq \lambda$.

The run-time of \textsc{FPZMinCut} on $G''$ is 
\begin{align*}
    \tilde{O}\left(p_2  + n_2^r\right)
    &= \tilde{O}\left(p_1 + \left( \frac{4r^{9r^2+2}\phi p_1^{1+o(1)}}{\lambda}\right)^r  \right) \\
    &= \tilde{O}\left( \lambda n + \frac{r^{11r^3}(\lambda n)^{r(1+o(1)}\phi^r}{\lambda^r} \right) \\
    &= \tilde{O}\left(\lambda n + r^{11r^3}\lambda^{r \cdot o(1)}n^{r(1+ o(1))}\phi^r \right) \\
    &= \tilde{O}\left(\lambda n + \frac{r^{15r^3}n^{r(1+o(1))}}{\lambda^{\frac{r}{r-1}-r \cdot o(1)}} \right).
\end{align*}
Again, the second equation is by the bound on $p_1$, and the last equation is by the setting of $\phi$ and using the fact that $\delta'\ge \lambda$. 

Let $m_2$ be the number of edges in $G''$. We assume that $G''$ is connected (otherwise the problem is trivial). Thus 
we have that $n_2 \leq p_2$, and therefore $(m_2 + n_2) \leq 2p_2$. Since $p_2 \leq rm_2$, this bound is tight, up to a factor of $r$.
Therefore, the run-time of $\textsc{CQMinCut}$ on $G''$ is
\begin{align*}
    \tilde{O}\left(\sqrt{p_2n_2(m_2+n_2)^{1.5}}\right) &= \tilde{O}\left(\sqrt{p_2n_2(2p_2)^{1.5}}\right) \\
    &= \tilde{O}\left(\sqrt{p_1^{2.5}n_2 } \right) \\
    &= \tilde{O}\left(p_1^{1.25}n_2^{0.5} \right) \\
    &= \tilde{O}\left((\lambda n)^{1.25}\left(\frac{4r^{9r^2+2}\phi (\lambda n)^{1+o(1)}}{\lambda}\right)^{0.5} \right) \\
    &=\tilde{O}\left(r^{4.5r^2+1}\lambda^{1.25+o(1)}n^{1.75+o(1)}\left(\frac{6r^2}{\lambda}\right)^{\frac{1}{2(r-1)}} \right) \\
    &=\tilde{O}\left(r^{4.5r^2+2}\lambda^{\frac{5r-7}{4r-4}+o(1)}n^{1.75+o(1)} \right).
\end{align*}
The fourth equation is by the bound on $p_1$, and the second-to-last equation is by the setting of $\phi$ and the fact that $\delta' > \lambda$.

Since the algorithm runs the fastest algorithm among $\textsc{CXMinCut}$, $\textsc{FPZMinCut}$, and $\textsc{CQMinCut}$, the overall runtime is
\begin{align*}
    &\tilde{O}\left(p+p_1^{1+o(1)} + r^3p_1 + \min \left\{
    r\lambda n + r^{20r^2}\lambda^{\frac{r-3}{r-1}+o(1)}n^{2+o(1)},
    \lambda n + \frac{r^{15r^3}n^{r(1+o(1))}}{\lambda^{\frac{r}{r-1}-r \cdot o(1)}},
    r^{4.5r^2+2}\lambda^{\frac{5r-7}{4r-4}+o(1)}n^{\frac{7}{4}+o(1)}  \right\}\right) \\
    & \quad = \tilde{O}\left(p + (r\lambda n)^{1+o(1)} + r^3\lambda n + \min \left\{
    r^{20r^2}\lambda^{\frac{r-3}{r-1}+o(1)}n^{2+o(1)},
    \frac{r^{15r^3}n^{r(1+o(1))}}{\lambda^{\frac{r}{r-1}-r \cdot o(1)}},  r^{4.5r^2+2}\lambda^{\frac{5r-7}{4r-4}+o(1)}n^{\frac{7}{4}+o(1)}  \right\} \right).
\end{align*}
The run-time bound stated in the theorem follows by observing that $\lambda n \le \delta n \le p$, where $\delta$ is the min-degree of the input graph. 
\end{proof}

\begin{remark}\label{remark:det-exp-decomp-min-cut}
We can obtain a deterministic counterpart of Theorem \ref{thm:large-mincut-algo} by
modifying the last step of Algorithm \ref{alg:exp-decomp-min-cut} to only run  \textsc{CXMinCut}. 
The resulting algorithm has a run-time of 
\[
\tilde{O}_r \left(p + (\lambda n)^{1 + o(1)} + \lambda^{\frac{r-3}{r-1}+o(1)}n^{2+o(1)}\right).
\]
\end{remark}

\subsection{Proof of Theorem \ref{theorem:algo}}\label{sec:main-algorithm}
In this section we restate and prove Theorem \ref{thm:min-cut-algorithm}. 

\minCutAlgorithm*
\begin{proof}

We use the algorithm \textsc{CXApproximation} from Theorem \ref{thm:CX-approximation} to obtain a value $k$ such that $\lambda < k \leq 3\lambda$ in $O(p)$ time. If $k \leq 3r(4r^2)^r$, then the result follows by first obtaining a $k$-certificate of the hypergraph with the algorithm $\textsc{Certificate}$ from Theorem \ref{thm:k-certificate} 
and then running Chekuri and Quanrud's algorithm $\textsc{CQMincut}$ from Theorem \ref{thm:CQ-min-cut} on the $k$-certificate. 
Henceforth, we assume that $k > 3r(4r^2)^r$ and consequently, 
$\lambda > r(4r^2)^r$.

Our algorithm finds an approximation for the min-cut capacity, then uses the $k$-certificate from Theorem \ref{thm:k-certificate} to reduce the size of the hypergraph to $O(\lambda n)$. At this point, we either run the min-cut algorithm from Theorem \ref{thm:slow-min-cut} or use the two algorithms we have described in the previous two sections, whichever is faster on the smaller hypergraph. We give a pseudocode of our approach in Algorithm \ref{alg:main-alg}. This algorithm uses the algorithms from Theorems 
\ref{thm:CX-approximation}, \ref{thm:k-certificate}, \ref{thm:new-small-mincut-algo} and \ref{thm:large-mincut-algo}.

\begin{algorithm}
\caption{MinCut$(G)$}\label{alg:main-alg}
\begin{algorithmic}
\State $k \gets \textsc{CXApproximation}(G)$
\State $G' \gets \textsc{Certificate}(G, k)$
\State $(C_1, V\setminus C_1) \gets$ \textsc{SmallSizeMinCut}$(G',r-\log(k/12r)/ \log n)$
\State $(C_2, V \setminus C_2) \gets \textsc{ExpDecompMinCut}(G')$
\State $C \gets \arg\min\{|\delta_G(C_1)|, |\delta_G(C_2)|\}$ 
\State \Return $(C, V \setminus C)$
\end{algorithmic}
\end{algorithm}

We first show that Algorithm \ref{alg:main-alg} returns a min-cut. By Theorem \ref{thm:CX-approximation}, we have that $\lambda <k \le 3\lambda$. Therefore, by Theorem \ref{thm:k-certificate}, the subhypergraph $G'$ is a simple hypergraph and every min-cut in $G$ has capacity $\lambda$ in $G'$ and every other cut has capacity greater than $\lambda$ in $G'$. Thus, the set of min-cuts in $G'$ is exactly the same as the set of min-cuts in $G$. It remains to show that the cut $(C, V \setminus C)$ returned by the algorithm is a min-cut in $G'$. 
We distinguish two cases: (i) Suppose some min-cut in $G'$ has size at most $r-\log (\lambda/4r)/ \log n$. We have that $r-\log (\lambda/4r)/ \log n \leq r- \log (k/12r)/\log n$. Therefore, by Theorem \ref{thm:new-small-mincut-algo}, the call to \textsc{SmallMinCut} returns a min-cut in $G'$, and thus $(C, V \setminus C)$ is a min-cut in $G'$. (ii) Suppose that every min-cut in $G'$ has size greater than $r-\log (\lambda/4r)/ \log n$. Then, by Theorem \ref{thm:large-mincut-algo}, the algorithm $\textsc{ExpDecompMinCut}$ returns a min-cut in $G'$ and thus, $(C, V \setminus C)$ is a min-cut in $G'$. 

We now analyze the runtime of Algorithm \ref{alg:main-alg}. By Theorems \ref{thm:CX-approximation} and \ref{thm:k-certificate}, the computation of $k$ and $G'$ can be done in $O(p)$ time. Also by Theorem \ref{thm:k-certificate}, the algorithm $\textsc{Certificate}$ returns a hypergraph $G'=(V,E')$ with $\vol_{G'}(V) = \sum_{e \in E'} |e| = O(rkn)$. 
By Theorem \ref{thm:new-small-mincut-algo}, the runtime of $\textsc{SmallSizeMinCut}(G', r-\log(k/12r)/\log n)$ is
\[
\tilde{O}_r \left(\left(r-\frac{\log \left(\frac{k}{12r}\right)}{\log n}\right)^{6r}kn \right) = \tilde{O}_r(\lambda n) = \tilde{O}_r(p).
\]
By Theorems \ref{thm:CX-approximation} and \ref{thm:k-certificate}, the min-cut capacity in $G'$ is still $\lambda$. Therefore, by Theorem \ref{thm:large-mincut-algo}, the runtime of   $\textsc{ExpDecompMinCut}(G')$ is 
\begin{align*}
&\tilde{O}_r\left(p + (\lambda n)^{1+o(1)} + \min\left\{
\lambda^{\frac{r-3}{r-1}+o(1)}n^{2+o(1)},
\frac{n^{r+r \cdot o(1)}}{\lambda^{\frac{r}{r-1}}},  \lambda^{\frac{5r-7}{4r-4}+o(1)}n^{\frac{7}{4}+o(1)} \right\}\right) \\ =\, &\hat{O}_r\left(p + \min\left\{
\lambda^{\frac{r-3}{r-1}}n^2, 
\frac{n^r}{\lambda^{\frac{r}{r-1}}}, 
\lambda^{\frac{5r-7}{4r-4}}n^{\frac{7}{4}} 
\right\}\right).\end{align*}
Thus, the overall runtime of the algorithm is 
\[
O(p)+(r\lambda n)^{1+o(1)}+r^{O(r^3)}\min\left\{\lambda^{\frac{r-3}{r-1}+o(1)}n^{2+o(1)} , \left(\frac{n^r}{\lambda^{\frac{r}{r-1}}}\right)\log^{O(1)}n, \lambda^{\frac{5r-7}{4r-4}+o(1)}n^{\frac{7}{4}+o(1)}\log^{O(1)}n\right\} 
\]
\[
= \hat{O}_r \left(p + \min\left\{
\lambda^{\frac{r-3}{r-1}}n^2,
\frac{n^r}{\lambda^{\frac{r}{r-1}}}, 
\lambda^{\frac{5r-7}{4r-4}}n^{\frac{7}{4}} \right\}\right).\]
\end{proof}

\begin{remark}
We obtain the deterministic counterpart of Theorem \ref{theorem:algo} by using the deterministic version of \textsc{ExpDecompMinCut} in Algorithm \ref{alg:main-alg} (that was discussed in Remark \ref{remark:det-exp-decomp-min-cut}) and by using the deterministic algorithm for \textsc{SmallMinCut} described in Appendix \ref{sec:det-small-size-min-cut} instead of the algorithm from section \ref{sec:small-size-min-cut}. The resulting algorithm has a run-time of 
\[
\hat{O}_r\left(p + \min\left\{\lambda n^2, \lambda^{\frac{r-3}{r-1}}n^2 + \frac{n^{r}}{\lambda} \right\}\right).
\]
\end{remark}

\section*{Acknowledgement}

This project has received funding from the European Research Council (ERC) under the European Union's Horizon 2020 research and innovation programme under grant agreement No 715672. The last two authors are also supported by the Swedish Research Council (Reg. No. 2015-04659 and 2019-05622). Karthekeyan and Calvin are supported in part by NSF grants CCF-1814613 and CCF-1907937.

\bibliography{biblio}

\newcommand{\etalchar}[1]{$^{#1}$}
\begin{thebibliography}{BBG{\etalchar{+}}20}

\bibitem[BBG{\etalchar{+}}20]{BernsteinBNPSS20}
Aaron Bernstein, Jan van~den Brand, Maximilian~Probst Gutenberg, Danupon
  Nanongkai, Thatchaphol Saranurak, Aaron Sidford, and He~Sun.
\newblock Fully-dynamic graph sparsifiers against an adaptive adversary.
\newblock {\em CoRR}, abs/2004.08432, 2020.

\bibitem[BGS20]{BernsteinGS20scc}
Aaron Bernstein, Maximilian~Probst Gutenberg, and Thatchaphol Saranurak.
\newblock Deterministic decremental reachability, scc, and shortest paths via
  directed expanders and congestion balancing.
\newblock In {\em {FOCS}}. {IEEE} Computer Society, 2020.

\bibitem[CGL{\etalchar{+}}20]{cglnps20}
Julia Chuzhoy, Yu~Gao, Jason Li, Danupon Nanongkai, Richard Peng, and
  Thatchaphol Saranurak.
\newblock A deterministic algorithm for balanced cut with applications to
  dynamic connectivity, flows, and beyond.
\newblock In {\em {FOCS}}. {IEEE} Computer Society, 2020.

\bibitem[CQ21]{CQ21}
Chandra Chekuri and Kent Quanrud.
\newblock {Isolating Cuts, (Bi-)Submodularity, and Faster Algorithms for
  Connectivity}.
\newblock In {\em {ICALP}}, pages 50:1--50:20, 2021.

\bibitem[CQX19]{CQX19}
Chandra Chekuri, Kent Quanrud, and Chao Xu.
\newblock {LP Relaxation and Tree Packing for Minimum $k$-cuts}.
\newblock In {\em {SOSA}}, pages 7:1--7:18, 2019.

\bibitem[CX17]{CX17}
Chandra Chekuri and Chao Xu.
\newblock Computing minimum cuts in hypergraphs.
\newblock In {\em {SODA}}, pages 1085--1100. {SIAM}, 2017.

\bibitem[CX18]{ChekuriX18}
Chandra Chekuri and Chao Xu.
\newblock Minimum cuts and sparsification in hypergraphs.
\newblock {\em SIAM Journal on Computing}, 47(6):2118--2156, 2018.

\bibitem[CXY19]{CXY19}
Karthekeyan Chandrasekaran, Chao Xu, and Xilin Yu.
\newblock Hypergraph $k$-cut in randomized polynomial time.
\newblock {\em Mathematical Programming (Preliminary version in SODA 2018)},
  Nov 2019.

\bibitem[Din06]{Dinitz06}
Yefim Dinitz.
\newblock Dinitz’algorithm: The original version and even’s version.
\newblock In {\em Theoretical computer science}, pages 218--240. Springer,
  2006.

\bibitem[FNY{\etalchar{+}}20]{FNYSY20}
Sebastian Forster, Danupon Nanongkai, Liu Yang, Thatchaphol Saranurak, and
  Sorrachai Yingchareonthawornchai.
\newblock Computing and testing small connectivity in near-linear time and
  queries via fast local cut algorithms.
\newblock In {\em SODA}, pages 2046--2065. {ACM/SIAM}, 2020.

\bibitem[FPZ19]{FPZ19}
Kyle Fox, Debmalya Panigrahi, and Fred Zhang.
\newblock Minimum cut and minimum k-cut in hypergraphs via branching
  contractions.
\newblock In {\em {SODA}}, pages 881--896. {SIAM}, 2019.

\bibitem[Fuk13]{F10}
Takuro Fukunaga.
\newblock Computing minimum multiway cuts in hypergraphs.
\newblock {\em Discrete Optimization}, 10(4):371--382, 2013.

\bibitem[GKP17]{GKP17}
Mohsen Ghaffari, David Karger, and Debmalya Panigrahi.
\newblock Random contractions and sampling for hypergraph and hedge
  connectivity.
\newblock In {\em {SODA}}, page 1101–1114. {ACM/SIAM}, 2017.

\bibitem[GMT15]{GMT15}
Sudipto Guha, Andrew McGregor, and David Tench.
\newblock Vertex and hyperedge connectivity in dynamic graph streams.
\newblock In {\em SIGMOD/PODS}, pages 241--247. {ACM}, 2015.

\bibitem[GMW20]{GMW19}
Pawel Gawrychowski, Shay Mozes, and Oren Weimann.
\newblock {Minimum Cut in $O(m \log ^2 n)$ Time}.
\newblock In {\em {ICALP}}, pages 57:1--57:15, 2020.

\bibitem[GNT20]{GhaffariNT20}
Mohsen Ghaffari, Krzysztof Nowicki, and Mikkel Thorup.
\newblock Faster algorithms for edge connectivity via random 2-out
  contractions.
\newblock In {\em {SODA}}. {ACM/SIAM}, 2020.

\bibitem[GRST21]{GoranciRST20hierarchy}
Gramoz Goranci, Harald R{\"{a}}cke, Thatchaphol Saranurak, and Zihan Tan.
\newblock The expander hierarchy and its applications to dynamic graph
  algorithms.
\newblock {\em {SODA}}, pages 2212--2228, 2021.

\bibitem[HRW17]{HenzingerRW17}
Monika Henzinger, Satish Rao, and Di~Wang.
\newblock Local flow partitioning for faster edge connectivity.
\newblock In {\em {SODA}}, pages 1919--1938. {ACM/SIAM}, 2017.

\bibitem[Kar93]{Kar93}
David Karger.
\newblock {Global Min-Cuts in RNC, and Other Ramifications of a Simple Min-Cut
  Algorithm}.
\newblock In {\em {SODA}}, page 21–30. {ACM/SIAM}, 1993.

\bibitem[Kar00]{Kar00}
David~R. Karger.
\newblock Minimum cuts in near-linear time.
\newblock {\em J. {ACM}}, 47(1):46--76, 2000.
\newblock announced at STOC'96.

\bibitem[KK15]{KK15}
Dmitry Kogan and Robert Krauthgamer.
\newblock Sketching cuts in graphs and hypergraphs.
\newblock In {\em {ITCS}}, pages 367--376, 2015.

\bibitem[KS96]{KS96}
David Karger and Clifford Stein.
\newblock A new approach to the minimum cut problem.
\newblock {\em Journal of the ACM}, 43(4):601--640, July 1996.

\bibitem[KT15]{KT15}
Ken{-}ichi Kawarabayashi and Mikkel Thorup.
\newblock Deterministic edge connectivity in near-linear time.
\newblock In {\em {STOC}}, pages 665--674. {ACM}, 2015.

\bibitem[KT19]{KawarabayashiT19}
Ken{-}ichi Kawarabayashi and Mikkel Thorup.
\newblock Deterministic edge connectivity in near-linear time.
\newblock {\em J. {ACM}}, 66(1):4:1--4:50, 2019.

\bibitem[KW96]{KW96}
Regina Klimmek and Frank Wagner.
\newblock A simple hypergraph min cut algorithm.
\newblock Technical Report B 96-02, Institute Of Computer Science, Freie
  Universitat, 1996.

\bibitem[LNP{\etalchar{+}}21]{LNPSY21}
Jason Li, Danupon Nanongkai, Debmalya Panigrahi, Thatchaphol Saranurak, and
  Sorrachai Yingchareonthawornchai.
\newblock Vertex connectivity in poly-logarithmic max-flows.
\newblock unpublished, 2021.

\bibitem[LP20]{LP20}
Jason Li and Debmalya Panigrahi.
\newblock {Deterministic Min-cut in Poly-logarithmic Max-flows}.
\newblock In {\em {FOCS}}. {IEEE} Computer Society, 2020.

\bibitem[Mat93]{Matula93}
David~W. Matula.
\newblock A linear time 2+epsilon approximation algorithm for edge
  connectivity.
\newblock In {\em {SODA}}, pages 500--504. {ACM/SIAM}, 1993.

\bibitem[MN20]{MukhopadhyayN20}
Sagnik Mukhopadhyay and Danupon Nanongkai.
\newblock Weighted min-cut: sequential, cut-query, and streaming algorithms.
\newblock In {\em {STOC}}, pages 496--509. {ACM}, 2020.

\bibitem[MW00]{MW00}
Wai-Kei Mak and Martin D.~F. Wong.
\newblock {A fast hypergraph min-cut algorithm for circuit partitioning}.
\newblock {\em {Integration: the VLSI Journal}}, 30(1):1--11, 2000.

\bibitem[NI92]{NagamochiI92}
Hiroshi Nagamochi and Toshihide Ibaraki.
\newblock Computing edge-connectivity in multigraphs and capacitated graphs.
\newblock {\em SIAM Journal on Discrete Mathematics}, 5(1):54--66, 1992.

\bibitem[NS17]{ns17}
Danupon Nanongkai and Thatchaphol Saranurak.
\newblock Dynamic spanning forest with worst-case update time: adaptive, las
  vegas, and ${O}(n^{1/2 - \epsilon})$-time.
\newblock In {\em {STOC}}, pages 1122--1129. {ACM}, 2017.

\bibitem[NSY19]{NanongkaiSY19}
Danupon Nanongkai, Thatchaphol Saranurak, and Sorrachai Yingchareonthawornchai.
\newblock Breaking quadratic time for small vertex connectivity and an
  approximation scheme.
\newblock In {\em {STOC}}, pages 241--252. {ACM}, 2019.

\bibitem[Que98]{queyranne98}
Maurice Queyranne.
\newblock Minimizing symmetric submodular functions.
\newblock {\em Mathematical Programming}, 82(1-2):3--12, 1998.

\bibitem[RSW18]{RubinsteinSW18}
Aviad Rubinstein, Tselil Schramm, and S.~Matthew Weinberg.
\newblock Computing exact minimum cuts without knowing the graph.
\newblock In {\em {ITCS}}, pages 39:1--39:16, 2018.

\bibitem[Sar21]{Sar20}
Thatchaphol Saranurak.
\newblock {A Simple Deterministic Algorithm for Edge Connectivity}.
\newblock In {\em SOSA}. {SIAM}, 2021.

\bibitem[SW19]{sw19}
Thatchaphol Saranurak and Di~Wang.
\newblock Expander decomposition and pruning: Faster, stronger, and simpler.
\newblock In {\em {SODA}}, pages 2616--2635. {SIAM}, 2019.

\bibitem[Tho08]{Th08}
Mikkel Thorup.
\newblock {Minimum $k$-way Cuts via Deterministic Greedy Tree Packing}.
\newblock In {\em {STOC}}, pages 159--166. {ACM}, 2008.

\bibitem[Wul17]{w17}
Christian Wulff{-}Nilsen.
\newblock Fully-dynamic minimum spanning forest with improved worst-case update
  time.
\newblock In {\em {STOC}}, pages 1130--1143. {ACM}, 2017.

\end{thebibliography}

\appendix

\section{Deterministic Algorithm for Finding Small-sized Min-Cut}\label{sec:det-small-size-min-cut}
In this section, we consider the problem of finding a min-cut subject to an upper bound on the size of the smaller side of the cut. 
In particular, we prove the following algorithmic result, which can be viewed as a deterministic version of Theorem \ref{thm:new-small-mincut-algo}. 

\begin{lemma}\label{lem:small-mincut-algo}
Let $s \leq r$ be a positive integer, and let $G=(V,E)$ be an $r$-rank $n$-vertex hypergraph with $m$ hyperedges that has a min-cut $(C, V \setminus C)$ with $|C| \leq s$. Then, there exists a deterministic algorithm which takes $G$ and $s$ as input and runs in time $O(2^{2s}n^s + 2^rm)$ to return a min-cut of $G$.
\end{lemma}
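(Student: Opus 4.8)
The plan is a brute-force enumeration of the small side of the cut, sped up by a preprocessing step so that each candidate can be evaluated in time depending only on $s$. First I would enumerate every vertex set $C$ with $1\le|C|\le s$ and $C\ne V$, compute the cut capacity $|\delta(C)|$ for each, and return the cheapest resulting cut. Correctness is immediate: the hypothesis guarantees a min-cut $(C^\star,V\setminus C^\star)$ with $|C^\star|\le s$, so $C^\star$ is one of the enumerated candidates, every enumerated set induces a genuine cut, and hence the minimum capacity found equals $\lambda$ and is attained by the returned cut. The whole content of the lemma is therefore the running time, i.e.\ evaluating $|\delta(C)|$ for the roughly $n^s$ candidates without spending $\Omega(m)$ time per candidate.

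The key idea is an inclusion--exclusion formula over a precomputed table. In one pass over the hyperedges I would compute, for every vertex set $A$ with $|A|\le s$ that is contained in some hyperedge, the count $\mathrm{cnt}[A]:=|\{e\in E: A\subseteq e\}|$; this is done by running over each hyperedge $e$ and over all of its at most $2^{|e|}\le 2^r$ subsets of size at most $s$ and incrementing a counter, for a total of $O(2^rm)$ work, and in the same pass I would flag which sets $B$ with $|B|\le s$ are themselves hyperedges (larger hyperedges being irrelevant since $|C|\le s$). Now set $g(C):=|\{e\in E: e\cap C\ne\emptyset\}|$ and $f(C):=|E[C]|$. Since a hyperedge lies in $\delta(C)$ exactly when it meets $C$ but is not contained in $C$, we get $|\delta(C)|=g(C)-f(C)$; by inclusion--exclusion over $v\in C$ of the events $\{e: v\in e\}$,
\[
g(C)=\sum_{\emptyset\ne A\subseteq C}(-1)^{|A|+1}\,\mathrm{cnt}[A],
\]
a sum of $2^{|C|}-1\le 2^s$ precomputed values, while $f(C)=|\{A\subseteq C: A\in E\}|$ is read off by testing all $2^{|C|}\le 2^s$ subsets of $C$ against the hyperedge flags. (One should check the easy point that a hyperedge with $\emptyset\ne e\cap C=A$ is contained in $C$ iff $e=A$, which is what makes $|\delta(C)|=g(C)-f(C)$ exact.)

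For the running time: preprocessing costs $O(2^rm)$; the number of candidates is $\sum_{i=1}^{s}\binom{n}{i}\le s\,n^s$, enumerable in that much time; and each candidate costs $O(2^s)$ table accesses. Storing $\mathrm{cnt}$ and the hyperedge flags by direct addressing keyed on the sorted vertex tuple of $A$ — there are only $O(n^s)$ such tuples, within budget, and lazy initialization avoids zeroing a large array — each access plus the $O(s)$ cost of forming its key is cheap; charging a crude $O(2^s)$ per access gives $O(2^{2s})$ per candidate and $O(2^{2s}n^s)$ total. Adding the two contributions yields the claimed $O(2^{2s}n^s+2^rm)$ bound. The step needing the most care is precisely this bookkeeping — getting essentially constant-time \emph{deterministic} access to $\mathrm{cnt}[A]$ without exceeding the budget (direct addressing with lazy initialization, or one global sort of the $O(2^rm)$ generated subsets), and making sure no table with more than $O(n^s)$ entries is ever allocated. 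Everything else is routine; the only structural fact used is $s\le r$, which guarantees the small side of the cut fits inside a single hyperedge and hence that the ``meets $C$ but is not contained in $C$'' description of $\delta(C)$ is workable.
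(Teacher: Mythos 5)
Your proof is correct and follows essentially the same strategy as the paper: precompute $|\{e \in E : A \subseteq e\}|$ for all vertex sets $A$ of size at most $s$ in $O(2^r m)$ time by iterating over subsets of each hyperedge, then evaluate $|\delta(C)|$ for each of the $O(n^s)$ candidate small sides via inclusion--exclusion over these precomputed counts, for $O(2^{2s})$ work per candidate. Your direct formula $|\delta(C)| = g(C) - f(C)$ with $g(C)$ expanded by inclusion--exclusion over singletons is a slightly cleaner packaging of the paper's computation through the intermediate quantities $g_S(S')$ and $h_S(S')$, but the underlying counting, use of $s\le r$, and complexity analysis are the same.
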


\begin{proof}
We begin by defining some functions that will be useful in our algorithm and analysis. 
For $S'\subseteq S \subseteq V$, let 
\begin{align*}
    g(S) &:= |\{e \in E \colon S \subseteq e\}|,\\
    g'(S) &:= \begin{cases} 1 &\text{if } S \in E \\
    0 &\text{otherwise}
    \end{cases}\\ 
    g_S(S') &:= |\{e \in E \colon e \cap S = S'\}|, \text{ and}\\
    h_S(S') &:= |\{e \in E \colon e \cap S = S' \text{ and } e \setminus S \neq \emptyset\}|.
\end{align*}

We first show that we can compute $g(S)$ and $g'(S)$ for all subsets $S\subseteq V$ of size at most $s$ in time $O(n^s + 2^sm)$ using Algorithm \ref{alg:min-cut-helper}.

\begin{algorithm}
\caption{MinCutHelper$(G,s)$}\label{alg:min-cut-helper}
\begin{algorithmic}
\ForAll{$S \subseteq V \colon |S| \leq s$}
\State $g[S] \gets 0$
\State $g'[S] \gets 0$
\EndFor
\ForAll{$e \in E$}
\ForAll{$S \subseteq e \colon |S| \leq s$}
\State $g[S] \gets g[S] + 1$
\EndFor
\If{$|e| \leq s$}
\State $g'[e] \gets 1$
\EndIf
\EndFor
\State \Return $(g,g')$
\end{algorithmic}
\end{algorithm}

We note that Algorithm \ref{alg:min-cut-helper} correctly computes $g(S)$ for each $S$ since $g[S]$ will be incremented exactly once for each $e$ containing $S$. It also correctly computes $g'(S)$, since $g'[S]$ will be set to $1$ for an $S$ of size at most $s$ if and only if $e \in E$. Furthermore the algorithm spends $O(n^s)$ time initializing the arrays $g$ and $g'$ and $O(2^rm)$ iterating over subsets of the hyperedges (since a hyperedge of size at most $r$ has at most $2^r$ subsets). Therefore, the algorithm runs in time $O(n^s + 2^rm)$.

Next, we solve the min $s$-sized cut problem using Algorithm \ref{alg:small-min-cut}. 

\begin{algorithm}
\caption{SmallMinCut$(G,s)$}\label{alg:small-min-cut}
\begin{algorithmic}
\State $(g,g') \gets \textsc{MinCutHelper(G,s)}$
\ForAll{$S \subseteq V \colon |S| \leq s$}
\ForAll{$S' \subset S$}
\State $g_S[S'] = \sum_{j=0}^{|S|-|S'|} (-1)^j \sum_{S'' \colon S' \subseteq S'' \subseteq S \text{ and } |S''| = |S'|+j} g[S'']$
\State $h_S[S'] = g_S[S'] - g'[S']$
\EndFor
\State $r[S] \gets \sum_{S' \subseteq S} h_S[S']$
\EndFor
\State $C \gets \text{argmin}_{S \subseteq V \colon |S| \leq s} r[S]$
\State \Return $(C, V \setminus C)$
\end{algorithmic}
\end{algorithm}

To see that Algorithm \ref{alg:small-min-cut} is correct, we note that, by the principle of inclusion-exclusion, it correctly computes $g_S(S')$. 
To argue this, we show that a hyperedge $e$ with $e \cap S = T \supset S'$ will not be counted in our summation for $g_S(S')$. Let $t = |T| - |S'|$. A set $S''$ such that $S' \subseteq S'' \subseteq S$ counts $e$ in $g_S(S'')$ if and only if $S'' \subseteq T$. For any $j \in \{0, \dots, t\}$, the number of sets $S''$ of size $|S'|+j$ such that $S' \subseteq S'' \subseteq T$ is $\binom{t}{j}$. Thus, the total contribution of $e$ to our summation for $g_S[S']$ is
\[
\sum_{j=0}^{t}(-1)^j \binom{t}{j}= 0.
\]
A similar argument shows that the hyperedges which need to be counted in $g_S(S')$ are counted exactly once by the expression. Since $h_S(S')$ is $g_S(S')$ if $S' \not\in e$ and $g_S(S') - 1$ otherwise, the algorithm correctly computes $h_S(S')$ as well. We note that any hyperedge $e \in \delta(S)$ must be counted in $h_S(S')$ for exactly one set $S' \subseteq S$ (namely $h_S(e \cap S)$). Therefore, $|\delta(S)| = \sum_{S' \subseteq S} h_S(S')$. Thus, we have that $r[S]$ correctly stores $\delta(S)$ for each $S \subseteq V$ of size at most $s$. Therefore, since the algorithm returns a set $S$ with minimum $r[S]$, it returns a min $s$-sized cut.

The algorithm's outer \textbf{for} loop iterates over all subsets of $V$ of size at most $s$. The number of such subsets is $n^s$. The inner for loop iterates over the subsets of a subset $S$ of size at most $s$. The number of such subsets is at most $2^s$. Thus, the inner \textbf{for} loop is executed $O(2^sn^s)$ times. Each iteration of the inner loop can be implemented to run in $O(2^s)$ time, since this is an upper bound on the number of sets $S''$ with $S' \subseteq S'' \subseteq S$, and thus on the number of terms in the double summation. Therefore, the total runtime of the outer loop is $O(2^{2s}n^s)$. Adding this to the runtime we computed for Algorithm \ref{alg:min-cut-helper} gives us an overall runtime of $O(2^{2s}n^s + 2^rm)$.
\end{proof} 

\section{Min-cut non-triviality for hypergraph} \label{appn:hypergraph-nontrivial}

In this section, we construct an example where $\lambda > |V|$ and even though every min-cut is non-trivial, the total number of hyperedges that take part in the union of all min-cuts is only a constant fraction of the total number of hyperedges.

If we consider a complete graph on $n$ vertices, we see that every edge takes part in a min-cut. Hence the total number of edges that take part in the union of min-cuts is $m = \Theta(n^2)$. The issue with this example is that all min-cuts here are \textit{trivial}, i.e., every min-cut has a single vertex as one side. 
We extend this example to hypergraphs such that (i) the capacity of the min-cut is $\Omega(n)$, (ii) all min-cuts are non-trivial, i.e., each min-cut has at least two vertices on each side, and (iii) the number of hyperedges taking part in the union of min-cuts is approximately $m$. We prove the following lemma.

\begin{lemma}\label{lem:hypergraph-nontrivial}
Let $n \geq 100$ be an even integer. Then, there is a simple hypergraph $G= (V, E)$ with $n + 3$ vertices and $m=\Theta(n^2)$ hyperedges such that 
\begin{enumerate}
    \item $\lambda(G)\ge n+4$, 
    \item every min-cut has at least two vertices on both sides, and 
    \item the number of hyperedges in the union of all min-cuts is $\Theta(m)$.
\end{enumerate}
\end{lemma}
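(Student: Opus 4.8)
The plan is to take the complete graph on $n/2$ \emph{super-vertices}, where each super-vertex is a pair of ordinary vertices, and to ``thicken'' its edges using subsets of the three extra vertices, so that the connectivity is pushed above $n+3$ while every super-vertex remains the smaller side of a min-cut. Concretely, I would set $V:=\{u_1,w_1,\dots,u_{n/2},w_{n/2}\}\cup\{a,b,c\}$ (so $|V|=n+3$; recall $n$ is even), write $P_j:=\{u_j,w_j\}$, and let $E$ consist of (i) the hyperedge $P_i\cup P_j\cup T$ for every unordered pair $\{i,j\}$ with $i\neq j$ and every $T\subseteq\{a,b,c\}$, and (ii) the edge $\{u_j,w_j\}$ for every $j$. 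These hyperedges are pairwise distinct, so $G$ is simple, its rank is $7$, and $m=8\binom{n/2}{2}+n/2=\Theta(n^2)$. The design point is that $u_j$ and $w_j$ are \emph{twins}: a hyperedge contains $u_j$ iff it contains $w_j$. Hence $d(u_j)=d(w_j)=8(\tfrac n2-1)+1=4n-7$; the only hyperedge contained in $P_j$ is $\{u_j,w_j\}$, so $|\delta(P_j)|=d(u_j)-1=4n-8$, whereas isolating a single ordinary vertex costs $d(u_j)=4n-7$; and $d(a)=d(b)=d(c)=4\binom{n/2}{2}=\Theta(n^2)$.

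I would then verify the three conclusions. To show $\lambda(G)=4n-8$ (which is $\geq n+4$ since $n\geq 4$), I would prove $|\delta(S)|\geq 4n-8$ for every cut by a short case analysis on the smaller side $S$ (so $|S|\leq\tfrac n2+1$): if $S$ meets but does not contain $\{a,b,c\}$, then all but $o(n^2)$ of the $\Theta(n^2)$ hyperedges through some extra vertex in $S$ are cut, so $|\delta(S)|=\Theta(n^2)$; if $S$ contains no extra vertex but splits some $P_j$, then the $8(\tfrac n2-1)$ big hyperedges through $P_j$ together with the edge $\{u_j,w_j\}$ are all cut, so $|\delta(S)|\geq 4n-7$; if $S$ is a union of $\ell$ super-vertices, then a direct count gives $|\delta(S)|=7\binom{\ell}{2}+8\ell(\tfrac n2-\ell)$, which equals $4n-8$ for $\ell=1$ and exceeds it for every $\ell\geq2$ (using $n\geq 100$); and if $S\supseteq\{a,b,c\}$, its complement has no extra vertex and falls into one of the previous cases with a large side, again exceeding $4n-8$. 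Hence the minimum capacity is $4n-8$, attained exactly by the $n/2$ cuts $(P_j,V\setminus P_j)$, each with two vertices on one side and $n+1\geq 2$ on the other; this gives conclusions 1 and 2.

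For conclusion 3, $\delta(P_j)$ is precisely the set of hyperedges containing $u_j$ other than $\{u_j,w_j\}$, and every big hyperedge $P_i\cup P_j\cup T$ fully contains some $P_j$, hence lies in $\delta(P_j)$; therefore $\bigcup_j\delta(P_j)$ contains all $8\binom{n/2}{2}=\Theta(m)$ big hyperedges (it misses only the $n/2=o(m)$ edges $\{u_j,w_j\}$), so the union of all min-cuts has $\Theta(m)$ hyperedges.

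The main obstacle is the case analysis for $\lambda$: one must rule out every ``clustered'' cut that could beat $4n-8$ — in particular sets like $P_j\cup\{a\}$, unions of two or more super-vertices, and sets missing exactly one extra vertex. Each such case reduces to counting how many of the eight copies of a given super-edge, and how many of the $\Theta(n^2)$ hyperedges through $a,b,c$, get separated; the inequality $7\binom{\ell}{2}+8\ell(\tfrac n2-\ell)>4n-8$ for $\ell\geq2$, together with the bound $|S|\leq\tfrac n2+1$, closes the argument, and the hypothesis $n\geq 100$ is used only to dominate lower-order terms.
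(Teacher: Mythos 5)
Your construction and argument are correct and follow essentially the same route as the paper: you blow up a complete graph into twin pairs $P_j=\{u_j,w_j\}$ and thicken each super-edge with a small collection of extra vertices from $\{a,b,c\}$ so that the hypergraph stays simple, the twin pairs are exactly the min-cuts, and their union captures $\Theta(m)$ hyperedges; the only difference is that you take \emph{all} eight subsets $T\subseteq\{a,b,c\}$ (giving rank $7$ and $\lambda=4n-8$), whereas the paper uses only the three singletons (giving rank $5$ and $\lambda=3n/2-3$), a purely quantitative variation that does not affect the lemma's conclusions.
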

\begin{proof}
We construct such a hypergraph $G$ now. 
The vertex set $V$ consists of $n+3$ vertices: $n/2$ vertices $\{u_1, \ldots, u_n$\}, $n/2$ vertices $\{v_1, \ldots, v_n$\}, and three special vertices $\{a,b,c\}$. We 
add the following 5-uniform hyperedges: For every pair of integers $1 \leq i < j \leq n/2$
, we add the three hyperedges $\{u_i, v_i, u_j, v_j, a\}, \{u_i, v_i, u_j, v_j, b\}$ and $\{u_i, v_i, u_j, v_j, c\}$. We also add edges $\{u_i, v_i\}$ for each $i \in [n/2]$. The next three claims complete the proof of the lemma. 

\end{proof}

\begin{claim} \label{clm:appn-1}
The number of hyperedges in $G$ is $\Theta(n^2)$.
\end{claim}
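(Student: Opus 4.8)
This is a direct counting argument. The plan is to tally the two families of hyperedges introduced in the construction and observe that they are pairwise distinct (so that the count genuinely equals $m$).

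First I would count the $5$-uniform hyperedges. These come in groups indexed by pairs $1 \le i < j \le n/2$, and there are $3$ hyperedges per pair (one each containing $a$, $b$, $c$ alongside $\{u_i, v_i, u_j, v_j\}$). Hence this family contributes $3\binom{n/2}{2}$ hyperedges. Next, the edges $\{u_i, v_i\}$ for $i \in [n/2]$ contribute $n/2$ more. To be sure no hyperedge is double-counted, I would note: within a fixed pair $(i,j)$ the three $5$-edges differ in their unique element from $\{a,b,c\}$; across distinct pairs $(i,j)\neq(i',j')$ the $4$ non-special vertices $\{u_i,v_i,u_j,v_j\}$ already determine the pair $\{i,j\}$, so the hyperedges differ; and the size-$2$ edges are distinct from every $5$-edge (different cardinality) and from one another. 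Thus all of these hyperedges are distinct, the hypergraph is simple, and
\[
m = 3\binom{n/2}{2} + \frac{n}{2} = \frac{3}{8}n^2 - \frac{n}{4}.
\]

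Finally, I would conclude that $m = \Theta(n^2)$, using $n \ge 100$ (or simply $n \ge 2$) to bound $m$ from above by $\tfrac38 n^2$ and from below by a constant multiple of $n^2$. There is essentially no obstacle here; the only point requiring a sentence of care is the distinctness check that justifies summing the two families without overlap, and that follows immediately from the observations above.
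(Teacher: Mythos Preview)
Your proof is correct and follows essentially the same approach as the paper: a direct count of $3\binom{n/2}{2}=\Theta(n^2)$ five-uniform hyperedges plus $n/2=O(n)$ size-two edges. You add an explicit distinctness check and the exact value $m=\tfrac{3}{8}n^2-\tfrac{n}{4}$, which the paper omits, but the argument is the same.
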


\begin{proof}
For every two distinct pairs $\{u_i, v_i\}$ and $\{u_j, v_j\}$, there are three 5-uniform hyperedges. Hence the total number of 5-uniform hyperedges is $\Theta(n^2)$. The number of edges is $O(n)$. Hence the total number of hyperedges is $m = \Theta(n^2)$.
\end{proof}

\begin{claim}\label{clm:appn-2}
The min-cut is of size $\lambda > n +3$. Moreover, each min-cut contains a pair $\{u_i, v_i\}$ on one side, and the rest of the vertices on the other side.
\end{claim}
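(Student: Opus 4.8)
The plan is to pin down the exact min-cut capacity and enumerate all minimizers simultaneously, via a short chain of inequalities. \textbf{First} I would evaluate the candidate cuts. For each $i\in[n/2]$ the cut $(\{u_i,v_i\},V\setminus\{u_i,v_i\})$ has capacity exactly $3(n/2-1)=3n/2-3$: the only hyperedges crossing it are the $3(n/2-1)$ hyperedges $\{u_i,v_i,u_j,v_j,x\}$ with $j\ne i$ and $x\in\{a,b,c\}$, because the edge $\{u_i,v_i\}$ lies entirely inside and every other hyperedge avoids both $u_i$ and $v_i$. Hence $\lambda\le 3n/2-3$, which already exceeds $n+3$ since $n\ge 100$.

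\textbf{Next} I would take an arbitrary cut $(S,\bar S)$ with $d(S)\le 3n/2-3$ and force it into this form. Write $P_i:=\{u_i,v_i\}$. Step (a): no $P_i$ is split, for if $u_i\in S$ and $v_i\in\bar S$ then the edge $\{u_i,v_i\}$ together with all $3(n/2-1)$ hyperedges containing $P_i$ cross, giving $d(S)\ge 3n/2-2$, a contradiction. So each $P_i$ lies wholly in $S$ or wholly in $\bar S$; set $I=\{i:P_i\subseteq S\}$, $J=\{i:P_i\subseteq\bar S\}$, $k:=|I|$. Step (b): if $I=\emptyset$ then $S\subseteq\{a,b,c\}$, and since every $5$-uniform hyperedge meeting $S$ also contains two full pairs, all lying in $\bar S$, every hyperedge touching $S$ crosses, so $d(S)=\sum_{x\in S}d(x)\ge\binom{n/2}{2}>3n/2-3$; symmetrically $J\ne\emptyset$, hence $1\le k\le n/2-1$. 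Step (c): for $i\in I$, $j\in J$, $x\in\{a,b,c\}$ the hyperedge $\{u_i,v_i,u_j,v_j,x\}$ meets both sides, and these are $3k(n/2-k)$ distinct hyperedges with $k(n/2-k)\ge n/2-1$ (the function $k(n/2-k)$ is concave, so minimized over integers in $[1,n/2-1]$ at the endpoints). Combined with $d(S)\le 3n/2-3$, this forces equality throughout: $k\in\{1,n/2-1\}$ \emph{and} these are the only crossing hyperedges.

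\textbf{Finally}, assume without loss of generality $k=1$, say $I=\{i\}$, so $S=P_i\cup A$ with $A\subseteq\{a,b,c\}$. If some $x\in A$, then each hyperedge $\{u_j,v_j,u_l,v_l,x\}$ with $j,l\ne i$ meets $S$ at $x$ and $\bar S$ at $u_j$, hence crosses, yet it is not among the $3(n/2-1)$ hyperedges identified in step (c) — contradicting that those were all the crossing hyperedges. So $A=\emptyset$, i.e.\ $S=\{u_i,v_i\}$. Together with the first paragraph this yields $\lambda=3n/2-3>n+3$ and shows the min-cuts are exactly the $n/2$ cuts $(\{u_i,v_i\},V\setminus\{u_i,v_i\})$. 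The argument is essentially a disciplined case analysis, so there is no deep obstacle; the one genuinely substantive point is the observation underlying steps (b) and (d): whenever a special vertex $a$, $b$, or $c$ sits on the same side as at most one pair, it single-handedly drags a \emph{quadratic} number of extra hyperedges across the cut, and this is precisely what confines the special vertices to the large side of every minimum cut. The only care needed is bookkeeping, so that the value of $\lambda$ and the full list of minimizers both fall out of the same chain of inequalities.
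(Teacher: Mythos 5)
Your argument is correct and arrives at the same conclusion as the paper, but via a noticeably different organization. Both proofs first establish that no pair $P_i=\{u_i,v_i\}$ is split and then reason about which pairs fall on which side; you rule out splitting by a direct count (a split $P_i$ alone forces $d(S)\ge 1+3(n/2-1)=3n/2-2>\lambda$), whereas the paper uses a vertex-exchange argument that quietly relies on the structural fact that every hyperedge containing $v_i$ also contains $u_i$. The sharper difference is at the end: the paper splits into cases on $|C\cap\{a,b,c\}|$ and, in the nontrivial case, sums up three families of crossing hyperedges into an explicit formula for $|\delta(C)|$ which it then lower-bounds. You instead extract one uniform lower bound $d(S)\ge 3k(n/2-k)$ from the cross-$(I,J)$ hyperedges, observe that $k(n/2-k)\ge n/2-1$ already pins $d(S)$ to the value $3n/2-3$, and exploit the forced tightness (the cross-$(I,J)$ hyperedges are the \emph{only} crossing ones) to conclude that no special vertex can sit on the small side. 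This tightness step neatly replaces the paper's more computational Case~2 and delivers $\lambda=3n/2-3$ and the full list of minimizers from a single chain of inequalities.
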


\begin{proof}

Let $(C, V \setminus C)$ be a min-cut in $G$. We first note that for every $i \in [n/2]$, the cut set $\delta(C)$ cannot contain the edge $\{u_i, v_i\}$. To see this, suppose $u_i \in C$, $v_i \in V \setminus C$, and assume without loss of generality that $|C| \leq |V \setminus C|$. We note that $V \setminus C \setminus \{v_i\}$ is non-empty since $|C| \leq |V \setminus C|$, and therefore $(C \cup \{v_i\}, V \setminus C \setminus \{v_i\})$ is a cut. Furthermore, this cut has strictly smaller capacity than $(C, V \setminus C)$, since it cuts no new hyperedges and it does not cut the edge $\{u_i, v_i\}$. Thus, we conclude that $(C, V \setminus C)$ does not cut any edges.

We note that one side of the min-cut $(C, V \setminus C)$ must include at most $1$ vertex from among $\{a,b,c\}$. Without loss of generality assume that $|C \cap \{a,b,c\}| \leq 1$. We consider two cases. We will show that if $|C \cap \{a,b,c\} = 0$, then $C$ must contain only a single pair $\{u_i, v_i\}$ and $|\delta(C)| > n+3$. Then we will show that the case where $|C \cap \{a,b,c\}| = 1$ cannot actually occur, because $C$ is a min-cut.

Case 1: Suppose $|C \cap \{a,b,c\}| = 0$. Let $k$ be the number of pairs $\{u_i, v_i\}$ with $u_i, v_i \in C$. Any hyperedge intersecting one of these pairs is cut by $(C, V \setminus C)$. The number of such hyperedges is minimized when $k=1$. We note that for any pair $\{u_i, v_{i}\}$, there are $n/2-1$ pairs $\{u_j, v_j\}$ that share a $5$-uniform hyperedge with $\{u_i, v_i\}$, and each pair shares three such hyperedges. Hence, $|\delta(C)| = |\delta(\{v_i, v_{i+1}\})| = 3(n/2-1) = 3n/2-3 > n+3$. 

Case 2: Suppose $|C \cap \{a,b,c\}| = 1$. Without loss of generality, assume $a \in C$. Let $k$ be the number of pairs $\{u_i, v_i\}$ with $u_i, v_i \in C$. We note that any hyperedge containing a pair $\{u_i, v_i\} \subseteq C$ and containing either vertex $b$ or vertex $c$ is in $\delta(C)$. There are $k(k-1)/2$ ways to pick two pairs inside $C$ and $k(n/2-k)$ ways to pick a pair inside $C$ and a pair outside $C$. Hence, $\delta(C)$ has $2(k(n/2-k)+k(k-1)/2)$ hyperedges which contain $b$ or $c$. We also note that any hyperedge containing a pair $\{u_i, v_i\}$ with $u_i, v_i \in V \setminus C$ as well as the vertex $a$ is in $\delta(C)$. Thus, $\delta(C)$ has $k(n/2-k) + (n/2-k)(n/2-k-1)/2$ hyperedges which contain $a$. Therefore, we have that 
$
|\delta(C)| = 2(k(n/2-k)+k(k-1)/2) + k(n/2-k) + (n/2-k)(n/2-k-1)/2 \\
$
We note that for every $0 \leq k \leq n/2$, $\max\{k(k-1)/2, (n/2-k)(n/2-k-1)/2\} \geq (n/4)(n/4-1)/2 = (n^2-4n)/32 > 3n/2-3$. Thus, we have that $|\delta(C)| > 3n/2-3$. Since we showed in the previous paragraph that there is a cut in $G$ of capacity $3n/2-3$, this contradicts the fact that $(C, V \setminus C)$ is a min-cut, and we conclude that this case cannot occur.
\end{proof}

\begin{claim}\label{clm:appn-3}
The number of edges in the union of all min-cuts is $\Theta(n^2)$.
\end{claim}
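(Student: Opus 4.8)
The plan is to read off the union of all min-cuts directly from the structure established in Claim~\ref{clm:appn-2}. By that claim, every min-cut of $G$ is of the form $(\{u_i,v_i\}, V\setminus\{u_i,v_i\})$ for some $i\in[n/2]$, so the first step is to identify, for a fixed $i$, exactly which hyperedges lie in $\delta(\{u_i,v_i\})$. I would argue as follows: the edge $\{u_i,v_i\}$ is entirely contained in the side $\{u_i,v_i\}$ and so is not cut; each edge $\{u_j,v_j\}$ with $j\neq i$ is entirely contained in $V\setminus\{u_i,v_i\}$ and so is not cut; and a $5$-uniform hyperedge $\{u_j,v_j,u_k,v_k,x\}$ (with $j<k$, $x\in\{a,b,c\}$) meets both sides precisely when $i\in\{j,k\}$. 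Hence $\delta(\{u_i,v_i\})$ is exactly the set of $5$-uniform hyperedges containing the pair $\{u_i,v_i\}$, and there are $3(n/2-1)$ of them, consistent with $\lambda=3n/2-3$.

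Next I would take the union over all $i\in[n/2]$. Since an arbitrary $5$-uniform hyperedge $\{u_j,v_j,u_k,v_k,x\}$ lies in $\delta(\{u_j,v_j\})$ (and also in $\delta(\{u_k,v_k\})$), the union of the cut sets of all min-cuts is precisely the collection of all $5$-uniform hyperedges of $G$; none of the small edges $\{u_i,v_i\}$ participate in any min-cut. Finally, I would count: there are $3\binom{n/2}{2}=\Theta(n^2)$ such hyperedges, and this is $\Theta(m)$ because $m=\Theta(n^2)$ by Claim~\ref{clm:appn-1}. This gives the desired bound $\Theta(n^2)$ on the number of hyperedges in the union of all min-cuts.

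I do not anticipate a genuine obstacle here: the heavy lifting was done in proving that every min-cut isolates a single pair $\{u_i,v_i\}$ (Claim~\ref{clm:appn-2}), and once that is in hand the present claim is just a matter of correctly describing each min-cut's edge set and summing. The only point requiring a little care is to check that the two "trivial-looking" edge families ($\{u_i,v_i\}$ and the $5$-uniform edges through a given pair) are handled correctly, so that the union really is the full set of $5$-uniform hyperedges and nothing more.
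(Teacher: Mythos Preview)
Your proposal is correct and follows essentially the same approach as the paper: the paper's proof simply asserts that the hyperedges participating in min-cuts are exactly the $5$-uniform ones and invokes Claims~\ref{clm:appn-1} and~\ref{clm:appn-2}, while you spell out the verification of this assertion in detail.
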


\begin{proof}
This follows from the proof of Claims \ref{clm:appn-1} and \ref{clm:appn-2}. The hyperedges which take part in min-cuts are the 5-uniform hyperedges, and every such hyperedge takes part in some min-cut. Hence the claim follows.
\end{proof}

\section{Tight example for the structural theorem} \label{appn:tight-condn-2}

In this section, we show a tight example for Conclusion \ref{itm:few-edges} in  Theorem \ref{thm:number-of-hyperedges-in-mincuts}. 
Our example shows that there is no room to improve the factor $\lambda^{-1/(r-1)}$ in Conclusion \ref{itm:few-edges} of \Cref{thm:number-of-hyperedges-in-mincuts}. 
We prove the following lemma.

\begin{lemma}\label{lem:tight-condn-2}
There is a $r$-uniform simple hypergraph $G = (V, E)$ on $n$ vertices and $m$ hyperedges that satisfies the following conditions: \begin{enumerate}
    \item $\lambda(G) > n > r(4r^2)^r$, 
    \item the smaller side of every min-cut has size $\sqrt n$, and 
    \item the number of hyperedges in the union of all min-cuts is $\Theta(m/\lambda^{1/(r-1)})$.
\end{enumerate}
\end{lemma}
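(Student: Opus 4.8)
The plan is to build an explicit ``star of cliques''. First I would fix a large perfect square $n$, set $q:=\sqrt n$, and partition $V$ into $q$ groups $V_0,V_1,\dots,V_{q-1}$, each of size $q$, designating $V_0$ as a hub. The hyperedges come in two kinds. (a) \emph{Clique hyperedges}: for every $j$, include every $r$-subset of $V_j$; this makes each group internally very hard to split. (b) \emph{Bundle hyperedges}: for each spoke $i\in\{1,\dots,q-1\}$ include a family $B_i$ of exactly $B$ distinct $r$-subsets, each consisting of one vertex of $V_0$ and $r-1$ vertices of $V_i$, where $B:=\lfloor\binom{q-1}{r-1}/(2r(r-1))\rfloor=\Theta(n^{(r-1)/2})$. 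All hyperedges have size $r$ and are pairwise distinct, so $G$ is a simple $r$-uniform hypergraph, and $m=q\binom{q}{r}+(q-1)B=\Theta(q\cdot n^{r/2})=\Theta(n^{(r+1)/2})$ since the clique hyperedges dominate.

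Next I would determine all the cuts. For a spoke $i$, the only hyperedges meeting both $V_i$ and its complement are the bundle hyperedges $B_i$, so $|\delta(V_i)|=B$; meanwhile $|\delta(V_0)|=\sum_i|B_i|=(q-1)B$ is far larger, and every vertex has degree at least $\binom{q-1}{r-1}\gg B$, so the minimum degree is far above $B$. The heart of the argument is to show that \emph{every} cut has capacity at least $B$, with equality only for the $q-1$ cuts isolating a single spoke. I would split into two cases. If a cut $(S,V\setminus S)$ has $\emptyset\neq S\cap V_j\subsetneq V_j$ for some group $V_j$, then, fixing a vertex of the smaller part of $V_j$, all $r$-subsets of $V_j$ through that vertex that hit the other part are cut, giving at least $\tfrac12\binom{q-1}{r-1}>B$ cut hyperedges by the choice of $B$. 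Otherwise $S$ is a union of whole groups indexed by some $\emptyset\neq I\subsetneq\{0,\dots,q-1\}$, and a direct count on the bundles shows that $|\delta(S)|$ equals $|I\setminus\{0\}|\cdot B$ if $0\notin I$ and $(q-1-|I\setminus\{0\}|)\cdot B$ if $0\in I$; either way this is a positive multiple of $B$, equal to $B$ exactly when $S=V_i$ or $S=V\setminus V_i$ for a spoke $i$. Hence the min-cut capacity is $\lambda=B$ and the min-cuts are precisely the $q-1$ spoke cuts, each with smaller side of size $q=\sqrt n$.

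Finally I would verify the three claimed properties. Property~(2) is immediate from the description of the min-cuts. For property~(1), $\lambda=B=\Theta(n^{(r-1)/2})$, which exceeds $n$ once $r\geq 4$ (for $r\leq 3$ a size-$\sqrt n$ min-cut forces $\lambda=O(n^{(r-1)/2})=O(n)$ via the bounding $\sum_{v\in S}d(v)\leq r|E[S]|+(r-1)\lambda$, so the claimed regime is only non-vacuous for $r\geq 4$) and exceeds $r(4r^2)^r$ for $n$ large. For property~(3), the hyperedges appearing in some min-cut are exactly $\bigcup_{i=1}^{q-1}B_i$, a disjoint union of size $(q-1)B=\Theta(n^{r/2})$, while none of the $\Theta(n^{(r+1)/2})$ clique hyperedges is ever cut; since $\lambda^{1/(r-1)}=\Theta(n^{1/2})$, we get $m/\lambda^{1/(r-1)}=\Theta(n^{r/2})$, which matches. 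I expect the only genuinely delicate point to be the tie-breaking in the second case above: the value of $B$ must simultaneously be a small enough constant fraction of $\binom{q-1}{r-1}$ that interior splits of a group are strictly more expensive than a spoke cut, yet still be larger than both $n$ and $r(4r^2)^r$; this is what forces $r\geq 4$ and $n$ large, but it is otherwise easy to arrange. Minor bookkeeping — making $q$ groups of size $q$ fill $V$ exactly — is handled by restricting to $n$ a perfect square.
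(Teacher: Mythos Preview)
Your construction is correct and achieves the same asymptotic parameters as the paper's, but via a genuinely different graph. The paper starts from the complete $r$-uniform hypergraph $G_1$ on $\sqrt{n}$ vertices, then blows up every vertex $u$ of $G_1$ into a group $Q_u$ of $\sqrt{n}$ vertices (redistributing the incident hyperedges of $u$ evenly among $Q_u$) and finally fills each $Q_u$ with the complete $r$-uniform clique. In their picture all $\sqrt{n}$ groups are symmetric, every cut of the form $(Q_u,V\setminus Q_u)$ is a min-cut, and the hyperedges participating in min-cuts are exactly the (redistributed) hyperedges of the outer complete hypergraph $G_1$. You instead break the symmetry with a star: one hub $V_0$, $\sqrt{n}-1$ spokes, complete cliques inside each group, and a carefully sized bundle of cross-edges from the hub to each spoke. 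Your analysis is arguably more self-contained, since you count cut hyperedges directly rather than invoking the min-cut of a complete $r$-uniform hypergraph; the paper's approach is a bit slicker to describe because the outer structure is canonical and fully symmetric. Either way one ends up with $\lambda=\Theta(n^{(r-1)/2})$, $m=\Theta(n^{(r+1)/2})$, and $\Theta(n^{r/2})$ hyperedges in the union of min-cuts.

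Your observation about the necessity of $r\ge 4$ is also on point: the bound $\sum_{v\in S}d(v)\le r|E[S]|+(r-1)\lambda$ with $|S|=\sqrt{n}$ indeed forces $\lambda=O(n^{(r-1)/2})$, so $\lambda>n$ cannot hold for $r\le 3$; the paper's closing line ``$r\ge 3$'' is a slip in this regard, and your argument explains the correct threshold.
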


\begin{proof}
Consider the complete $r$-uniform hypergraph $G_1=(V_1,E_1)$ on $\sqrt{n}$ vertices. The degree of each vertex is $ \Theta(n^{\frac{r-1}{2}})$. We make the following claim about $G_1$. 

\begin{claim} \label{clm:appn-4}
Every min-cut in $G_1$ has the smaller side to be a single vertex. Moreover, the capacity of the min-cut $\lambda(G_1) = \Theta(n^{\frac{r-1}{2}})$.
\end{claim}

\begin{proof}
Consider a cut which has (say) $t$ vertices on one side and $\sqrt n- t$ vertices on the other side. The capacity of this cut is
\[
\sum_{a >0, b >0: a+ b = r}\binom{t}{a}\binom{\sqrt n - t}{b}.
\]
The above quantity is minimized when $t = 1$. Hence, the capacity of a min-cut is $\binom{\sqrt n -1}{r -1} = \Theta(n^{\frac{r-1}{2}})$.
\end{proof}

Now, we replace each vertex $u\in V_1$ with a set $Q_u$ of $\sqrt{n}$ vertices. We do this in a way such that every vertex $v \in Q_u$ has degree $\Theta(n^{\frac{r-2}{2}})$ induced by the edges in $E_1$. This can be done as the total number of hyperedges incident on $u$ is $\lambda(G_1)$. Because $Q_u$ has $\sqrt n$ many vertices, we can replace $u$ by a vertex $v \in Q_u$ in $\lambda(G_1)/\sqrt n = \Theta(n^{\frac{r-2}{2}})$ hyperedges. This ensures that each vertex $v \in Q_u$ has degree $\Theta(n^{\frac{r-2}{2}})$ induced by the hyperedges in $E_1$.

Having done so, we add all $r$-uniform hyperedges inside $Q_u$ to make each $Q_u$ a complete $r$-uniform hypergraph. We call this set of hyperedges as $E_{Q_u}$. This completes the description of $G=(V, E)$ on vertex set $V = \bigcup_{u \in V_1}Q_u$ of size $n$ and $E = E_1 \cup \left(\bigcup_{u \in V_1}E_{Q_u}\right)$. Now we make the following claim.

\medskip
\begin{claim} \label{clm:appn-5}
Every min-cut in $G$ is of the form $(Q_u, V\setminus Q_u)$ for some $u\in V_1$.
\end{claim}

\begin{proof}
By a similar argument as in Claim \ref{clm:appn-4}, we see that the min-cut in the hypergraph $G_u=(Q_u, E_{Q_u})$ is obtained when a single vertex is cut from the set $Q_u$, and the capacity of the min-cut $\lambda(G_u) = \lambda(G_1)= \Theta(n^{\frac{r-1}{2}})$. By our construction of $G$, any cut of $G$ that cuts $G_u$ has capacity at least $\lambda(G_u) + \Theta(n^{\frac{r-2}{2}}) > \lambda(G_1)$. However, as proven in Claim \ref{clm:appn-4}, the capacity of a cut of the form $(Q_u, V \setminus Q_u)$ is $\lambda(G_1)$.
\end{proof}

\medskip
Claim \ref{clm:appn-5} implies that $\lambda:=\lambda(G) = \lambda(G_1) = \Theta(n^{\frac{r-1}{2}})$.
The number of hyperedges in $G_1$ is $m_1=\binom{\sqrt{n}}{r}$. The number of hyperedges in $G_u$ is $m_2'=\binom{\sqrt{n}}{r}$. Thus, the number of hyperedges in $G$ is 
\[
m=m_1 + m_2' \sqrt{n}=\binom{\sqrt{n}}{r}(1+\sqrt{n})=\Theta\left(n^{\frac{r+1}{2}}\right).
\]
Claim \ref{clm:appn-5} also implies that the number of hyperedges in the union of all min-cuts is exactly $m_1$. Now, we observe that 
\[
m_1 = \Theta\left(n^{\frac{r}{2}}\right) = \Theta\left(\frac{m}{\lambda^{\frac{1}{r-1}}}\right).
\]
Moreover, in this graph $G$, we have that $\lambda > n > r(4r^2)^r$ by picking $n$ to be large enough and $r\ge 3$. 

\end{proof}

\end{document}